\def\blfootnote{\xdef\@thefnmark{}\@footnotetext}
\newcommand{\removed}[1]{}
\newcommand{\rem}{\mathbf}
\newcommand{\cs}{\mathcal{S}}
\newcommand{\ca}{\mathcal{A}}
\newcommand{\ra}{\rightarrow}
\newcommand{\rsa}{\rightsquigarrow}
\newcommand{\bs}{\backslash}
\newcommand{\dprime}{{\prime\prime}}
\renewcommand{\P}{\textup{P}}
\newcommand{\E}{\textup{E}}
\title{Simple and Efficient Local Codes for Distributed Stable Network Construction\thanks{Supported in part by the project ``Foundations of Dynamic Distributed Computing Systems'' (\textsf{FOCUS}) which is implemented under the ``ARISTEIA'' Action of the  Operational Programme ``Education and Lifelong Learning'' and is co-funded by the European Union (European Social Fund) and Greek National Resources.}}
\author{Othon Michail\inst{1} \and Paul G. Spirakis\inst{1,2}}
\institute{Computer Technology Institute \& Press ``Diophantus'' (CTI), Patras, Greece \and Department of Computer Science, University of Liverpool, UK\\
Email:\email{ michailo@cti.gr, P.Spirakis@liverpool.ac.uk}}
\begin{document}

\maketitle

\begin{abstract}
In this work, we study protocols (i.e. distributed algorithms) so that populations of distributed processes can \emph{construct networks}. In order to highlight the basic principles of distributed network construction we keep the model minimal in all respects. In particular, we assume \emph{finite-state processes} that all begin from the same initial state and all execute the same protocol (i.e. the system is \emph{homogeneous}). Moreover, we assume \emph{pairwise interactions} between the processes that are scheduled by an adversary. The only constraint on the \emph{adversary scheduler} is that it must be \emph{fair}, intuitively meaning that it must assign to every reachable configuration of the system a non-zero probability to occur. In order to allow processes to construct networks, we let them \emph{activate} and \emph{deactivate} their pairwise connections. When two processes interact, the protocol takes as input the states of the processes and the state of their connection and updates all of them. In particular, in every interaction, the protocol may activate an inactive connection, deactivate an active one, or leave the state of a connection unchanged. Initially all connections are inactive and the goal is for the processes, after interacting and activating/deactivating connections for a while, to end up with a desired \emph{stable network} (i.e. one that does not change any more). We give protocols (optimal in some cases) and lower bounds for several basic network construction problems such as \emph{spanning line}, \emph{spanning ring}, \emph{spanning star}, and \emph{regular network}. We provide proofs of correctness for all of our protocols and analyze the \emph{expected time to convergence} of most of them under a \emph{uniform random scheduler} that selects the next pair of interacting processes uniformly at random from all such pairs. Finally, we prove several \emph{universality} results by presenting generic protocols that are capable of simulating a Turing Machine (TM) and exploiting it in order to construct a large class of networks. Our universality protocols use a subset of the population (\emph{waste}) in order to distributedly construct there a TM able to decide a graph class in some given space. Then, the protocols repeatedly construct in the rest of the population (\emph{useful space}) a graph equiprobably drawn from all possible graphs. The TM works on this and accepts if the presented graph is in the class. We additionally show how to partition the population into $k$ \emph{supernodes}, each being a line of $\log k$ nodes, for the largest such $k$. This amount of local memory is sufficient for the supernodes to obtain unique names and exploit their names and their memory to realize nontrivial constructions. Delicate composition and reinitialization issues have to be solved for these general constructions to work.
\end{abstract}

\noindent
\textbf{Keywords:} distributed network construction, stabilization, homogeneous population, distributed protocol, interacting automata, fairness, random schedule, structure formation, self-organization

\section{Introduction}
\label{sec:intro}

\subsection{Motivation}

Suppose a set of tiny computational devices (possibly at the nanoscale) is injected into a human circulatory system for the purpose of monitoring or even treating a disease. The devices are incapable of controlling their mobility. The mobility of the devices, and consequently the interactions between them, stems solely from the dynamicity of the environment, the blood flow inside the circulatory system in this case. Additionally, each device alone is incapable of performing any useful computation, as the small scale of the device highly constrains its computational capabilities. The goal is for the devices to accomplish their task via cooperation. To this end, the devices are equipped with a mechanism that allows them to create bonds with other devices (mimicking nature's ability to do so). So, whenever two devices come sufficiently close to each other and interact, apart from updating their local states, they may also become connected by establishing a physical connection between them. Moreover, two connected devices may at some point choose to drop their connection. In this manner, the devices can organize themselves into a desired global structure. This network-constructing self-assembly capability allows the artificial population of devices to evolve greater complexity, better storage capacity, and to adapt and optimize its performance to the needs of the specific task to be accomplished.

\subsection{Our Approach}

In this work, we study the fundamental problem of \emph{network construction} by a distributed computing system. The system consists of a set of processes that are capable of performing local computation (via pairwise interactions) and of forming and deleting connections between them. Connections between processes can be either \emph{physical} or \emph{virtual} depending on the application. In the most general case, a connection between two processes can be in one of a finite number of possible states. For example, state 0 could mean that the connection does not exist while state $i\in\{1,2,\ldots,k\}$, for some finite $k$, that the connection exists and has strength $i$. We consider here the simplest case, which we call the \emph{on/off} case, in which, at any time, a connection can either exist or not exist, that is there are just two states for the connections. If a connection exists we also say that it is \emph{active} and if it does not exist we say that it is \emph{inactive}. Initially all connections are inactive and the goal is for the processes, after interacting and activating/deactivating connections for a while, to end up with a desired \emph{stable network}. In the simplest case, the output-network is the one induced by the active connections and it is stable when no connection changes state any more.

Our aim in this work is to initiate this study by proposing and studying a very \emph{simple}, yet sufficiently generic, model for distributed network construction. To this end, we assume the computationally weakest type of processes. In particular, the processes are finite automata that all begin from the same initial state and all execute the same finite program which is stored in their memory (i.e. the system is \emph{homogeneous}). The communication model that we consider is also very minimal. In particular, we consider processes that are inhabitants of an \emph{adversarial environment} that has total control over the inter-process interactions. We model such an environment by an adversary scheduler that operates in discrete steps selecting in every step a pair of processes which then interact according to the common program. This represents very well systems of (not necessarily computational) entities that interact in pairs whenever two of them come sufficiently close to each other. When two processes interact, the program takes as input the states of the interacting processes and the state of their connection and outputs a new state for each process and a new state for the connection. The only restriction that we impose on the scheduler in order to study the constructive power of the model is that it is \emph{fair}, by which we mean the weak requirement that, at every step, it assigns to every reachable configuration of the system a non-zero probability to occur. In other words, a fair scheduler cannot forever conceal an always reachable configuration of the system. Note that such a generic scheduler gives no information about the running time of our constructors. Thus, to estimate the efficiency of our solutions we assume a \emph{uniform random scheduler}, one of the simplest fair probabilistic schedulers. The uniform random scheduler selects in every step independently and uniformly at random a pair of processes to interact from all such pairs. What renders this model interesting is its ability to achieve complex global behavior via a set of notably simple, uniform (i.e. with codes that are independent of the size of the system), homogeneous, and cooperative entities.

We now give a simple illustration of the above. Assume a set of $n$ very weak processes that can only be in one of two states, ``black'' or ``red''. Initially, all processes are black. We can think of the processes as small particles that move randomly in a fair solution. The particles are capable of forming and deleting physical connections between them, by which we mean that, whenever two particles interact, they can read and write the state of their connection. Moreover, for simplicity of the model, we assume that fairness of the solution is independent of the states of the connections. This is in contrast to schedulers that would take into account the geometry of the active connections and would, for example, forbid two non-neighboring particles of the same component to interact with each other. In particular, we assume that throughout the execution every pair of processes may be selected for interaction. Consider now the following simple problem. We want to identically program the initially disorganized particles so that they become self-organized into a \emph{spanning star}. In particular, we want to end up with a unique black particle  connected (via active connections) to $n-1$ red particles and all other connections (between red particles) being inactive. Equivalently, given a (possibly physical) system that tends to form a spanning star we would like to unveil the code behind this behavior. Consider the following program. When two black particles that are not connected interact, they become connected and one of them becomes red. When two connected red particles interact they become disconnected (i.e. reds repel). Finally, when a black and a red that are not connected interact they become connected (i.e. blacks and reds attract). The protocol forms a spanning star as follows. As whenever two blacks interact only one survives and the other becomes red, eventually a unique black will remain and all other particles will be red (we say ``eventually'', meaning ``in finite time'', because we do not know how much time it will take for all blacks to meet each other but from fairness we know that this has to occur in a finite number of steps). As blacks and reds attract while reds repel, it is clear that eventually the unique black will be connected to all reds while every pair of reds will be disconnected. Moreover, no rule of the program can modify such a configuration thus the constructed spanning star is stable (see Figure \ref{fig:global-star}). It is worth noting that this very simple protocol is optimal both w.r.t. to the number of states that it uses and w.r.t. to the time it takes to construct a stable spanning star under the uniform random scheduler.

\begin{figure}[!hbtp]
   \centering{
        \subfigure[]{
        \includegraphics[width=0.27\textwidth]{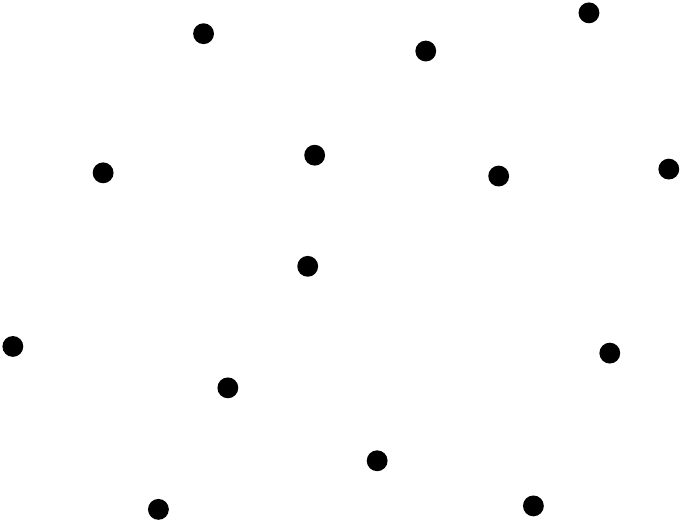}
        \label{fig:gs1}}
	\hspace{1cm}
        \subfigure[]{
        \includegraphics[width=0.28\textwidth]{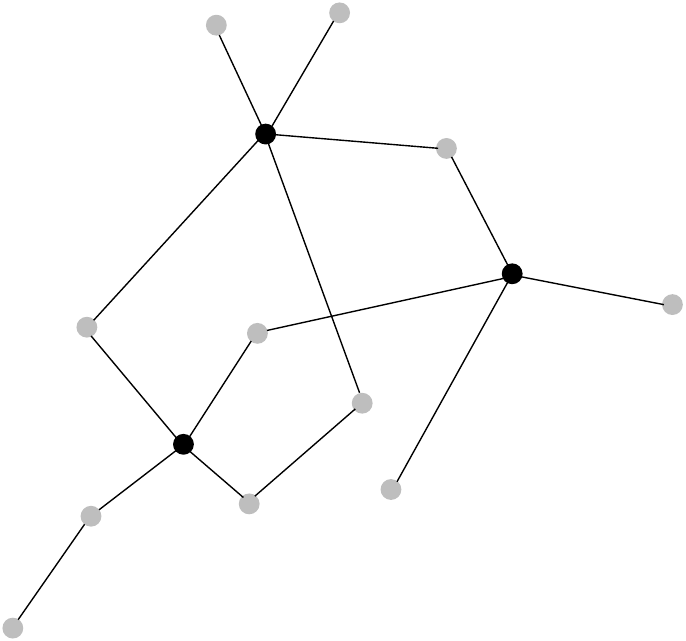}
        \label{fig:gs2}}
	\hspace{1cm}
        \subfigure[]{
        \includegraphics[width=0.23\textwidth]{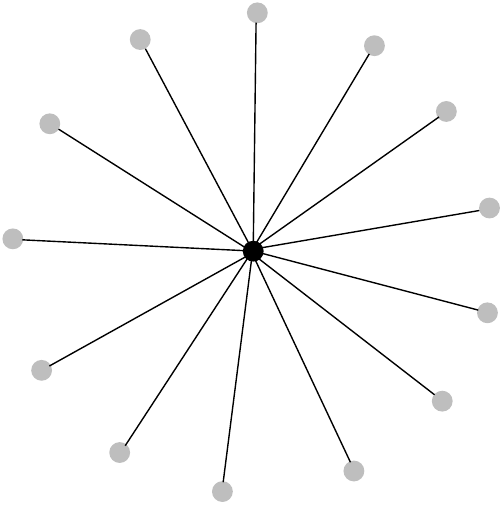}
        \label{fig:gs3}}
        }
   \caption{(a) Initially all particles are black and no active connections exist. (b) After a while, only 3 black particles have survived each having a set of red neighbors (red particles appear as gray here). Note that some red particles are also connected to red particles. The tendency is for the red particles to repel red particles and attract black particles. (c) A unique black has survived, it has attracted all red particles, and all connections between red particles have been deactivated. The construction is a stable spanning star.} \label{fig:global-star}
\end{figure}

Our model for network construction is strongly inspired by the Population Protocol model \cite{AADFP06} and the Mediated Population Protocol model \cite{MCS11-2}. In the former, connections do not have states. States on the connections were first introduced in the latter. The main difference to our model is that \emph{in those models the focus was on the computation of functions of some input values and not on network construction}. Another important difference is that we allow the edges to choose between \emph{only two possible states} which was not the case in \cite{MCS11-2}. Interestingly, when operating under a uniform random scheduler, population protocols are formally equivalent to \emph{chemical reaction networks} (CRNs) which model chemistry in a \emph{well-mixed solution} \cite{Do14}. CRNs are widely used to describe information processing occurring in natural cellular regulatory networks, and with upcoming advances in synthetic biology, CRNs are a promising programming language for the design of artificial molecular control circuitry. However, CRNs and population protocols can only capture the dynamics of molecular counts and not of structure formation. Our model then may also be viewed as an extension of population protocols and CRNs aiming to capture the stable structures that may occur in a well-mixed solution. From this perspective, our goal is to determine what stable structures can result in such systems (natural or artificial), how fast, and under what conditions (e.g. by what underlying codes/reaction-rules). Most computability issues in the area of population protocols have now been resolved. Finite-state processes on a complete interaction network, i.e. one in which every pair of processes may interact, (and several variations) compute the \emph{semilinear predicates} \cite{AAER07}. Semilinearity persists up to $o(\log\log n)$ local space but not more than this \cite{MNPS11}.  If additionally the connections between processes can hold a state from a finite domain (note that this is a stronger requirement than the on/off that the present work assumes) then the computational power dramatically increases to the commutative subclass of $\rem{NSPACE}(n^2)$ \cite{MCS11-2}. Other important works include \cite{GR09} which equipped the nodes of population protocols with unique ids and \cite{BBCK10} which introduced a (weak) notion of speed of the nodes that allowed the design of fast converging protocols with only weak requirements. For a very recent introductory text see \cite{MCS11}.

The paper essentially consists of two parts. In the first part, we give simple (i.e. small) and efficient (i.e. polynomial-time) protocols for the construction of several fundamental networks. In particular, we give protocols for spanning lines, spanning rings, cycle-covers, partitioning into cliques, and regular networks (formal definitions of all problems considered can be found in Section \ref{sec:problems}). We remark that the spanning line problem is of outstanding importance because it constitutes a basic ingredient of universal constructors. We give three different protocols for this problem each improving on the running time but using more states to this end. Additionally, we establish a $\Omega(n\log n)$ generic lower bound on the expected running time of all constructors that construct a spanning network and a $\Omega(n^2)$ lower bound for the spanning line, where $n$ throughout this work denotes the number of processes. Our fastest protocol for the problem runs in $O(n^3)$ expected time and uses 9 states while our simplest uses only 5 states but pays in an expected time which is between $\Omega(n^4)$ and $O(n^5)$. In the second part, we investigate the more generic question of \emph{what is in principle constructible by our model}. We arrive there at several satisfactory characterizations establishing some sort of universality of the model. The main idea is as follows. To construct a decidable graph-language $L$ we (i) construct on $k$ of the processes (called the \emph{waste}) a network $G_1$ capable of simulating a Turing Machine (abbreviated ``TM'' throughout the paper) and of constructing a random network on the remaining $n-k$ processes (called the \emph{useful space}), (ii) use $G_1$ to construct a random network $G_2\in G_{n-k,1/2}$ on the remaining $n-k$ processes, (iii) execute on $G_1$ the TM that decides $L$ with $G_2$ as input. If the TM accepts, then we output $G_2$ (note that this is not a terminating step - the reason why will become clear in Section \ref{sec:gencon}; the protocol just freezes and its output forever remains $G_2$), otherwise we go back to (ii) and repeat. Using this core idea we prove several universality results for our model. Additionally, we show how to organize the population into a distributed system with names and logarithmic local memories.

In Section \ref{sec:rw}, we discuss further related literature. Section \ref{sec:prel} brings together all definitions and basic facts that are used throughout the paper. In particular, in Section \ref{sec:model} we formally define the model of network constructors, Section \ref{sec:problems} formally defines all network construction problems that are considered in this work, and in Section \ref{sec:basic-processes} we identify and analyze a set of basic probabilistic processes that are recurrent in the analysis of the running times of network constructors. In Section \ref{sec:global-line}, we study the spanning line problem. In Section \ref{sec:basic-con}, we provide direct constructors for all the other basic network construction problems. Section \ref{sec:gencon} presents our \emph{universality} results. Finally, in Section \ref{sec:conclusions} we conclude and give further research directions that are opened by our work.

\section{Further Related Work}
\label{sec:rw}

\noindent\textbf{Algorithmic Self-Assembly.} There are already several models trying to capture the self-assembly capability of natural processes with the purpose of engineering systems and developing algorithms inspired by such processes. For example, \cite{Do12} proposes to learn how to program molecules to manipulate themselves, grow into machines and at the same time control their own growth. The research area of ``algorithmic self-assembly'' belongs to the field of ``molecular computing''. The latter was initiated by Adleman \cite{Ad94}, who designed interacting DNA molecules to solve an instance of the Hamiltonian path problem. The model guiding the study in algorithmic self-assembly is the Abstract Tile Assembly Model (aTAM) \cite{Wi98,RW00} and variations. In contrast to those models that try to incorporate the exact molecular mechanisms (like e.g. temperature, energy, and bounded degree), we propose a very abstract combinatorial rule-based model, free of specific application-driven assumptions, with the aim of revealing the fundamental laws governing the distributed (algorithmic) generation of networks. Our model may serve as a common substructure to more applied models (like assembly models or models with geometry restrictions) that may be obtained from our model by imposing restrictions on the scheduler, the degree, and the number of local states (see Section \ref{sec:conclusions} for several interesting variations of our model).\\

\noindent\textbf{Distributed Network Construction.} To the best of our knowledge, classical distributed computing has not considered the problem of constructing an actual communication network from scratch. From the seminal work of Angluin \cite{An80} that initiated the theoretical study of distributed computing systems up to now, the focus has been more on assuming a given communication topology and constructing a virtual network over it, e.g. a spanning tree for the purpose of fast dissemination of information. Moreover, these models assume most of the time unique identities, unbounded memories, and message-passing communication. Additionally, a process always communicates with its neighboring processes (see \cite{Ly96} for all the details). An exception is the area of geometric pattern formation by mobile robots (cf. \cite{SY99,DFSY10} and references therein). A great difference, though, to our model is that in mobile robotics the computational entities have complete control over their mobility and thus over their future interactions. That is, the goal of a protocol is to result in a desired interaction pattern while in our model the goal of a protocol is to construct a network while operating under a totally unpredictable interaction pattern. Very recently, a model inspired by the behavior of ameba that allows algorithmic research on self-organizing particle systems was proposed \cite{DGRS13}. The goal is for the particles to self-organize in order to adapt to a desired shape without any central control, which is quite similar to our objective, however the two models seem two have little in common. In the same work, the authors observe that, in contrast to the considerable work that has been performed w.r.t. to systems (e.g. self-reconfigurable robotic systems), only very little theoretical work has been done in this area. This further supports the importance of introducing a simple yet sufficiently generic model for distributed network construction, as we do in this work.\\

\noindent\textbf{Cellular Automata.} A cellular automaton (cf. e.g. \cite{Sc11}) consists of a grid of cells each cell being a finite automaton. A cell updates its own state by reading the states of its neighboring cells (e.g. 2 in the 1-dimensional case and 4 in the 2-dimensional case). All cells may perform the updates in discrete synchronous steps or updates may occur asynchronously. Cellular automata have been used as models for self-replication, for modeling several physical systems (e.g. neural activity, bacterial growth, pattern formation in nature), and for understanding emergence, complexity, and self-organization issues. Though there are some similarities there are also significant differences between our model and cellular automata. One is that in our model the interaction pattern is nondeterministic as it depends on the scheduler and a process may interact with any other process of the system and not just with some predefined neighbors. Moreover, our model has a direct capability of forming networks whereas cellular automata can form networks only indirectly (an edge between two cells $u$ and $v$ has to be represented as a line of cells beginning at $u$, ending at $v$ and all cells on the line being in a special edge-state). In fact, cellular automata are more suitable for studying the formation of patterns on e.g. a discrete surface of static cells while our model is more suitable for studying how a totally dynamic (e.g. mobile) and initially disordered collection of entities can self-organize into a network.\\

\noindent\textbf{Social Networks.} There is a great amount of work dealing with networks formed by a group of interacting individuals. Individuals, also called players, which may, for example, be people, animals, or companies, depending on the application, usually have incentives and connections between individuals indicate some social relationship, like for example friendship. The network is formed by allowing the individuals to form or delete connections usually selfishly by trying to maximize their own utility. The usual goal there is to study how the whole network affects the outcome of a specific interaction, to predict the network that will be formed by a set of selfish individuals, and to characterize the quality of the network formed (e.g. its efficiency). See e.g. \cite{Ja05,BEK13}. This is a game-theoretic setting which is very different from the setting considered here as the latter does not include incentives and utilities. Another important line of research considers random social networks in which new links are formed according to some probability distribution. For example, in \cite{BA99} it was shown that growth and preferential attachment that characterize a great majority of social networks (like, for example, the Internet) results in scale-free properties that are not predicted by the Erd\"{o}s-R\'{e}nyi random graph model \cite{ER59,Bo01}. Though, in principle, we allow processes to perform a coin tossing during an interaction, our focus is not on the formation of a random network but on cooperative (algorithmic) construction according to a common set of rules. In summary, our model looks more like a standard dynamic distributed computing system in which the interacting entities are computing processes that all execute the same program.\\

\noindent\textbf{Network Formation in Nature.} Nature has an intrinsic ability to form complex structures and networks via a process known as \emph{self-assembly}. By self-assembly, small components (like e.g. molecules) automatically assemble into large, and usually complex structures (like e.g. a crystal). There is an abundance of such examples in the physical world. Lipid molecules form a cell's membrane, ribosomal proteins and RNA coalesce into functional ribosomes, and bacteriophage virus proteins self-assemble a capsid that allows the virus to invade bacteria \cite{Do12}. Mixtures of RNA fragments that self-assemble into self-replicating ribozymes spontaneously form cooperative catalytic cycles and networks. Such cooperative networks grow faster than selfish autocatalytic cycles indicating an intrinsic ability of RNA populations to evolve greater complexity through cooperation \cite{VMC12}. Through billions of years of prebiotic molecular selection and evolution, nature has produced a basic set of molecules. By combining these simple elements, natural processes are capable of fashioning an enormously diverse range of fabrication units, which can further self-organize into refined structures, materials and molecular machines that not only have high precision, flexibility and error-correction capacity, but are also self-sustaining and evolving. In fact, nature shows a strong preference for bottom-up design.

Systems and solutions inspired by nature have often turned out to be extremely practical and efficient. For example, the bottom-up approach of nature inspires the fabrication of biomaterials by attempting to mimic these phenomena with the aim of creating new and varied structures with novel utilities well beyond the gifts of nature \cite{Zh03}. Moreover, there is already a remarkable amount of work envisioning our future ability to engineer computing and robotic systems by manipulating molecules with nanoscale precision. Ambitious long-term applications include molecular computers \cite{BPSPF10} and miniature (nano)robots for surgical instrumentation, diagnosis and drug delivery in medical applications (e.g. it has very recently been reported that DNA nanorobots could even kill cancer cells \cite{DBC12}) and monitoring in extreme conditions (e.g. in toxic environments). However, the road towards this vision passes first through our ability to discover \emph{the laws governing the capability of distributed systems to construct networks}. The gain of developing such a theory will be twofold: It will give some insight to the role (and the mechanisms) of network formation in the complexity of natural processes and it will allow us engineer artificial systems that achieve this complexity.

\section{Preliminaries}
\label{sec:prel}

\subsection{A Model of Network Constructors}
\label{sec:model}

\begin{definition}
A \emph{Network Constructor} (NET) is a distributed protocol defined by a 4-tuple $(Q,q_0,Q_{out},\delta)$, where
$Q$ is a finite set of \emph{node-states}, $q_0\in Q$ is the \emph{initial node-state}, $Q_{out}\subseteq Q$ is the set of \emph{output node-states}, and $\delta : Q\times Q\times \{0,1\} \rightarrow Q\times Q\times \{0,1\}$ is the \emph{transition function}.
\end{definition}

If $\delta(a,b,c) = (a^{\prime},b^{\prime},c^{\prime})$, we call $(a,b,c) \rightarrow (a^{\prime},b^{\prime},c^{\prime})$ a \emph{transition} (or \emph{rule}) and we define $\delta_{1}(a,b,c) = a^{\prime}$, $\delta_{2}(a,b,c) = b^{\prime}$, and $\delta_{3}(a,b,c) = c^{\prime}$. A transition $(a,b,c) \rightarrow (a^{\prime},b^{\prime},c^{\prime})$ is called \emph{effective} if $x\neq x^\prime$ for at least one $x\in\{a,b,c\}$ and \emph{ineffective} otherwise. When we present the transition function of a protocol we only present the effective transitions. Additionally, we agree that the \emph{size} of a protocol is the number of its states, i.e. $|Q|$.

The system consists of a population $V_I$ of $n$ distributed \emph{processes} (also called \emph{nodes} when clear from context). In the generic case, there is an underlying \emph{interaction graph} $G_I=(V_I,E_I)$ specifying the permissible interactions between the nodes. Interactions in this model are always pairwise. In this work, $G_I$ is a \emph{complete undirected interaction graph}, i.e. $E_I=\{uv:u,v\in V_I \text{ and } u\neq v\}$, where $uv=\{u,v\}$. Initially, all nodes in $V_I$ are in the initial node-state $q_0$. 

A central assumption of the model is that edges have binary states. An edge in state 0 is said to be \emph{inactive} while an edge in state 1 is said to be \emph{active}. All edges are initially inactive.

Execution of the protocol proceeds in discrete steps. In every step, a pair of nodes $uv$ from $E_I$ is selected by an \emph{adversary scheduler} and these nodes interact and update their states and the state of the edge joining them according to the transition function $\delta$. In particular, we assume that, for all distinct node-states $a,b\in Q$ and for all edge-states $c\in\{0,1\}$, $\delta$ specifies either $(a,b,c)$ or $(b,a,c)$. So, if $a$, $b$, and $c$ are the states of nodes $u$, $v$, and edge $uv$, respectively, then the unique rule corresponding to these states, let it be $(a,b,c)\rightarrow (a^\prime,b^\prime,c^\prime)$, is applied, the edge that was in state $c$ updates its state to $c^\prime$ and if $a\neq b$, then $u$ updates its state to $a^\prime$ and $v$ updates its state to $b^\prime$, if $a=b$ and $a^\prime=b^\prime$, then both nodes update their states to $a^\prime$, and if $a=b$ and $a^\prime\neq b^\prime$, then the node that gets $a^\prime$ is drawn equiprobably from the two interacting nodes and the other node gets $b^\prime$. 

A \emph{configuration} is a mapping $C : V_I\cup E_I \rightarrow Q\cup \{0,1\}$ specifying the state of each node and each edge of the interaction graph. Let $C$ and $C^{\prime}$ be configurations, and let $u$, $\upsilon$ be distinct nodes. We say that \emph{$C$ goes to $C^{\prime}$ via encounter $e=u\upsilon$}, denoted $C \stackrel{e}\rightarrow C^{\prime}$, if $(C^{\prime}(u),C^{\prime}(v),C^{\prime}(e))=$ $\delta(C(u),C(v),C(e)) \text{ or }$ $(C^{\prime}(v),C^{\prime}(u),C^{\prime}(e))=$ $\delta(C(v),C(u),C(e)) \text{ and }$ $C^{\prime}(z)=$ $C(z), \mbox{ for all } z\in (V_I\bs\{u,v\})\cup (E_I\bs\{e\})$. We say that \emph{$C^\prime$ is reachable in one step from $C$}, denoted $C\rightarrow C^{\prime}$, if $C \stackrel{e}\rightarrow C^{\prime}$ for some encounter $e\in E_I$. We say that $C^{\prime}$ is \emph{reachable} from $C$ and write $C\rsa C^{\prime}$, if there is a sequence of configurations $C=C_{0},C_{1},\ldots,C_{t}=C^{\prime}$, such that $C_{i}\rightarrow C_{i+1}$ for all $i$, $0\leq i <t$.

An \emph{execution} is a finite or infinite sequence of configurations $C_{0},C_{1},$ $C_{2},\ldots$, where $C_{0}$ is an initial configuration and $C_{i}\rightarrow C_{i+1}$, for all $i\geq 0$. A \emph{fairness condition} is imposed on the adversary to ensure the protocol makes progress. An infinite execution is \emph{fair} if for every pair of configurations $C$ and $C^{\prime}$ such that $C\rightarrow C^{\prime}$, if $C$ occurs infinitely often in the execution then so does $C^{\prime}$. In what follows, every execution of a NET will by definition considered to be fair.

We define the \emph{output of a configuration} $C$ as the graph $G(C)=(V,E)$ where $V=\{u\in V_I: C(u)\in Q_{out}\}$ and $E=\{uv:u,v\in V,$ $u\neq v$, and $C(uv)=1\}$. In words, the output-graph of a configuration consists of those nodes that are in output states and those edges between them that are active, i.e. the active subgraph induced by the nodes that are in output states. The output of an execution $C_0,C_1,\ldots$ is said to \emph{stabilize} (or \emph{converge}) to a graph $G$ if there exists some step $t\geq 0$ s.t. $G(C_i)=G$ for all $i\geq t$, i.e. from step $t$ and onwards the output-graph remains unchanged. Every such configuration $C_i$, for $i\geq t$, is called \emph{output-stable}. The \emph{running time} (or \emph{time to convergence}) of an execution is defined as the minimum such $t$ (or $\infty$ if no such $t$ exists). Throughout the paper, whenever we study the running time of a NET, we assume that interactions are chosen by a \emph{uniform random scheduler} which, in every step, selects independently and uniformly at random one of the $|E_I|=n(n-1)/2$ possible interactions. In this case, the running time becomes a random variable (abbreviated ``r.v.'') $X$ and our goal is to obtain bounds on the expectation $\E[X]$ of $X$. Note that the uniform random scheduler is fair with probability 1.

\begin{definition}
We say that an execution of a NET on $n$ processes \emph{constructs a graph} (or \emph{network}) $G$, if its output stabilizes to a graph isomorphic to $G$.
\end{definition}

\begin{definition}
We say that a NET $\ca$ \emph{constructs a graph language $L$ with useful space $g(n)\leq n$}, if $g(n)$ is the greatest function for which: (i) for all $n$, every execution of $\ca$ on $n$ processes constructs a $G\in L$ of order at least $g(n)$ (provided that such a $G$ exists) and, additionally, (ii) for all $G\in L$ there is an execution of $\ca$ on $n$ processes, for some $n$ satisfying $|V(G)|\geq g(n)$, that constructs $G$. Equivalently, we say that \emph{$\ca$ constructs $L$ with waste $n-g(n)$}.
\end{definition}

\begin{definition}
Define $\rem{REL}(g(n))$ to be the class of all graph languages that are constructible with useful space $g(n)$ by a NET. We call $\rem{REL}(\cdot)$ the \emph{relation} or \emph{on/off} class. 
\end{definition}

Also define $\rem{PREL}(g(n))$ in precisely the same way as $\rem{REL}(g(n))$ but in the extension of the above model in which every pair of processes is capable of tossing an unbiased coin during an interaction between them. In particular, in the weakest probabilistic version of the model, we allow transitions that with probability $1/2$ give one outcome and with probability $1/2$ another. Additionally, we require that all graphs have the same probability to be constructed by the protocol.

We denote by $\rem{DGS}(f(l))$ (for ``Deterministic Graph Space'') the class of all graph languages that are decidable by a TM of (\emph{binary}) space $f(l)$, where $l$ is the length of the adjacency matrix encoding of the input graph.

\subsection{Problem Definitions}
\label{sec:problems}

We here provide formal definitions of all the network construction problems that are considered in this work. Protocols and bounds for these problems are presented in Sections \ref{sec:global-line} and \ref{sec:basic-con}.\\

\noindent\textbf{Global line.} The goal is for the $n$ distributed processes to construct a spanning line, i.e. a connected graph in which 2 nodes have degree 1 and $n-2$ nodes have degree 2.\\

\noindent\textbf{Cycle cover.} Every process in $V_I$ must eventually have degree 2. The result is a collection of node-disjoint cycles spanning $V_I$.\\

\noindent\textbf{Global star.} The processes must construct a spanning star, i.e. a connected graph in which 1 node, called the \emph{center}, has degree $n-1$ and $n-1$ nodes, called the \emph{peripheral nodes}, have degree 1.\\

\noindent\textbf{Global ring.} The processes must construct a spanning ring, i.e. a connected graph in which every node has degree 2.\\

\noindent\textbf{$k$-regular connected.} The generalization of global ring in which every node has degree $k\geq 2$ (note that $k$ is a constant and a protocol for the problem must run correctly on any number $n$ of processes).\\

\noindent\textbf{$c$-cliques.} The processes must partition themselves into $\lfloor n/c\rfloor$ cliques of order $c$ each (again $c$ is a constant).\\

\noindent\textbf{Replication.} The protocol is given an input graph $G=(V,E)$ on a subset $V_1$ of the processes ($|V_1|=|V|$). The processes in $V_1$ are initially in state $q_0$ and the edges of $E$ are the active edges between them. All other edges in $E_I$ are initially inactive. The processes in $V_2=V_I\bs V_1$ are initially in state $r_0$. The goal is to create a \emph{replica} of $G$ on $V_2$, provided that $|V_2|\geq |V_1|$. Formally, we want, in every execution, the output induced by the active edges between the nodes of $V_2$ to stabilize to a graph isomorphic to $G$.

\subsection{Basic Probabilistic Processes}
\label{sec:basic-processes}
 
We now present a set of very fundamental probabilistic processes that are recurrent in the analysis of the running times of network constructors. All these processes assume a uniform random scheduler and are applications of the standard coupon collector problem. In most of these processes, we ignore the states of the edges and focus only on the dynamics of the node-states, that is we consider rules of the form $\delta:Q\times Q\rightarrow Q\times Q$. Throughout this section, we call a step a \emph{success} if an effective rule applies on the interacting nodes and we denote by $X$ the r.v. of the running time of the processes.\\

\noindent\textbf{One-way epidemic.} Consider the protocol in which the only effective transition is $(a,b)\ra (a,a)$. Initially, there is a single $a$ and $n-1$ $b$s and we want to estimate the expected number of steps until all nodes become $a$s.

\begin{proposition} \label{pro:one-way-ep}
The expected time to convergence of a one-way epidemic (under the uniform random scheduler) is $\Theta(n\log n)$.
\end{proposition}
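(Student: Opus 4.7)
The plan is to view the process as a Markov chain on the number $k$ of nodes in state $a$, starting at $k=1$ and absorbing at $k=n$, and to decompose the total time as a sum of waiting times between successive conversions. Write $X = \sum_{k=1}^{n-1} X_k$, where $X_k$ is the number of steps needed to go from $k$ to $k+1$ nodes in state $a$.

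Next I would compute the success probability in state $k$. Conditional on $k$ nodes currently being in state $a$ and $n-k$ in state $b$, the effective interactions are exactly the unordered pairs with one endpoint of each type, of which there are $k(n-k)$. Since the uniform random scheduler picks one of the $\binom{n}{2}$ pairs uniformly at random per step, $X_k$ is geometric with success probability $p_k = 2k(n-k)/(n(n-1))$, so $\E[X_k] = n(n-1)/(2k(n-k))$. Linearity of expectation then gives
\[
\E[X] \;=\; \sum_{k=1}^{n-1} \frac{n(n-1)}{2k(n-k)}.
\]

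The key computation is evaluating this sum. I would use the partial-fraction identity $\frac{1}{k(n-k)} = \frac{1}{n}\bigl(\frac{1}{k}+\frac{1}{n-k}\bigr)$, which by the symmetry $k \mapsto n-k$ collapses the sum to $(n-1)H_{n-1}$, where $H_{n-1}$ is the $(n-1)$-st harmonic number. Since $H_{n-1} = \Theta(\log n)$, this yields both bounds: $\E[X] = O(n\log n)$ (upper) and $\E[X] = \Omega(n\log n)$ (lower), establishing the $\Theta(n\log n)$ claim.

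There is no real obstacle here; the only mild care is to justify that the waiting times $X_k$ are genuinely geometric and independent, which follows because the conditional distribution of the next effective step depends only on the current value of $k$ (the process is Markov on $k$) and the scheduler draws interactions i.i.d. from the uniform distribution on $E_I$. A matching lower bound could alternatively be argued by noting that even the very first step ($k=1$) alone contributes expected time $n(n-1)/(2(n-1)) = n/2$, but the full harmonic sum is needed to reach $\Omega(n\log n)$; the symmetric contributions from small $k$ and from $k$ close to $n-1$ are what drive the logarithmic factor.
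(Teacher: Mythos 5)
Your proposal is correct and follows essentially the same route as the paper's proof: epoch decomposition by the number of $a$-nodes, geometric waiting times with success probability $2k(n-k)/(n(n-1))$, linearity of expectation, and the partial-fraction identity collapsing the sum to $(n-1)H_{n-1}=\Theta(n\log n)$. No issues.
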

\begin{proof}
Let $X$ be a r.v. defined to be the number of steps until all $n$ nodes are in state $a$. Call a step a success if an effective rule applies and a new $a$ appears on some node. Divide the steps of the protocol into \emph{epochs}, where epoch $i$ begins with the step following the $(i-1)$st success and ends with the step at which the $i$th success occurs. Let also the r.v. $X_i$, $1\leq i\leq n-1$ be the number of steps in the $i$-th epoch. Let $p_i$ be the probability of success at any step during the $i$-th epoch. We have $p_i=\frac{i(n-i)}{m}=\frac{2i(n-i)}{n(n-1)}$, where $m=|E_I|=n(n-1)/2$ denotes the total number of possible interactions and $\E[X_i]=1/p_i=\frac{n(n-1)}{2i(n-i)}$. By linearity of expectation we have 
\begin{align*}
\E[X]&=\E[\sum_{i=1}^{n-1} X_i]=\sum_{i=1}^{n-1}\E[X_i]=\sum_{i=1}^{n-1}\frac{n(n-1)}{2i(n-i)}=\frac{n(n-1)}{2}\sum_{i=1}^{n-1}\frac{1}{i(n-i)}\\
&=\frac{n(n-1)}{2}\sum_{i=1}^{n-1}\frac{1}{n}\left (\frac{1}{i}+\frac{1}{n-i}\right )=\frac{(n-1)}{2}\left [\sum_{i=1}^{n-1}\frac{1}{i}+\sum_{i=1}^{n-1} \frac{1}{n-i}\right ]\\
&=\frac{(n-1)}{2}2H_{n-1}= (n-1)[\ln (n-1)+\Theta(1)]\\
&=\Theta(n\log n),
\end{align*}
\qed
\end{proof}
where $H_n$ denotes the $n$th Harmonic number.\\

\noindent\textbf{One-to-one elimination.} All nodes are initially in state $a$. The only effective transition of the protocol is $(a,a)\ra (a,b)$. We are now interested in the expected time until a single $a$ remains. We call the process one-to-one elimination because $a$s are only eliminated with themselves. A straightforward application is in protocols that elect a unique leader by beginning with all nodes in the leader state and eliminating a leader whenever two leaders interact.

\begin{proposition} \label{pro:one-to-one}
The expected time to convergence of a one-to-one elimination is $\Theta(n^2)$.
\end{proposition}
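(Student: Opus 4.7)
The plan is to follow the same epoch-based scheme used in the proof of Proposition~\ref{pro:one-way-ep}. Let $X$ be the r.v. counting steps until only one node in state $a$ remains, and call a step a \emph{success} if the rule $(a,a)\to(a,b)$ is actually applied (i.e. an effective rule fires). Each success strictly decreases the number of $a$'s by one, so starting from $n$ $a$'s we need exactly $n-1$ successes. Partition the execution into epochs $1,2,\ldots,n-1$, where epoch $i$ consists of all steps between the $(i-1)$st and $i$th success, and let $X_i$ be the length of epoch $i$, so that $X=\sum_{i=1}^{n-1} X_i$.

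Next I would compute the success probability within each epoch. During epoch $i$ the number of $a$-nodes is $k=n-i+1$, so the number of interacting pairs that trigger the effective rule is $\binom{k}{2}=k(k-1)/2$, out of $m=n(n-1)/2$ total pairs selected uniformly by the scheduler. Hence the per-step success probability is $p_i=\frac{k(k-1)}{n(n-1)}$, and $X_i$ is geometric with $\E[X_i]=1/p_i=\frac{n(n-1)}{k(k-1)}$.

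By linearity of expectation and reindexing by $k$,
\begin{align*}
\E[X]=\sum_{i=1}^{n-1}\E[X_i]=n(n-1)\sum_{k=2}^{n}\frac{1}{k(k-1)}=n(n-1)\sum_{k=2}^{n}\left(\frac{1}{k-1}-\frac{1}{k}\right)=n(n-1)\left(1-\frac{1}{n}\right)=(n-1)^2,
\end{align*}
so $\E[X]=\Theta(n^2)$, which is both an upper and matching lower bound.

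There is no real obstacle here: the only subtlety compared to the one-way epidemic is that the number of ``reactive'' pairs grows as $\binom{k}{2}$ rather than $i(n-i)$, which is precisely what makes the corresponding sum telescope to a clean closed form instead of producing the harmonic $\log n$ factor. I would just double-check that the definition of the model indeed charges one step per scheduler selection (including ineffective ones), which it does, so the geometric random variables $X_i$ correctly count all steps and not just successful collisions.
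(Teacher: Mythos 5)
Your proof is correct and follows essentially the same epoch decomposition and success-probability computation as the paper; the only difference is that you evaluate $\sum_{k=2}^{n}\frac{1}{k(k-1)}$ exactly by telescoping to get $\E[X]=(n-1)^2$, whereas the paper upper-bounds the sum by $\sum 1/(i-1)^2<1-1/n$ and then argues the $\Omega(n^2)$ lower bound separately via the waiting time for the last two $a$'s to meet. Your closed form is sharper and delivers both bounds at once, but the underlying argument is the same.
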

\begin{proof}
Epoch $i$ begins with the step following the $i$th success and ends with the step at which the $(i+1)$st success occurs. The probability of success during the $i$th epoch, for $0\leq i\leq n-2$, is $p_i=[(n-i)(n-i-1)/2]/[n(n-1)/2]=[(n-i)(n-i-1)]/[n(n-1)]$ and
\begin{align*}
\E[X]&=n(n-1)\sum_{i=0}^{n-2} \frac{1}{(n-i)(n-i-1)}= n(n-1)\sum_{i=2}^{n} \frac{1}{i(i-1)}<n(n-1)\sum_{i=2}^{n} \frac{1}{(i-1)^2}\\
&=n(n-1)\sum_{i=1}^{n-1} \frac{1}{i^2}\leq n(n-1)(1-\frac{1}{n})<n^2.
\end{align*}
The above uses the fact that $\sum_{i=1}^{n-1} 1/i^2$ is bounded from above by $(1-1/n)$. This holds because $\sum_{i=1}^{n-1} 1/i^2\leq \int_{s=1}^{n} (1/s^2) \mathrm{d}s=\left[-s^{-1}\right]_{s=1}^{n}=1-1/n$.

Now, for the lower bound, observe that the last two $a$s need on average $n(n-1)/2$ steps to meet each other. As $n(n-1)/2\leq \E[X] < n^2$, we conclude that $\E[X]=\Theta(n^2)$.
\qed
\end{proof}

A slight variation of the one-to-one elimination protocol constructs a \emph{maximum matching}, i.e. a matching of cardinality $\lfloor n/2\rfloor$ (which is a perfect matching in case $n$ is even). The variation is $(a,a,0)\ra (b,b,1)$ and the running time of a one-to-one elimination, i.e. $\Theta(n^2)$, is an upper bound on this variation. For the lower bound, notice that when only two (or three) $a$s remain the expected number of steps for a success is $n(n-1)/2$ ($n(n-1)/6$, respectively), that is the running time is also $\Omega(n^2)$. We conclude that there is a protocol that constructs a maximum matching in an expected number of $\Theta(n^2)$ steps.\\

\noindent\textbf{One-to-all elimination.} All nodes are initially in state $a$. The effective rules of the protocol are $(a,a)\ra (b,a)$ and $(a,b)\ra (b,b)$. We are now interested in the expected time until no $a$ remains. The process is called one-to-all elimination because $a$s are eliminated not only when they interact with $a$s but also when they interact with $b$s. At a first sight, it seems to run faster than a one-way epidemic as $b$s still propagate towards $a$s as in a one-way epidemic but now $b$s are also created when two $a$s interact. We show that this is not the case.

\begin{proposition} \label{pro:one-to-all}
The expected time to convergence of a one-to-all elimination is $\Theta(n\log n)$.
\end{proposition}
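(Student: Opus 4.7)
The plan is to track the number $k$ of $b$-state nodes, which is monotone non-decreasing because each of the two effective rules, $(a,a)\to(b,a)$ and $(a,b)\to(b,b)$, converts exactly one $a$ into a $b$. Convergence is precisely the event $k=n$, so I would follow the epoch strategy of the preceding two propositions: partition the execution into epochs separated by successive increments of $k$, and sum the corresponding geometric waiting times.

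First, from $k=0$ every pair is $(a,a)$, so the first step is a success with probability $1$ and produces the initial $b$. For $k\in\{1,\ldots,n-1\}$, an increment occurs exactly when the scheduler selects one of the $\binom{n-k}{2}$ $(a,a)$-pairs or one of the $k(n-k)$ $(a,b)$-pairs, while the $\binom{k}{2}$ $(b,b)$-pairs are idle; hence
$$p_k \;=\; \frac{\binom{n-k}{2}+k(n-k)}{\binom{n}{2}} \;=\; \frac{(n-k)(n+k-1)}{n(n-1)}.$$
Summing the expected epoch lengths $1/p_k$ and applying the substitution $j=n-k$ gives
$$\E[X] \;=\; 1 \;+\; \sum_{k=1}^{n-1}\frac{1}{p_k} \;=\; 1 \;+\; n(n-1)\sum_{j=1}^{n-1}\frac{1}{j(2n-j-1)}.$$

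The final step is a two-sided sandwich of the summand. For $j\in\{1,\ldots,n-1\}$ one has $n\leq 2n-j-1\leq 2n-2\leq 2n$, so
$$\frac{1}{2n}\cdot\frac{1}{j} \;\leq\; \frac{1}{j(2n-j-1)} \;\leq\; \frac{1}{n}\cdot\frac{1}{j}.$$
Summing over $j$, multiplying by $n(n-1)$, and using $H_{n-1}=\Theta(\log n)$ yields
$$\tfrac{1}{2}(n-1)H_{n-1} \;\leq\; \E[X]-1 \;\leq\; (n-1)H_{n-1},$$
so $\E[X]=\Theta(n\log n)$, as required.

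There is no real conceptual obstacle; the only delicate point is making rigorous the intuition that the extra creation rule $(a,a)\to(b,a)$ does not speed up elimination by more than a constant factor relative to the pure epidemic rule $(a,b)\to(b,b)$. The sandwich on $2n-j-1$ quantifies exactly this: throughout the range of $j$ the denominator varies only between $n$ and $2n$, so both the $(a,a)$ and $(a,b)$ contributions to $p_k$ are of the same order and the asymptotic rate matches that of the one-way epidemic of Proposition~\ref{pro:one-way-ep}.
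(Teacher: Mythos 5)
Your proposal is correct and follows essentially the same approach as the paper: the epoch decomposition is identical and your $p_k=\frac{(n-k)(n+k-1)}{n(n-1)}$ is exactly the paper's $p_i=\frac{n(n-1)-i(i-1)}{n(n-1)}$ in factored form. The only difference is the final estimation of the sum, where your substitution $j=n-k$ and the observation $n\leq 2n-j-1\leq 2n$ give a noticeably cleaner two-sided bound than the paper's partial-fraction and harmonic-number manipulations.
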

\begin{proof} 
The probability of success during the $i$th epoch, for $0\leq i\leq n-1$, is $p_i=1-[i(i-1)/2]/[n(n-1)/2]=[n(n-1)-i(i-1)]/[n(n-1)]$ and
\begin{equation*}
\E[X]=n(n-1)\sum_{i=0}^{n-1} \frac{1}{n(n-1)-i(i-1)}.
\end{equation*}

For the upper bound, we have 
\begin{align*}
\E[X] &= n(n-1)\sum_{i=0}^{n-1} \frac{1}{n(n-1)-i(i-1)}
< n(n-1)\left[\sum_{i=0}^{n-2} \frac{1}{(n-1)^2-i^2}\right ] + \frac{n}{2}\\
&= \frac{n}{2}\left(\sum_{i=0}^{n-2}\frac{1}{n-i-1} + \sum_{i=0}^{n-2}\frac{1}{n+i-1} + 1 \right )
= \frac{n}{2}\left(\sum_{i=1}^{n-1}\frac{1}{i}+\sum_{i=1}^{2n-3}\frac{1}{i}-\sum_{i=1}^{n-2}\frac{1}{i}+1\right )\\
&= \frac{n}{2}\left(\frac{1}{n-1} + \sum_{i=1}^{2n-3}\frac{1}{i} + 1 \right )
= \frac{n}{2}H_{2n-3}+\frac{n}{2}+\frac{n}{2(n-1)}\\
&< n(H_{2n}+1)
= n[\ln 2n + \Theta(1)].
\end{align*}

For the lower bound, we have
\begin{align*}
\E[X] &= n(n-1)\sum_{i=0}^{n-1} \frac{1}{n(n-1)-i(i-1)}
> n(n-1)\sum_{i=0}^{n-1} \frac{1}{n^2-(i-1)^2}\\
&= \frac{n-1}{2}\left(\sum_{i=0}^{n-1} \frac{1}{n-i+1}+\sum_{i=0}^{n-1} \frac{1}{n+i-1}\right)\\
&= \frac{n-1}{2}\left(\sum_{i=1}^{n+1} \frac{1}{i}+\sum_{i=1}^{2n-2} \frac{1}{i} - \sum_{i=1}^{n-2} \frac{1}{i} - 1\right)\\
&= \frac{n-1}{2}\left(\sum_{i=1}^{2n-2} \frac{1}{i}+\frac{1}{n-1} + \frac{1}{n} + \frac{1}{n+1} - 1\right)
> \frac{n-1}{2}(H_{2n-2}-1)\\
&= \frac{n-1}{2}[\ln (2n-2) + \Theta(1)].
\end{align*}
We conclude that $\E[X]=\Theta(n\log n)$.
\qed
\end{proof}

\noindent\textbf{Meet everybody.} A single node $u$ is initially in state $a$ and all other nodes are in state $b$. The only effective transition is $(a,b)\ra (a,c)$. We study the time until all $b$s become $c$s which is equal to the time needed for $u$ to interact with every other node.

\begin{proposition} \label{pro:meet-ever}
The expected time to convergence of a meet everybody is $\Theta(n^2\log n)$.
\end{proposition}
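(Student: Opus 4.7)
The plan is to mirror the coupon-collector style argument already used in the proofs of Proposition~\ref{pro:one-way-ep} and Proposition~\ref{pro:one-to-all}, partitioning the execution into epochs based on successes. Here a success is a step on which the unique $a$-node $u$ interacts for the first time with some $b$-node (thereby turning it into a $c$). Since only $u$ can participate in an effective transition and each effective interaction converts exactly one fresh $b$, there will be exactly $n-1$ successes in total and the process stabilizes precisely at the last success.

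Next I would compute the per-step success probability during epoch $i$ (the epoch that begins right after the $i$-th success and ends at the $(i+1)$-st, for $0\le i\le n-2$). In this epoch, $u$ has already ``met'' $i$ of the other nodes, so there are exactly $n-1-i$ edges incident to $u$ whose selection triggers the effective rule $(a,b)\to(a,c)$. Out of $|E_I|=n(n-1)/2$ possible interactions chosen uniformly at random, this gives
\[
p_i \;=\; \frac{n-1-i}{\binom{n}{2}} \;=\; \frac{2(n-1-i)}{n(n-1)},
\]
hence $\E[X_i]=1/p_i = \frac{n(n-1)}{2(n-1-i)}$.

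I then apply linearity of expectation and reindex with $j=n-1-i$:
\[
\E[X] \;=\; \sum_{i=0}^{n-2}\E[X_i] \;=\; \frac{n(n-1)}{2}\sum_{i=0}^{n-2}\frac{1}{n-1-i} \;=\; \frac{n(n-1)}{2}\sum_{j=1}^{n-1}\frac{1}{j} \;=\; \frac{n(n-1)}{2}\,H_{n-1}.
\]
Using the standard estimate $H_{n-1}=\ln(n-1)+\Theta(1)$, this yields $\E[X]=\Theta(n^2\log n)$ directly, giving both the upper and lower bounds simultaneously without needing a separate argument for the lower bound (unlike the one-to-one elimination proof).

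There is really no substantive obstacle: the only point requiring minor care is the bookkeeping that the number of remaining $b$-nodes in epoch $i$ is exactly $n-1-i$ (which follows because each success converts precisely one $b$ to a $c$ and no other rule affects node-states), and the identification of this process with a classical coupon collector whose per-trial success probability is $2/n$. The matching $\Theta$-bound then follows from the sharp value $\frac{n(n-1)}{2}H_{n-1}$ of the summation.
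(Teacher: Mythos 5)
Your proof is correct, and it arrives at the same $\Theta(n^2\log n)$ bound via the same underlying coupon-collector phenomenon, but the execution differs from the paper's in a way worth noting. The paper argues in two factors: conditioned on $u$ participating in every step, collecting the $n-1$ distinct partners is a classical coupon collector taking $\Theta(n\log n)$ interactions of $u$, and since $u$ is involved in an interaction only once every $\Theta(n)$ steps on average, the total is $\Theta(n^2\log n)$. You instead run the direct epoch decomposition used in the paper's proofs of Propositions~\ref{pro:one-way-ep}--\ref{pro:one-to-all}: with $i$ nodes already met, exactly $n-1-i$ of the $\binom{n}{2}$ equiprobable interactions are successes, giving $p_i=\frac{2(n-1-i)}{n(n-1)}$ and hence the exact value
\[
\E[X]=\frac{n(n-1)}{2}\,H_{n-1}.
\]
Your bookkeeping is right (each success converts exactly one fresh $b$ to $c$, and no other rule touches node-states), so the reindexing and the harmonic-sum estimate are valid. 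What your route buys is a closed-form expectation delivering both bounds at once, and it avoids the slightly informal step in the paper of multiplying the conditional coupon-collector count by the $\Theta(n)$ inter-participation delay (which, to be fully rigorous, needs a Wald-type justification). What the paper's phrasing buys is brevity and an intuition for where the two factors $n\log n$ and $n$ come from. Either way the result stands; no gap.
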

\begin{proof}
Assume that in every step $u$ participates in an interaction. Then $u$ must collect the $n-1$ coupons which are $n-1$ different nodes that it must interact with. Clearly, in every step, every node has the same probability to interact with $u$, i.e. $1/(n-1)$, and this is the classical coupon collector problem that takes average time $\Theta(n\log n)$. But on average $u$ needs $\Theta(n)$ steps to participate in an interaction, thus the total time is $\Theta(n^2\log n)$. 
\qed
\end{proof} 

\noindent\textbf{Node cover.} All nodes are initially in state $a$. The only effective transitions are $(a,a)\ra (b,b)$, $(a,b)\ra (b,b)$. We are interested in the number of steps until all nodes become $b$s, i.e. the time needed for every node to interact at least once.

\begin{proposition} \label{pro:node-cover}
The expected time to convergence of a node cover is $\Theta(n\log n)$.
\end{proposition}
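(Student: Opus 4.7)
Let $N_t$ denote the number of nodes still in state $a$ after $t$ steps, so $N_0=n$ and $N_t$ is non-increasing (both effective rules only turn $a$s into $b$s, and nothing undoes this); $X$ is the first step at which $N_t=0$. I will bound $\E[X]$ above and below by $O(n\log n)$ and $\Omega(n\log n)$ respectively.

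For the \emph{upper bound} I plan to reuse the epoch argument of Propositions~\ref{pro:one-way-ep} and~\ref{pro:one-to-all}. Whenever $N_t=i$, the count strictly decreases iff the scheduled pair contains at least one $a$-node, so the ``success'' probability is
\[p_i=\frac{i(i-1)+2i(n-i)}{n(n-1)}=\frac{i(2n-i-1)}{n(n-1)}\ge\frac{i}{n-1}.\]
Since $N_t$ is non-increasing, each value $i$ is visited during at most one contiguous stretch of steps, on which the waiting time to the next reduction is geometric with mean $1/p_i$; hence $\E[X]\le\sum_{i=1}^{n}1/p_i\le(n-1)H_n+O(1)=O(n\log n)$. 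Whenever a reduction is $(a,a)\to(b,b)$ the intermediate value is simply skipped, which only improves the bound.

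For the \emph{lower bound} I will apply a second-moment argument to $N_t$ itself. A node is in state $a$ at step $t$ iff it has not participated in any of the first $t$ interactions. A specific node is an endpoint of a given step with probability $(n-1)/\binom{n}{2}=2/n$, so $\E[N_t]=n(1-2/n)^t$. For two distinct nodes $u,v$ the probability that neither has interacted by step $t$ equals $\bigl(\binom{n-2}{2}/\binom{n}{2}\bigr)^t\le((n-2)/n)^{2t}$, so the indicators $\mathbf{1}\{v\text{ still in }a\}$ are pairwise negatively correlated and $\Var[N_t]\le\E[N_t]$. Choosing $t^*=\lfloor(n/4)\ln n\rfloor$ makes $\E[N_{t^*}]=\Theta(\sqrt n)$, and the Paley--Zygmund inequality gives
\[\P[N_{t^*}\ge 1]\ge\frac{(\E[N_{t^*}])^2}{\Var[N_{t^*}]+(\E[N_{t^*}])^2}\ge\frac{\E[N_{t^*}]}{1+\E[N_{t^*}]},\]
which tends to $1$ as $n\to\infty$. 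Since $N_{t^*}\ge 1$ forces $X>t^*$, this yields $\E[X]\ge t^*/2=\Omega(n\log n)$ for all large $n$, and combined with the upper bound gives $\E[X]=\Theta(n\log n)$.

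The main obstacle will be the lower bound. The easy observation that the last remaining $a$ needs expected $n/2$ steps to interact gives only $\Omega(n)$, losing the $\log n$ factor. The first-interaction times $T_v$ of the $n$ nodes are highly dependent -- each step picks a pair of \emph{distinct} nodes, so a single step can cover at most two new nodes -- hence the standard maximum-of-independent-geometrics coupon-collector bound is not directly available. The Paley--Zygmund/second-moment device above is what lets me conclude with constant probability that \emph{some} node is still untouched at time $t^*=\Theta(n\log n)$, by exploiting precisely the mild negative correlation induced by the distinctness of the two interaction endpoints.
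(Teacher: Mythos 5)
Your proof is correct, and while the upper bound is in the same spirit as the paper's (the paper dominates node cover by one-to-all elimination, whose analysis is exactly your epoch sum with $p_i=\frac{i(2n-i-1)}{n(n-1)}$; your direct computation just cuts out the intermediary, and the possibility of removing two $a$s per success is correctly handled by noting that skipped values only help), your lower bound takes a genuinely different route. The paper couples the process with an artificial variant in which every success removes exactly two $a$s and then directly evaluates the resulting harmonic-type sum $n(n-1)\sum_{i=0}^{\lceil n/2\rceil}\frac{1}{n(n-1)-2i(2i-1)}=\Omega(n\log n)$, obtaining the bound on the expectation by stochastic domination. You instead run a second-moment argument on the number $N_{t^*}$ of untouched nodes at $t^*=\lfloor(n/4)\ln n\rfloor$: the exact product formula for a single node surviving, the negative pairwise correlation coming from the distinctness of the two interaction endpoints, and Paley--Zygmund give $\P[N_{t^*}\ge 1]\to 1$, hence $\E[X]\ge t^*/2$. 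Your approach buys a stronger conclusion (the lower bound holds with probability tending to $1$, not just in expectation) and avoids the somewhat delicate chain of inequalities in the paper's sum manipulation; the paper's approach stays entirely within the elementary epoch/coupon-collector framework used for all the other processes in that section and needs no variance computation. The only nit is that your bound $p_i\ge i/(n-1)$ fails at $i=n$ (where $p_n=1$), but that term contributes $1$ to the sum and does not affect the $O(n\log n)$ conclusion.
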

\begin{proof}
For the upper bound, simply observe that the running time of a one-to-all elimination, i.e. $\Theta(n\log n)$, is an upper bound on the running time of a node cover. The reason is that a node cover is a one-to-all elimination in which in some cases we may get two new bs by one effective transition (namely $(a,a)\ra (b,b)$) while in one-to-all elimination all effective transitions result in at most one new $b$.

For the lower bound, if $i$ is the number of $b$s then the probability of success is $p_i=1-[i(i-1)]/[n(n-1)]$. Observe now that a node cover process is slower than the artificial variation in which whenever rule $(a,b)\ra (b,b)$ applies we pick another $a$ and make it a $b$. This is because, given $i$ $b$s, this artificial process has the same probability of success as a node cover but additionally in every success the artificial process is guaranteed to produce two new $b$s while a node cover may in some cases produce only one new $b$. Define $k=\lceil n/2\rceil+1$. Then, taking into account what we already proved in the lower bound of one-to-all elimination (see Proposition \ref{pro:one-to-all}), we have
\begin{align*}
\E[X]&\geq n(n-1)\sum_{i=0}^{\lceil n/2\rceil} \frac{1}{n(n-1)-2i(2i-1)}
= \frac{n(n-1)}{4}\sum_{i=0}^{k-1} \frac{1}{\frac{n(n-1)}{4}-\frac{2i(2i-1)}{4}}\\
&= \frac{n(n-1)}{4}\sum_{i=0}^{k-1} \frac{1}{\frac{n}{2}(\frac{n}{2}-\frac{1}{2})-i(i-\frac{1}{2})}
> \frac{n(n-1)}{4}\sum_{i=0}^{k-1} \frac{1}{k(k-1)-i(i-1)}\\
&> \frac{n(n-1)}{8k}(H_{2k-2}-1)
> \frac{n-1}{8}(H_n-1)\\
&= \frac{n-1}{8}[\ln n+\Theta(1)].
\end{align*}
We conclude that $\E[X]=\Theta(n\log n)$. 
\qed
\end{proof}

\noindent\textbf{Edge cover.} All nodes are in state $a$ throughout the execution of the protocol. The only effective transition is $(a,a,0)\ra (a,a,1)$ (we now focus on edge-state updates), i.e. whenever an edge is found inactive it is activated (recall that initially all edges are inactive). We study the number of steps until all edges in $E_I$ become activated, which is equal to the time needed for all possible interactions to occur.

\begin{proposition} \label{pro:edge-cover}
The expected time to convergence of an edge cover is $\Theta(n^2\log n)$.
\end{proposition}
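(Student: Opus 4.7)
The plan is to recognize the process as a direct instance of the classical coupon collector problem, where the coupons are the edges of the complete interaction graph $G_I$. Since the uniform random scheduler picks, at each step, one of the $m = n(n-1)/2$ edges of $E_I$ independently and uniformly at random, and an edge is "collected" the first time it is selected (after which it stays active and never resets), the running time of edge cover is exactly the time to collect all $m$ coupons under uniform sampling.

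I would then divide the execution into epochs as in the earlier propositions: epoch $i$ (for $0 \le i \le m-1$) begins after the $i$th new edge has been activated and ends when the $(i+1)$st new edge is activated. Conditional on being in epoch $i$, each step independently activates a new edge with probability $p_i = (m-i)/m$, so the expected length of the $i$th epoch is $m/(m-i)$. By linearity of expectation,
\begin{equation*}
\E[X] = \sum_{i=0}^{m-1} \frac{m}{m-i} = m \sum_{j=1}^{m} \frac{1}{j} = m H_m.
\end{equation*}

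Plugging in $m = n(n-1)/2 = \Theta(n^2)$ and using $H_m = \ln m + \Theta(1) = 2\ln n + \Theta(1)$, we obtain $\E[X] = \Theta(n^2) \cdot \Theta(\log n) = \Theta(n^2 \log n)$, yielding both the upper and lower bounds simultaneously.

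There is essentially no obstacle here: the only thing to double-check is that the process truly matches coupon collector with no complications from the node-state dynamics, which is immediate since all nodes remain in state $a$ throughout and the rule $(a,a,0) \to (a,a,1)$ is triggered exactly when an inactive edge is selected. Thus the reduction is exact rather than asymptotic, and the standard coupon collector bounds give the claim.
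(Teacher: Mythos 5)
Your proposal is correct and matches the paper's proof essentially verbatim: both identify the process as an exact coupon collector over the $m=n(n-1)/2$ edges, compute the epoch success probability $p_i=(m-i)/m$, and sum $\E[X]=m H_m=\Theta(n^2\log n)$. Nothing further is needed.
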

\begin{proof}
Given that $m=n(n-1)/2$ and given that $j$ successes (i.e. $j$ distinct interactions) have occurred the corresponding probability for the coupon collector argument is $p_j=(m-j)/m$ and the expected number of steps is $\E[X]=\sum_{i=0}^{m-1} m/(m-i)=m\sum_{i=0}^{m-1} 1/(m-i)=m\sum_{i=1}^{m} 1/i=m(\ln m + \Theta(1))=\Theta(n^2\log n)$. Another way to see this is to observe that it is a classical coupon collector problem with $m$ coupons each selected in every step with probability $1/m$, thus $\E[X]=m\ln m + O(m)=\Theta(n^2\log n)$.
\qed
\end{proof}

Table \ref{tab:basic-proc-full} summarizes the expected time to convergence of each of the above fundamental probabilistic processes.

\begin{table}[h]
\normalsize
\setlength{\tabcolsep}{15pt}
\begin{center}
\begin{tabular}{  l  c  }
  \hline
  Protocol & Expected Time \\ \hline
  \emph{One-way epidemic} & $\Theta(n\log n)$ \\ 
  \emph{One-to-one elimination} & $\Theta(n^2)$ \\ 
  \emph{One-to-all elimination} & $\Theta(n\log n)$ \\ 
  \emph{Meet everybody} & $\Theta(n^2\log n)$ \\
  \emph{Node Cover} & $\Theta(n\log n)$ \\ 
  \emph{Edge cover} & $\Theta(n^2\log n)$ \\ \hline
\end{tabular}
\end{center}
\caption{Our results for the expected time to convergence of several fundamental probabilistic processes.} \label{tab:basic-proc-full}
\end{table}

\section{Constructing a Global Line}
\label{sec:global-line}

In this section, we study probably the most fundamental network-construction problem, which is the problem of constructing a spanning line. Its importance lies in the fact that a spanning line provides an ordering on the processes which can then be exploited (as shown in Section \ref{sec:gencon}) to simulate a TM and thus to establish universality of our model. We give three different protocols for the spanning line problem each improving on the running time but using more states to this end.

We begin with a generic lower bound holding for all protocols that construct a spanning network.  

\begin{theorem} [Generic Lower Bound]
The expected time to convergence of any protocol that constructs a spanning network, i.e. one in which every node has at least one active edge incident to it, is $\Omega(n\log n)$. Moreover, this is the best lower bound for general spanning networks that we can hope for, as there is a protocol that constructs a spanning network in $\Theta(n\log n)$ expected time.
\end{theorem}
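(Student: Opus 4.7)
The theorem has two parts: a generic $\Omega(n \log n)$ lower bound for any spanning-network constructor, and a matching upper bound witnessing that this lower bound is the best one possible (without further assumptions on the structure of the target network). I would attack them separately.

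For the lower bound, the plan is to reduce to the \emph{node cover} process already analyzed in Proposition \ref{pro:node-cover}. The key observation is entirely protocol-independent: in any execution that stabilizes to a spanning network, every node $u$ must have at least one active edge incident to it, and since all edges are initially inactive and an edge can only change state when both its endpoints interact, every node must have participated in at least one interaction before stabilization. Thus the time to convergence of any such protocol is bounded below by $T := \max_{u\in V_I} T_u$, where $T_u$ is the step at which $u$ first participates in an interaction. But $T$ is precisely the convergence time of the node cover process under the uniform random scheduler, since in that process a node is labelled $b$ exactly the first time it interacts. By Proposition \ref{pro:node-cover}, $\E[T] = \Theta(n\log n)$, which gives the claimed $\Omega(n\log n)$ lower bound.

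For the upper bound, I would exhibit an extremely simple 2-state protocol with $Q = \{q_0, q_1\}$, $Q_{out}=Q$, and the only two effective rules
\[
(q_0,q_0,0) \to (q_1,q_1,1), \qquad (q_0,q_1,0) \to (q_1,q_1,1).
\]
In words: whenever two nodes interact and at least one of them is still in its initial state, activate the edge between them and mark both endpoints as $q_1$; otherwise do nothing. Correctness is immediate: once all nodes are in state $q_1$, no further rule can fire and hence the output is stable; moreover, each node leaves $q_0$ only via a transition that activates an edge incident to it, so every node ends up with at least one active edge, i.e.\ the output is a spanning network. For the running time, the output-graph can change only in steps where the interacting pair contains at least one $q_0$-node, and such a step coincides with exactly the event that a previously uninvolved node participates in its first interaction. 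Hence the time to convergence equals $\max_u T_u$, which by the same identification with node cover used above is $\Theta(n\log n)$ in expectation.

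There is no real obstacle here: the lower bound is essentially a direct invocation of Proposition \ref{pro:node-cover}, and the upper bound is realized by a 2-state protocol whose analysis coincides with that same process. The only mild care needed is to check that the two effective rules respect the model's convention for specifying $\delta$ (both rules involve the specific tuples $(q_0,q_0,0)$ and $(q_0,q_1,0)$, and all remaining combinations are declared ineffective), and that the output formally stabilizes at the moment the last $q_0$ vanishes rather than only after all nodes have stopped changing in some stronger sense.
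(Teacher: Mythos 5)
Your proposal is correct and follows essentially the same route as the paper: the lower bound is obtained by observing that every node must interact at least once before the output can be a spanning network, hence the convergence time dominates the node cover process of Proposition \ref{pro:node-cover}, and the upper bound is witnessed by exactly the same 2-state protocol (the paper writes it as $(a,a,0)\ra(b,b,1)$, $(a,b,0)\ra(b,b,1)$) whose convergence time coincides with node cover. No substantive difference.
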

\begin{proof}
Consider the time at which the last edge is activated. Clearly, by that time, all nodes must have some active edge incident to them which implies that every node must have interacted at least once. Thus the running time is lower bounded by a node cover, which by Proposition \ref{pro:node-cover} takes an expected number of $\Theta(n\log n)$ steps.

Now consider the variation of node cover which in every transition that is effective w.r.t. node-states additionally activates the corresponding edge. In particular, the protocol consists of the rules $(a,a,0)\ra (b,b,1)$ and $(a,b,0)\ra (b,b,1)$. Clearly, when every node has interacted at least once, or equivalently when all $a$s have become $b$s, every node has an active edge incident to it, and thus the resulting stable network is spanning. The reason is that all nodes are $a$s in the beginning, every node at some point is converted to $b$, and every such conversion results in an activation of the corresponding edge. As a node-cover completes in $\Theta(n\log n)$ steps, the above protocol takes $\Theta(n\log n)$ steps to construct a spanning network.
\qed
\end{proof}

We now give an improved lower bound for the particular case of constructing a spanning line.

\begin{theorem} [Line Lower Bound]
The expected time to convergence of any protocol that constructs a spanning line is $\Omega(n^2)$.
\end{theorem}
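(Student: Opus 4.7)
The strategy is to lower bound $E[T]$ by examining the very last interaction of the execution. Let $L$ denote the final spanning line and let $(u,v)$ be the pair selected at step $T$, so that the transition at this step takes the output from some non-line graph $G = G(C_{T-1})$ to $L$. Since a single interaction modifies only the states of $u$, $v$ and the edge $uv$, the graph $G$ must differ from $L$ in a highly localized way, which I plan to exploit.

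Consider first the principal case in which neither $u$ nor $v$ changes its output status at step $T$. Then $G$ and $L$ differ exactly in the status of the edge $uv$, so either (i) $G = L\setminus\{uv\}$, a disjoint union of two paths that the interaction glues together by activating $uv$, or (ii) $G = L\cup\{uv\}$, a graph with a single extra cycle whose spurious edge is removed by the interaction. Let $S(C_{T-1})$ be the set of pairs whose interaction at $C_{T-1}$ produces a spanning line as output. A direct check shows $|S(C_{T-1})|\le 4$ in case (i) (the four endpoint-to-endpoint pairings between the two paths), and $|S(C_{T-1})| = O(1)$ in case (ii) whenever at least one of $u,v$ is internal in $L$, since their $L$-degrees dictate which cycle edges can be deleted without producing a degree-$3$ vertex. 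Because the uniform scheduler picks any prescribed pair with probability $2/(n(n-1))$, the expected number of steps the execution spends at $C_{T-1}$ before a completing pair is selected is at least $\Omega(n^2/|S(C_{T-1})|) = \Omega(n^2)$, which alone forces $E[T] = \Omega(n^2)$.

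The delicate remaining sub-cases are (a) that in case (ii) both $u$ and $v$ are endpoints of $L$, so $G$ is a Hamiltonian cycle and $|S(C_{T-1})|$ can be as large as $n$, and (b) that one or both interacting nodes change output status at step $T$. For (b), I would argue that if $u$ joins the output set at step $T$ then its pre-existing active incident edges in $C_{T-1}$ must already match, up to the single edge $uv$, the target neighbourhood of $u$ in $L$ (of size $1$ or $2$), and this tightly restricts the admissible partners $v$ back to an $O(1)$-sized set in all but degenerate configurations. Sub-case (a) is handled by observing that, in order to reach a Hamiltonian-cycle output, the protocol must activate a specific $n$-edge structure and maintain it, which forces the scheduler to realise $n$ prescribed interactions and costs $\Omega(n^2)$ expected steps by a coupon-collector-style argument. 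The main obstacle is packaging these cases uniformly; the cleanest route I see is a potential argument based on the edit distance from the current output graph to the nearest spanning line, showing that this distance can only reach zero through some ``bottleneck'' configuration of the forms described above, which the execution must visit with constant probability.
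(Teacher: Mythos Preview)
Your overall strategy---examine the last modifying step and case-split on whether it activates or deactivates an edge---is exactly the paper's, and your principal cases (i) and (ii) with at least one internal endpoint match it almost verbatim. The gaps lie in your two ``delicate'' sub-cases.

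For (a), the coupon-collector argument does not go through: the $n$ edges of the Hamiltonian cycle are not prescribed in advance but are determined by the execution, and the expected time for the uniform scheduler to hit \emph{some} $n$ distinct edges (rather than $n$ fixed ones) is only $\Theta(n)$, not $\Theta(n^2)$. The paper closes this case by recursing one level: let $t'<t$ be the last edge modification that produced the ring. Just before $t'$ the active graph is either a Hamiltonian path (two degree-$1$ nodes, if $t'$ was an activation) or a ring plus a chord (two degree-$3$ nodes, if $t'$ was a deactivation); in either situation exactly one pair can complete the ring, so that single configuration already contributes $\Theta(n^2)$ expected waiting time.

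Sub-case (b) is an artefact of your framing and can be eliminated rather than analysed. Instead of taking $T$ to be the last step at which the \emph{output graph} changes, take $t$ to be the last step at which an \emph{edge state} changes. Since the output eventually stabilises to a spanning line on all $n$ nodes, the active-edge graph from step $t$ onward \emph{is} that line, regardless of any intermediate node output-status transitions. Then the configuration just before $t$ differs from a spanning line in precisely one edge, and you are cleanly back in cases (i)/(ii) with no status-change residue to worry about. This is exactly the paper's formulation, and adopting it removes (b) entirely while leaving (a) as the only nontrivial remainder, handled as above.
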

\begin{proof}
Take any protocol $\ca$ that constructs a spanning line and any execution of $\ca$ on $n$ nodes. Consider the step $t$ at which $\ca$ performed the last modification of an edge. Observe that the construction after step $t$ must be a spanning line. We distinguish two cases. 

(i) The last modification was an activation. In this case, the construction just before step $t$ was either a line on $n-1$ nodes and an isolated node or two disjoint lines spanning all nodes. To see this, observe that these are the only constructions that can be turned into a line by a single additional activation. In the first case, the probability of obtaining an interaction between the isolated node and one of the endpoints of the line is $4/[n(n-1)]$ and in the second the probability of obtaining an interaction between an endpoint of one line and an endpoint of the other line is $8/[n(n-1)]$. In both cases, the expected number of steps until the last edge becomes activated is $\Omega(n^2)$. 

(ii) The last modification was a deactivation. This implies that the construction just before step $t$ was a spanning line with an additional active edge between two nodes, $u$ and $v$, that are not neighbors on the line. If one of these nodes, say $u$, is an internal node, then $u$ has degree 3 and we can only obtain a line by deactivating one of the edges incident to $u$. Clearly, the probability of getting one of these edges is $6/[n(n-1)]$ and it is even smaller if both nodes are internal. Thus, if at least one of $u$ and $v$ is internal, the expected number of steps is $\Omega(n^2)$. It remains to consider the case in which the construction just before step $t$ was a spanning ring, i.e. the case in which $u$ and $v$ are the endpoints of the spanning line. In this case, consider the step $t^\prime < t$ of the last modification of an edge that resulted in the ring. To this end notice that all nodes of a ring have degree 2. If $t^\prime$ was an activation then exactly two nodes had degree 1 and if $t^\prime$ was a deactivation then two nodes had degree 3. In both cases, there is a single interaction that results in a ring, the probability of success is $2/[n(n-1)]$ and the expectation is again $\Omega(n^2)$. 
\qed
\end{proof}

We proceed by presenting protocols for the spanning line problem.

\subsection{1st Protocol}
\label{subsec:simple-global-line}

We present now our simplest protocol for the spanning line problem.

\floatname{algorithm}{Protocol}
\renewcommand{\algorithmiccomment}[1]{// #1}
\begin{algorithm}[!h]
  \caption{\emph{Simple-Global-Line}}\label{prot:gline}
  \begin{algorithmic}
    \medskip
    \State $Q=\{q_0,q_1,q_2,l,w\}$
    \State $\delta$: 
    \begin{align*}
    (q_0,q_0,0)&\ra (q_1,l,1)\\
    (l,q_0,0)&\ra (q_2,l,1)\\
    (l,l,0)&\ra (q_2,w,1)\\
    (w,q_2,1)&\ra (q_2,w,1)\\
    (w,q_1,1)&\ra (q_2,l,1)
    \phantom{\hspace{10cm}}
    \end{align*}
    \State \Comment {All transitions that do not appear have no effect}
  \end{algorithmic}
\end{algorithm}

\noindent\emph{Protocol Simple-Global-Line:} $(q_0,q_0,0)\ra (q_1,l,1)$, $(l,q_0,0)\ra (q_2,l,1)$, $(l,l,0)\ra (q_2,w,1)$, $(w,q_2,1)\ra (q_2,w,1)$, $(w,q_1,1)\ra (q_2,l,1)$

\begin{theorem} \label{the:gline}
Protocol \emph{Simple-Global-Line} constructs a spanning line. It uses 5 states and its expected running time is $\Omega(n^4)$ and $O(n^5)$.
\end{theorem}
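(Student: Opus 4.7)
My first step is to establish the structural invariant that, at every reachable configuration, the set of active edges partitions the non-$q_0$ nodes into disjoint simple paths, each of which is in one of two modes: \emph{stable} (endpoints $q_1$ and $l$, interior nodes $q_2$) or \emph{walking} (both endpoints $q_1$, exactly one $w$ in the interior, remaining interiors $q_2$). The initial configuration (all $q_0$, no active edges) trivially satisfies this. Preservation under each rule is a one-line check: rules 1 and 2 create or grow a stable path by consuming $q_0$s; rule 3 joins two stable paths by their $l$-endpoints and places a walker at the newly activated edge, yielding a walking path; rule 4 advances the walker by one step along its host path; rule 5 absorbs the walker at a $q_1$ endpoint, restoring stable mode. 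Let $c$ be the number of components (isolated $q_0$s plus paths). Rules 1, 2, 3 decrease $c$ by 1 and rules 4, 5 preserve it. A short fairness argument (if every remaining component hosts a walker, fairness forces all walkers to terminate via rule 5, after which two $l$s eventually meet and rule 3 fires) shows $c \to 1$ in finite time; a final application of rule 5 leaves a stable-mode spanning line.

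\textbf{Upper bound $O(n^5)$.} The only edge-state changes are activations (rules 1, 2, 3), so at most $n-1$ activations occur in any execution. The walker spawned by each rule-3 firing performs a simple random walk on its host path: it moves only via rule 4 or rule 5, each corresponding to one of its two active incident edges being selected by the scheduler. Each walk-step therefore costs $\Theta(n^2)$ expected scheduler-steps, and the gambler's-ruin bound gives $O(n^2)$ walk-steps until absorption on a path of length at most $n$, so each walker costs $O(n^4)$ expected scheduler-steps. Summing over at most $n-1$ walkers, and absorbing the lower-order cost of the activations themselves (each $O(n^2)$ by a meeting-time argument analogous to Proposition~\ref{pro:one-to-one}), yields an $O(n^5)$ bound on the total expected convergence time.

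\textbf{Lower bound $\Omega(n^4)$.} Since $w \notin Q_{out}$, the output graph excludes the walker node and is therefore not a spanning line while any walker is present; in particular, the walker spawned by the final rule-3 firing must be absorbed before the output stabilizes. This walker sits on a host path of length $\Theta(n)$ (the merged component covers almost the entire population) and starts exactly at the merge edge, i.e.\ at distance $k_1$ from one $q_1$ endpoint and $k_2$ from the other, where $k_1,k_2$ are the lengths of the two pre-merge components. I would then argue that with constant probability both $k_1$ and $k_2$ are $\Omega(n)$, in which case the gambler's-ruin hitting time is $k_1 k_2 = \Omega(n^2)$ walk-steps, each costing $\Omega(n^2)$ scheduler-steps, for a total of $\Omega(n^4)$ expected scheduler-steps until the output can become a spanning line.

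\textbf{Main obstacle.} The one genuinely non-trivial ingredient is the probabilistic claim that, with constant probability, the final rule-3 merge involves two paths each of length $\Omega(n)$. The argument I would push is that every stable-mode line has exactly one $l$ endpoint, so the propensity of any line to participate in the next rule-2 or rule-3 event is independent of its size; the coalescence of lines therefore behaves as a size-oblivious random merging process, and standard calculations for such processes show that when only two line components remain their sizes are of linear order with constant probability. Once this is established, every other ingredient (the invariant, the per-walker $O(n^4)$ bound, and the gambler's-ruin lower bound) is routine.
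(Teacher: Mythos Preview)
Your correctness argument and the $O(n^5)$ upper bound match the paper's reasoning essentially line for line.

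The lower bound is where your approach diverges, and it has two genuine gaps. First, your assertion that $w\notin Q_{out}$ is unsupported; the paper's own correctness argument explicitly lists, among the output-stable target configurations, spanning lines whose unique leader is an \emph{internal} node in state $w$. With $w\in Q_{out}$, once the last rule-3 merge fires the active-edge set is already a spanning line on all of $V_I$ and never changes again, so the output has stabilized \emph{before} the final walker is absorbed. Your entire $\Omega(n^4)$ contribution comes from that final walk, which then does not count toward the running time. Second, even setting the $Q_{out}$ issue aside, your ``size-oblivious coalescent'' heuristic is not a proof: a walking line is unavailable for merging, and the walking time grows with line length, so large lines are systematically busy longer than small ones; meanwhile rules 1 and 2 keep consuming $q_0$s in parallel. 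The process is not Kingman's coalescent, and it is not clear that the last two components are both of linear size with constant probability.

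The paper's route avoids both pitfalls. It first shows, via a Chernoff bound, that w.h.p.\ $\Theta(n)$ distinct length-1 lines are created during the execution. It then argues by dichotomy. Case (i): some merge is between two lines each of length $\Theta(n)$; the resulting walk alone costs $\Theta(n^4)$. Case (ii): in every merge at least one participant has length $d_{\max}=o(n)$. Here the paper tracks a single dominant line $l_1$ once its length enters a window around $\Theta(n)$; since every partner is short, $l_1$ must undergo at least $\Theta(n)/d_{\max}$ successive merges to absorb the remaining $\Theta(n)$ mass, and these merges are necessarily \emph{sequential} because each walk must finish before $l_1$ regains its $l$-endpoint. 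Summing the walk costs $n^2\cdot |l_1|\cdot d_i$ with $|l_1|=\Theta(n)$ and $\sum_i d_i=\Theta(n)$ gives $\Omega(n^4)$. Crucially, all but the last of these walks must complete before the final merge can even fire, so the bound is insensitive to whether the last walk is counted, and no claim about the sizes at the final merge is needed.
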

\begin{proof}
We begin by proving that, for any number of processes $n\geq 2$, the protocol correctly constructs a spanning line under any fair scheduler. Then we study the running time of the protocol under the uniform random scheduler.

\emph{Correctness.} In the initial configuration $C_0$, all nodes are in state $q_0$ and all edges are inactive, i.e in state 0. Every configuration $C$ that is reachable from $C_0$ consists of a collection of active lines and isolated nodes. Additionally, every active line has a unique leader which either occupies an endpoint and is in state $l$ or occupies an internal node, is in state $w$, and moves along the line. Whenever the leader lies on an endpoint of its line, its state is $l$ and whenever it lies on an internal node, its state is $w$. Lines can expand towards isolated nodes and two lines can connect their endpoints to get merged into a single line (with total length equal to the sum of the lengths of the merged lines plus one). Both of these operations only take place when the corresponding endpoint of every line that takes part in the operation is in state $l$. 

We have to prove two things: (i) there is a set $\cs$ of output-stable configurations whose active network is a spanning line, (ii) for every reachable configuration $C$ (i.e. $C_0\rsa C$) it holds that $C\rsa C_s$ for some $C_s\in \cs$. For (i), consider a spanning line, in which the non-leader endpoints are in state $q_1$, the non-leader internal nodes in $q_2$, and there is a unique leader either in state $l$ if it occupies an endpoint or in state $w$ if it occupies an internal node. For (ii), note that any reachable configuration $C$ is a collection of active lines with unique leaders and isolated nodes. We present a (finite) sequence of transitions that converts $C$ to a $C_s\in\cs$. If there are isolated nodes, take any line and if its leader is internal make it reach one of the endpoints by selecting the appropriate interactions. Then successively apply the rule $(l,q_0,0)\ra (q_2,l,1)$ to expand the line towards all isolated nodes. Thus we may now w.l.o.g. consider a collection of lines without isolated nodes. By successively applying the rule $(l,l,0)\ra (q_2,w,1)$ to pairs of lines while always moving the internal leaders that appear towards an endpoint it is not hard to see that the process results in an output-stable configuration from $\cs$, i.e. one whose active network is a spanning line. 

\emph{Running Time Upper Bound.} For the running time upper bound, we have an expected number of $O(n^2)$ steps until another progress is made (i.e. for another merging to occur given that at least two $l$-leaders exist) and $O(n^4)$ steps for the resulting random walk (walk of state $w$ until it reaches one endpoint of the line) to finish and to have again the system ready for progress. $O(n^4)$ follows because we have a random walk on a line with two absorbing barriers (see e.g. \cite{Fe68} pages 348-349) delayed on average by a factor of $O(n^2)$. As progress must be made $n-2$ times, we conclude that the expected running time of the protocol is bounded from above by $(n-2)[O(n^2)+O(n^4)]=O(n^5)$.

We next prove that we cannot hope to improve the upper bound on the expected running time by a better analysis by more than a factor of $n$. For this we first prove that the protocol w.h.p. constructs $\Theta(n)$ different lines of length 1 during its course. A set of $k$ disjoint lines implies that $k$ distinct merging processes have to be executed in order to merge them all in a common line and each single merging results in the execution of another random walk. We exploit all these to prove the desired $\Omega(n^4)$ lower bound.

Recall that initially all nodes are in $q_0$. Every interaction between two $q_0$-nodes constructs another line of length 1. Call the random interaction of step $i$ a \emph{success} if both participants are in $q_0$. Let $R$ be the r.v. of the number of nodes in state $q_0$; i.e. initially $R=n$. Note that, at every step, $R$ decreases by at most 2, which happens only in a success (it may also remain unchanged, or decrease by 1 if a leader expands towards a $q_0$). Let the r.v. $X_i$ be the number of successes up to step $i$ and $X$ be the total number of successes throughout the course of the protocol (e.g. until no further successes are possible or until stabilization). Our goal is to calculate the expectation of $X$ as this is equal to the number of distinct lines of length 1 that the protocol is expected to form throughout its execution (note that these lines do not necessarily have to coexist). Given $R$, the probability of success at the current step is $p_R=[R(R-1)]/[n(n-1)]\geq (R-1)^2/n^2$. As long as $R\geq (n/2)+1=z$ it holds that $p_R\geq (n^2/4)/n^2=1/4$. Moreover, as $R$ decreases by at most 2 in every step, there are at least $(n-z)/2=[(n/2)-1]/2=(n/4)-1/2$ steps until $R$ becomes less or equal to $z$. Thus, our process \emph{dominates} a Bernoulli process $Y$ with $(n/4)-1/2$ trials and probability of success $p^\prime=1/4$ in each trial. For this process we have $\E[Y]=[(n/4)-1/2](1/4)=(n/16)-1/8=\Theta(n)$.

We now exploit the following Chernoff bound (cf. \cite{MR95}, page 70) establishing that w.h.p. $Y$ does not deviate much below its mean $\mu=\E[Y]$:\\
\emph{
Chernoff Bound. Let $Y_1, Y_2,\ldots, Y_t$ be independent Poisson trials such that, for $1\leq i\leq t$, $\P[Y_i=1]=p_i$, where $0<p_i<1$. Then, for $Y=\sum_{i=1}^t Y_i$, $\mu=\E[Y]=\sum_{i=1}^t p_i$, and $0<\delta<1$,}
\begin{equation*}
\P[Y<(1-\delta)\mu]<\exp(-\mu\delta^2/2).
\end{equation*}

Additionally, it holds that $\exp(-\mu\delta^2/2)=\epsilon\Leftrightarrow \delta=\sqrt{\frac{2\ln 1/\epsilon}{\mu}}$. Thus $\exp(-\mu\delta^2/2)=n^{-c}$ implies $\delta^2=\frac{2c\ln n}{\mu}=$ $\frac{2c\ln n}{(1/8)(n/2-1)}=$ $\frac{16c\ln n}{n/2-1}\Rightarrow$ $\delta=\sqrt{\frac{16c\ln n}{n/2-1}}\Rightarrow$
\begin{align*}
(1-\delta)\mu&=\frac{1}{8}\left(1-\sqrt{\frac{16c\ln n}{n/2-1}}\right)\left(\frac{n}{2}-1\right) > \frac{1}{16}\left(n - 2\sqrt{cn\ln n} - 2\right) = \Theta(n).
\end{align*}

So, for all $c=O(1)$,
\begin{align*}
\P[Y<\frac{1}{16}\left(n - 2\sqrt{cn\ln n} - 2\right)]&<n^{-c}\Rightarrow\\
\P[Y\geq\frac{1}{16}\left(n - 2\sqrt{cn\ln n} - 2\right)]&>1-n^{-c}\Rightarrow\\
\P[Y=\Theta(n)]&>1-n^{-c}
\end{align*}
and as $X$ dominates $Y$, we have $\P[X=\Theta(n)]>1-n^{-c}$. In words, w.h.p. we expect $\Theta(n)$ lines of length 1 to be constructed by the protocol.

Now, given that $X=\Theta(n)$, we distinguish two cases: (i) At some point during the course of the protocol two lines both of length $\Theta(n)$ get merged. In this case, the corresponding random walk takes on average $\Theta(n^2)$ transitions involving the leader and on average the leader is selected every $\Theta(n^2)$ steps to interact with one of the 2 active edges incident to it. That is, the expected number of steps for the completion of such a random walk is $\Theta(n^4)$ and the expected running time of the protocol is $\Omega(n^4)$ (ii) In every merging process, at least one of the two participating lines has length at most $d_{max}=o(n)$. We have already shown that the protocol w.h.p. forms $k=\Theta(n)$ distinct lines of length 1. Consider now the interval $I=\{k/2-d_{max}+1,\ldots ,k/2\}$. As $h=\Theta(n)>d_{max}$ for all $h\in I$, only a single line can ever have length $h\in I$ and one, call it $l_1$, will necessarily fall in this interval due to the fact that the length of $l_1$ will increase by at most $d_{max}$ in every merging until it becomes $n$. Consider now the time at which $l_1$ has length $h\in I$. As the total length due to lines of length 1 (ever to appear) is $k$ and the length of $l_1$ is $h$ there is still a remaining length of at least $k-h\geq k-k/2=k/2=\Theta(n)$ to be merged to $l_1$. As the maximum length of any line different than $l_1$ is $d_{max}$, $l_1$ will get merged to the $k-h$ remaining length via at least $j=\Theta(n)/d_{max}$ distinct mergings with lines of length at most $d_{max}$. These mergings, and thus also the resulting random walks, cannot occur in parallel as all of them share $l_1$ as a common participant (and a line can only participate in one merging at a time). Let $d_i$ denote the length of the $i$-th line merged to $l_1$, for $1\leq i\leq j$. If $l_1$ has length $d(l_1)$ just before the $i$-th merging, then the expected duration of the resulting random walk is $n^2\cdot d(l_1)\cdot d_i$ and the new $l_1$ resulting from merging will have length $d(l_1)+d_i$. Let $Y$ denote the duration of all random walks, and $Y_i$, $1\leq i\leq j$, the duration of the $i$-th random walk. In total, the expected duration of all random walks resulting from the $j$ mergings of $l_1$ is
\begin{align*}
\E[Y]&=\E[\sum_{i=1}^j Y_i]=\sum_{i=1}^j \E[Y_i]\\
&=\sum_{i=1}^j n^2(h+d_1+\ldots+d_{i-1})d_i\\
&\geq n^2\sum_{i=1}^j hd_i= n^2h\sum_{i=1}^j d_i\\
&=n^2\Theta(n)\Theta(n)\\
&=\Theta(n^4).
\end{align*}
The fifth equality follows from the fact that $\sum_{i=1}^j d_i=k-h=\Theta(n)$. We conclude that the expected running time of the protocol is also in this case $\Omega(n^4)$.

Now if we define the r.v. $W$ to be the total running time of the protocol (until convergence), by the law of total probability and for every constant $c\geq 1$, we have that:
\begin{align*}
\E[W]&=\E[W\mid X=\Theta(n)]\cdot\P[X=\Theta(n)]+\E[W\mid X=o(n)]\cdot\P[X=o(n)]\\
&\geq \E[W\mid X=\Theta(n)]\cdot\P[X=\Theta(n)] >\Theta(n^4)(1-n^{-c})=\Theta(n^4-n^{4-c})\\
&=\Theta(n^4).
\end{align*}
Thus, the expected running time of the protocol is $\Omega(n^4)$.
\qed
\end{proof}

\subsection{2nd Protocol}
\label{subsec:gline2}

The random walk approach followed in Protocol \ref{prot:gline} takes time, thus a straightforward attempt for improvement is to replace the random walk merging process with some more ``deterministic'' merging. In Protocol \ref{prot:gline2}, the random walk rules 3-5 of Protocol \ref{prot:gline} have been replaced by a more ``deterministic'' procedure.  

\floatname{algorithm}{Protocol}
\renewcommand{\algorithmiccomment}[1]{// #1}
\begin{algorithm}[!h]
  \caption{\emph{Intermediate-Global-Line}}\label{prot:gline2}
  \begin{algorithmic}
    \medskip
    \State $Q=\{q_0,q_1,q_2,l,\bar{w},w_1,w_2,w_3\}$
    \State $\delta$: 
    \begin{align*}
    (q_0,q_0,0)&\ra (q_1,l,1)\\
    (l,q_0,0)&\ra (q_2,l,1)\\
    (l,l,0)&\ra (\bar{w},w_1,1)\\
    (w_1,q_2,1)&\ra (w_1,w_1,1)\\
    (w_1,q_1,1)&\ra (w_2,q_1,1)\\
    (w_2,w_1,1)&\ra (w_2,w_2,1)\\
    (w_2,\bar{w},1)&\ra (w_3,q_2,1)\\
    (w_3,w_2,1)&\ra (q_2,w_3,1)\\
    (w_3,q_1,1)&\ra (q_2,l,1)
    \phantom{\hspace{10cm}}
    \end{align*}
  \end{algorithmic}
\end{algorithm}

\begin{theorem} \label{the:gline2}
Protocol \emph{Intermediate-Global-Line} constructs a spanning line. It uses 8 states and its expected running time under the uniform random scheduler is $\Omega(n^3)$ and $O(n^4)$.
\end{theorem}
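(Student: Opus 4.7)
The plan is to parallel the analysis in Theorem~\ref{the:gline}: first establish an invariant characterising the reachable configurations to prove correctness, then separately bound the expected running time from above and below under the uniform random scheduler.

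\emph{Correctness.} I would show by induction on execution length that every reachable configuration decomposes into (i) isolated $q_0$-nodes, (ii) \emph{normal} lines whose endpoints lie in $\{q_1,l\}$, whose internal nodes are in state $q_2$, and which carry at most one $l$ located at an endpoint, and (iii) \emph{merging} lines freshly initiated by a firing of rule~3 and currently evolving through one of the three wave phases determined by rules~4--9. The central lemma is that a merging line cannot be perturbed: it carries no $l$, so neither rule~2 nor a new rule~3 can act on it, and no rule of the protocol fires on any pair $(x,y,c)$ with $x\in\{\bar w,w_1,w_2,w_3\}$ unless $y$ lies on the same merging line and $c=1$; this is a straightforward check of all nine rules. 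Consequently each merge runs deterministically to completion, ending with rule~9 restoring an $l$-endpoint and producing a normal line of size $|L_1|+|L_2|$. To finish correctness I would exhibit, from any reachable configuration, a finite sequence of transitions that drives each ongoing merge to completion, absorbs isolated $q_0$s via repeated rule~2 firings, and then merges surviving normal lines pairwise until a single spanning normal line remains; this is output-stable because rules~1--3 all require either two $q_0$s or two $l$s, both absent.

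\emph{Upper bound.} Because a merging line has no $l$, neither rule~2 nor rule~3 acts on it until rule~9 regenerates its leader. I therefore split the expected time into (a) the edge-changing firings of rules~1--3 and (b) wave advances within merges. For (a), the component count must decrease from $n$ to $1$, so there are at most $n-1$ such firings, each an interaction of a specific type with probability $\Theta(1/n^2)$ of being chosen per step, contributing $O(n^3)$. For (b), each merge contributes $3(|L_{2,i}|-1)+O(1)$ wave advances, every such advance being scheduled on the unique front edge of the current wave with probability $2/[n(n-1)]$ and hence with expected waiting time $\Theta(n^2)$. Using the trivial bound $\sum_i |L_{2,i}|\le (n-1)n = O(n^2)$ gives $O(n^2)$ total wave advances and $O(n^4)$ total wave time, which dominates.

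\emph{Lower bound.} I would focus on the \emph{last} firing of rule~3. Because no further rule~3 fires afterwards and the scheduler visits every edge infinitely often, no rule~1 can fire after this moment either (a length-$1$ line born via rule~1 would inevitably produce a later merge), so at most one isolated $q_0$ remains; hence the two lines $A,B$ entering the last merge satisfy $|A|+|B|\ge n-1$, giving $\max(|A|,|B|)\ge (n-1)/2$. The model's tiebreaking on $(l,l,0)\ra(\bar w,w_1,1)$ assigns the walker side $L_2$ uniformly at random between $A$ and $B$, so $\P(|L_2|=\Omega(n))\ge 1/2$. Conditioned on this event together with a surviving $q_0$-node, the absorption of that $q_0$ by rule~2 cannot fire until rule~9 regenerates the merged line's $l$-endpoint, which in turn requires all three wave phases to complete; phase~1 alone demands $\Omega(n)$ advances of a unique front edge, each with expected duration $\Theta(n^2)$, contributing $\Omega(n^3)$ to the total running time.

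\emph{Main obstacle.} The trickiest step is the correctness invariant: one must exhaustively verify that no state pair accidentally matches an unlisted rule and that concurrent merges on different lines cannot interfere, because each of the four intermediate states has distinct allowed interaction partners. A secondary technical point is the lower bound in the sub-case where zero $q_0$-nodes survive the last merge and the running time ends with $t_3$ itself rather than with a post-wave rule~2: here one must argue that the wave phases accumulated along $A$'s and $B$'s prior merging histories, which must serially complete before both sides can simultaneously present $l$-endpoints for the final rule~3, already sum in expectation to $\Omega(n^3)$.
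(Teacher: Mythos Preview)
Your correctness argument and the $O(n^4)$ upper bound are sound and essentially match the paper's (the paper just says ``same as Theorem~\ref{the:gline}, but a merge now costs $\max\{d_1,d_2\}$ instead of $d_1d_2$''). Your accounting via $\sum_i |L_{2,i}|\le n(n-1)$ is a clean way to reach the same $O(n^4)$.

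The lower bound is where your approach diverges from the paper and where the real gap lies. The paper does \emph{not} focus on the last rule-3 firing; instead it reuses the framework of Theorem~\ref{the:gline}: show that w.h.p.\ $\Theta(n)$ length-1 lines are created, then split into (i) some merge involves two lines each of length $\Theta(n)$, giving a single $\Theta(n^3)$ wave, or (ii) every merge has one side of length $o(n)$, in which case the serial accounting $\sum_i n^2(h+d_1+\cdots+d_{i-1})\ge n^2 h j=\Theta(n^4)/o(n)$ forces $\Omega(n^3)$. Your route via the final merge runs into two concrete problems.

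First, the inference ``no rule~1 fires after $t_3$, hence at most one $q_0$ remains at $t_3$'' is false. Several $q_0$'s may sit at $t_3$ and still never fire rule~1 afterwards: once the wave regenerates the unique $l$, that $l$ can absorb them one by one via rule~2 before any two of them happen to meet. So you cannot conclude $|A|+|B|\ge n-1$ deterministically, and worse, conditioning on ``$t_3$ is the last rule~3'' couples with the tiebreak coin: a long $|L_2|$ delays the $l$'s reappearance, giving the surviving $q_0$'s more time to pair up and trigger a later rule~3, so the conditioning may bias $|L_2|$ downward.

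Second, the ``secondary technical point'' (zero $q_0$'s at $t_3$) is not a loose end but the heart of the matter. If no $q_0$ survives, the output graph is already fixed at $t_3$ (rules 4--9 do not touch edges), so you must lower-bound $\E[t_3]$ directly. Your suggestion---sum the prior wave phases along $A$ and $B$'s histories---is exactly what the paper's case~(ii) argument does, and it genuinely requires the $\Theta(n)$-many-short-lines lemma plus the telescoping/interval trick; it is not a routine check. In short, the last-merge viewpoint does not shortcut the accounting; to close the argument you end up needing the paper's machinery anyway.
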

\begin{proof}
The proof idea is precisely the same as that of Theorem \ref{the:gline}. The only difference is that now merging two lines of lengths $d_1$ and $d_2$ takes time $\max\{d_1,d_2\}\simeq d_1+d_2$ (asymptotically) instead of the $d_1\cdot d_2$ of the random walk. Thus, for the upper bound we need $O(n)$ mergings each taking an average of $O(n^3)$ to complete in the worst case. $O(n^3)$ holds because $O(n)$ steps are performed by the merging process in the worst-case and the process must wait an average of $O(n^2)$ until its leader is selected to interact over one of its active edges. Thus the dominating factor is now $O(n^4)$.

For the lower bound we again have w.h.p. $\Theta(n)$ lines of length 1 and for cases (i), (ii) as above we have: (i) merging two lines both of length $\Theta(n)$ takes time $\Theta(n^3)$. (ii)
\begin{align*}
\E[Y]&=\E[\sum_{i=1}^j Y_i]=\sum_{i=1}^j \E[Y_i]=\sum_{i=1}^j n^2(h+d_1+\ldots+d_{i-1})\\
&\geq n^2 hj=n^2\Theta(n)\Theta(n)/d_{max}=\Theta(n^4)/o(n).
\end{align*}
\qed
\end{proof}

\subsection{3rd Protocol}
\label{subsec:third-line}

We now give our fastest protocol (Protocol \ref{prot:gline3}) for the global line construction. The main difference between this and the previous protocols is that we now totally avoid mergings as they seem to consume much time. As shown above, even a merging in which a linear number of steps must be performed needs $\Theta(n^3)$ time as every step takes an average of $\Theta(n^2)$ time. Then a linear number of mergings naturally require an average of $\Theta(n^4)$ time, which is quite big. The intuition behind the following improvement is that when two disjoint lines interact, instead of merging, the corresponding leaders play a game in which only one survives. The winner grows by one towards the other line and the loser sleeps. A sleeping line cannot increase any more and only loses nodes by lines that are still awake. A single leader is guaranteed to always win and this occurs quite fast. Then the leader makes progress (by one) in most interactions and every such progress is in turn quite fast.

\floatname{algorithm}{Protocol}
\renewcommand{\algorithmiccomment}[1]{// #1}
\begin{algorithm}[!h]
  \caption{\emph{Fast-Global-Line}}\label{prot:gline3}
  \begin{algorithmic}
    \medskip
    \State $Q=\{q_0,q_1,q_2,q_2^\prime,l,l^\prime,l^\dprime,f_0,f_1\}$
    \State $\delta$: 
    \begin{align*}
    (q_0,q_0,0)&\ra (q_1,l,1)\\
    (l,q_0,0)&\ra (q_2,l,1)\\
    (l,l,0)&\ra (q_2^\prime,l^\prime,1)\\
    (l^\prime,q_2,1)&\ra (l^\dprime,f_1,0)\\
    (l^\prime,q_1,1)&\ra (l^\dprime,f_0,0)\\
    (l^\dprime,q_2^\prime,1)&\ra (l,q_2,1)\\
    (l,f_0,0)&\ra (q_2,l,1)\\
    (l,f_1,0)&\ra (q_2^\prime,l^\prime,1)
    \phantom{\hspace{10cm}}
    \end{align*}
  \end{algorithmic}
\end{algorithm}

\begin{theorem} \label{the:gline3}
Protocol \emph{Fast-Global-Line} constructs a spanning line. It uses 9 states and its expected running time under the uniform random scheduler is $O(n^3)$.
\end{theorem}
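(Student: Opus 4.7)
The plan is to prove correctness first and then the $O(n^3)$ running time, following the template of Theorems \ref{the:gline} and \ref{the:gline2}.

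For correctness, I would identify the structural invariant that every configuration reachable from $C_0$ is a collection of node-disjoint active lines together with some isolated $q_0$ nodes, where each line has a unique leader in one of the states $l, l^\prime, l^\dprime$, and at most one pair of lines is engaged in an in-progress fight (captured by the transient states $q_2^\prime, l^\prime, l^\dprime, f_0, f_1$). A fight proceeds in rounds: rule 3 or rule 8 starts a round, rule 4 or rule 5 detaches an endpoint of the losing line, rule 6 attaches it to the winner's endpoint, and rule 7 absorbs the final singleton $f_0$ once the losing line is exhausted. From any reachable configuration I would exhibit a finite deterministic transition sequence to a spanning line (first consume all $q_0$s via rules 1 and 2; then complete any in-progress fight; then iteratively start and complete fights until only one line remains), and fairness upgrades this reachability to almost-sure convergence.

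For the running time, I would decompose the expected total time as $\E[T] = \E[T_1] + \E[T_2]$, where $T_1$ is the time spent firing rules 1 and 2 (absorbing $q_0$s) and $T_2$ the time spent in the fight rules 3--8. A one-way-epidemic-style argument on the number of non-$q_0$ nodes (cf.\ Proposition \ref{pro:one-way-ep}) gives $\E[T_1] = O(n \log n)$. For $T_2$, the key observation is that each fight round is a short sequence of firings (one of rules 3 or 8, one of rules 4 or 5, one rule 6, and possibly one rule 7), and rules 4, 5, 6, 7 each demand a specific pair of adjacent nodes and hence fire in $\Theta(n^2)$ expected time each. Since a fight whose losing line has $d$ nodes consists of exactly $d-1$ rounds, we obtain $\E[T_2] = \Theta(n^2) \cdot \E[R]$, where $R$ is the total number of rounds summed over all fights. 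I would then bound $\E[R] = O(n)$ using the facts that there are at most $\lceil n/2 \rceil - 1$ fight completions (one per eliminated line) and that the loser of each round is chosen by an independent symmetric coin in rules 3 and 8, so large lines are not systematically penalized.

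The main obstacle is the bound $\E[R] = O(n)$. Because fights can be interleaved --- a freed $f_1$ endpoint can be claimed via rule 8 by any node currently in state $l$, not necessarily the winner of the previous round --- a direct counting argument does not suffice. A careful amortized analysis via a potential function on the current multiset of line sizes, chosen so that its expected decrease per round pays for the $\Theta(n^2)$ round cost, would establish the bound. Combined with $\E[T_1] = O(n \log n)$, this yields the claimed total expected running time of $O(n^3)$.
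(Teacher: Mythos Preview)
Your proposal contains several misreadings of the protocol that undermine the running-time argument.

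First, the invariant ``at most one pair of lines is engaged in an in-progress fight'' is false: rule 3 can fire between any two nodes currently in state $l$, and nothing prevents several such pairs from being mid-round simultaneously. This is minor for correctness but matters for how you organise the time analysis.

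Second, and more seriously, rule 8 carries no coin. In $(l,f_1,0)\ra(q_2^\prime,l^\prime,1)$ the roles are fixed: the awake line (the one holding the $l$) always absorbs one node, and the sleeping $f_1$-line always shrinks. Your argument that ``the loser of each round is chosen by an independent symmetric coin in rules 3 and 8, so large lines are not systematically penalised'' therefore rests on a misreading, and with it the intuition for $\E[R]=O(n)$ collapses. The danger is real: a line can win several rule-3 games, grow large, then lose a single rule-3 game and dump a long sleeping line back into the system, which must then be eaten node-by-node via rule 8. Under an adversarial schedule this yields $R=\Theta(n\log n)$ (repeatedly absorb, then lose), so any $O(n)$ bound on $\E[R]$ under the random scheduler would need a genuine probabilistic argument, not a symmetry heuristic. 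Your proposed potential-function route may or may not close this gap, but as stated it is a hope rather than a proof.

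The paper sidesteps the whole issue. Rather than bounding total rounds, it observes that the number of leaders is monotone non-increasing once the $q_0$'s are gone, and that (by the one-to-one-elimination mechanism of rule 3) a \emph{unique} leader emerges after a comparatively short phase. From that moment there are no further rule-3 games; the sole leader simply performs at most $n$ absorptions (via rules 7 and 8), each a short chain of specific interactions costing $O(n^2)$ in expectation, for $O(n^3)$ total. The key simplification you are missing is this split into ``time until the leader is unique'' and ``time afterwards'': the second phase trivially has $O(n)$ rounds, so no delicate amortisation over interleaved fights is needed.

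A smaller point: your $\E[T_1]=O(n\log n)$ via Proposition~\ref{pro:one-way-ep} is not justified, since only nodes in state $l$ (not $q_1$ or $q_2$) can consume a $q_0$; the paper instead gets $O(n^2)$ for this phase by a direct calculation, which is harmless against the $O(n^3)$ target.
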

\begin{proof}
Observe first that in $O(n^2)$ steps all $q_0$s become something else. To see this let $X$ be the r.v. of the total number of steps until all $q_0$s disappear and let $X_i$ be the r.v. of the number of steps between the $i$th and the $(i+1)$st interaction between two nodes in state $q_0$ (assume no other interactions can change the state of a $q_0$). Let $p_i=[(n-2i)(n-2i-1)]/[n(n-1)]$ be the probability that such an interaction occurs. Then $\E[X_i]=1/p_i=\Theta(n^2/(n-i)^2)$ and $\E[X]\simeq n^2\sum_{i=1}^{n/2} 1/(n-i)^2=\Theta(n^2)$. The last equation follows from the fact that $\sum_{i=1}^{n/2} 1/(n-i)^2\leq\sum_{i=1}^{n^2} 1/i-\sum_{i=1}^{(n/2)^2} 1/i\simeq 2\ln n+\Theta(1) -2\ln n +2\ln 2 -\Theta(1) = O(1)$, i.e. it is bounded. Finally, observe that $q_0$s that become leaders can also turn other $q_0$s to something else thus the actual expectation is in fact $O(n^2)$ (i.e. what we have ignored can only help the process end faster).

Now notice that after this $O(n^2)$ time we have a set of at most $O(n)$ leaders and no new leader can ever appear. Moreover, in every interaction between two leaders only one survives and the other becomes a follower. Clearly, a single leader must win all the pairwise games in which it will participate. Consider that leader and observe that it takes it an average of $n^2$ steps to participate to another game in the worst case and another $n^2$ steps to win it. Clearly, in $O(n^2)$ steps on average there is a unique leader and every other node is either isolated in state $f_0$ or part of a line that has a unique follower $f_1$. Every interaction of a leader with a follower increases the length of the leader's line by 1 in $O(n^2)$ steps. Thus an increment occurs every $O(n^2)$ steps as the leader needs $O(n^2)$ steps to meet a follower and then $O(n^2)$ steps to increase by 1 towards that follower. As the leader needs to make at most $O(n)$ increments to make its own line global, we conclude that the expected time for this to occur is $O(n)\cdot O(n^2)=O(n^3)$.
\qed
\end{proof}

\section{Other Basic Constructors}
\label{sec:basic-con}

\noindent\textbf{Cycle Cover}

\floatname{algorithm}{Protocol}
\renewcommand{\algorithmiccomment}[1]{// #1}
\begin{algorithm}[!h]
  \caption{\emph{Cycle-Cover}}\label{prot:cycle-cover}
  \begin{algorithmic}
    \medskip
    \State $Q=\{q_0,q_1,q_2\}$
    \State $\delta$: 
    \begin{align*}
    (q_0,q_0,0)&\ra (q_1,q_1,1)\\
    (q_1,q_0,0)&\ra (q_2,q_1,1)\\
    (q_1,q_1,0)&\ra (q_2,q_2,1)
    \phantom{\hspace{10cm}}
    \end{align*}
  \end{algorithmic}
\end{algorithm}

\begin{theorem} \label{the:cycle-cover}
Protocol \emph{Cycle-Cover} constructs a cycle cover with waste 2 (i.e. a cycle cover on a subset of $V_I$ of $n-2$ nodes). It uses 3 states, its expected running time under the uniform random scheduler is $\Theta(n^2)$, and it is optimal w.r.t. time.
\end{theorem}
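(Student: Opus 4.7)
The plan is to establish four claims: (a) correctness; (b) the state count is $3$, immediate from inspection of $Q$; (c) the $O(n^2)$ upper bound on the expected running time; and (d) a matching $\Omega(n^2)$ lower bound that holds for \emph{any} cycle-cover protocol, thereby establishing optimality.

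For correctness, I first prove the key structural invariant that in every reachable configuration a node in state $q_i$ has exactly $i$ incident active edges, for $i\in\{0,1,2\}$. This holds initially (all $q_0$, all edges inactive) and is preserved by each of the three rules, since each rule activates the participating edge and advances both endpoints one step along the chain $q_0\to q_1\to q_2$. Consequently the active subgraph is always a disjoint union of paths and cycles, with the $q_1$-nodes being exactly the path endpoints. Inspecting the rules, a configuration is output-stable iff (i) at most one node is in state $q_0$, (ii) $q_0$ and $q_1$ do not coexist, and (iii) every pair of $q_1$-nodes already shares an active edge; combining (iii) with the degree-$1$ constraint forces the number of $q_1$s to be $0$ or $2$, and in the latter case the two $q_1$s are joined by their common active edge. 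Hence every output-stable graph is a disjoint union of cycles plus, respectively, no leftover node (waste $0$), one isolated $q_0$ (waste $1$), or an isolated edge with two $q_1$-endpoints (waste $2$), yielding the useful-space bound $g(n)=n-2$. Convergence to such a configuration is guaranteed by fairness together with the potential $\phi=2k+m$ (with $k,m$ the number of $q_0$- and $q_1$-nodes), which strictly decreases by $2$ on every effective transition; and for $n\geq 5$, the execution that first makes two $q_0$s interact and thereafter confines its interactions to the remaining $n-2$ nodes until they form cycles shows that waste exactly $2$ is attainable.

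For the $O(n^2)$ upper bound, $\phi$ starts at $2n$, ends at $\leq 2$, and drops by exactly $2$ on each effective transition, so there are at most $n$ effective transitions. At a configuration with $s=k+m$ unfinished nodes, the only ineffective pairs among the $\binom{s}{2}$ pairs of unfinished nodes are the at most $m/2$ pairs of $q_1$s already joined by an active edge. Substituting $m=2s-\phi$ and minimising over the feasible range $s\in[\phi/2,\phi]$ yields at least $\phi(\phi-2)/8=\Omega(\phi^2)$ effective pairs whenever $\phi\geq 4$, hence an expected waiting time at potential $\phi$ of $O(n^2/\phi^2)$. Summing gives
\[
\E[X]\;\leq\;\sum_{\phi=4,6,\ldots,2n}\frac{C\,n^2}{\phi^2}\;+\;O(n^2)\;=\;O(n^2),
\]
the additive $O(n^2)$ absorbing the constant number of terminal stages (essentially $\phi\in\{2,4\}$) where the $\Omega(\phi^2)$ bound on effective pairs degenerates.

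For the generic $\Omega(n^2)$ lower bound I mimic the proof of the Line Lower Bound. Take any protocol $\ca$ that constructs a cycle cover with waste at most $2$, any execution of $\ca$ on $n$ processes, and let $t$ be the step of the last edge modification. If that step is an activation, the configuration just before $t$ differs from a cycle cover by exactly one missing edge, so its active subgraph is a single path (a broken cycle) plus cycles, two paths plus cycles, or an isolated edge/two isolated vertices alongside cycles; in each case only $O(1)$ pair-interactions enable the completing activation, so the expected waiting time from $C_{t-1}$ to the required interaction is $\binom{n}{2}/O(1)=\Omega(n^2)$. A symmetric analysis handles the case of a last deactivation, where one extra edge must be removed. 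Combining with the $O(n^2)$ upper bound yields $\Theta(n^2)$, and since the same lower bound applies to every cycle-cover protocol, Protocol \emph{Cycle-Cover} is asymptotically optimal. The main technical obstacle I anticipate is the exhaustive enumeration of one-edge-away configurations for the lower bound, handled uniformly across waste values $\{0,1,2\}$, together with the care needed to bound the degenerate tail of the potential sum in the upper bound.
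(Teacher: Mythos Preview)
Your proof is correct and shares the paper's overall architecture (structural correctness, $O(n^2)$ upper bound, generic $\Omega(n^2)$ last-edge-modification lower bound), but the upper-bound mechanism is genuinely different. The paper argues in two loose phases: first all $q_0$s disappear in $O(n^2)$ time, then the remaining $q_1$s are eliminated in another $O(n^2)$ steps by comparison with the one-to-one elimination process of Proposition~\ref{pro:one-to-one}. You instead introduce the single potential $\phi=2k+m$, bound the number of effective pairs from below by $\phi(\phi-2)/8$, and sum the resulting geometric-type series. Your approach is more self-contained and handles the interleaving of $q_0$- and $q_1$-eliminations uniformly, whereas the paper's buys brevity by reusing the already-established basic processes from Section~\ref{sec:basic-processes}. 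Your correctness argument (the degree invariant and the exact characterisation of output-stable configurations with waste $0$, $1$, or $2$) is also considerably more careful than the paper's one-line sketch. The lower-bound arguments coincide, and both inherit the same mild imprecision about whether the last \emph{output} change must be an edge modification when the waste is nonzero; this does not affect the specific protocol (where every effective step is an activation), only the generality claim.
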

\begin{proof}
Note that the protocol stabilizes when all nodes have become $q_2$. In $O(n^2)$ time all $q_0$s have become $q_1$ and in another $O(n^2)$ steps (by dominating a one-to-one elimination) all $q_1$s have become $q_2$s. For the lower bound consider the last edge modification that ever occurs. Due to the symmetry of cycle cover, both if it was an activation or a deactivation only a single edge satisfies the fact that after its activation or deactivation we get a cycle cover, which requires $\Theta(n^2)$ rounds (this is a lower bound for any protocol that constructs a cycle cover). That the waste is 2 follows from the fact that some executions may construct a cycle cover on $n-2$ nodes and leave the remaining two nodes connected and in state $q_1$ forever.
\qed
\end{proof}

\noindent\textbf{Global Star}

\begin{theorem} [Star Lower Bound]
Any protocol that constructs a spanning star has at least 2 states and its expected time to convergence is $\Omega(n^2\log n)$.
\end{theorem}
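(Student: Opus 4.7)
For the state lower bound, suppose toward a contradiction a protocol with a single state $q_0$ constructs a spanning star. Then the transition function reduces to a map $f\colon\{0,1\}\to\{0,1\}$ on edge states alone, and the four possibilities for $f$ drive the system respectively to the empty graph ($f\equiv 0$), the clique ($f\equiv 1$), the initial empty graph (identity), or perpetually flipping edges (flip) that never stabilize; none of these is a spanning star. Hence $|Q|\ge 2$.

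For the time bound, the central structural observation is that if an execution stabilizes to a spanning star with center $u^\ast$, then each of the $n-1$ active star edges $\{u^\ast,v\}$ must at some earlier step have been activated, and since an edge state can only change during an interaction between its own two endpoints, $u^\ast$ must have interacted with every other node at least once by the convergence time $T$. Letting $\tau_u$ denote the first step at which $u$ has interacted with all $n-1$ other nodes, this gives $T\ge\tau_{u^\ast}\ge\min_u\tau_u$, so it suffices to prove $\E[\min_u\tau_u]=\Omega(n^2\log n)$.

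My plan for the probabilistic estimate is to set $t_0=\tfrac{1}{4}m\log n$ with $m=n(n-1)/2$ and prove $\Pr[\min_u\tau_u\le t_0]=o(1)$ via a first-moment union bound over the $n$ vertices. For a fixed $v_0$, the number $N$ of the first $t_0$ interactions involving $v_0$ is $\mathrm{Binomial}(t_0,2/n)$ with mean $(n-1)(\log n)/4$, so by Chernoff $N\le (n-1)(\log n)/2$ except with probability $e^{-\Omega(n\log n)}$; conditional on this, the event $\tau_{v_0}\le t_0$ requires a uniform coupon collector over $n-1$ coupons to finish in at most $(n-1)(\log n)/2$ draws, and in that regime the expected number of uncollected coupons is $\Theta(\sqrt n)$, so a standard Poisson-type estimate bounds the completion probability by $e^{-\Omega(\sqrt n)}$. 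Chaining these gives $\Pr[\tau_{v_0}\le t_0]=o(1/n)$, the union bound yields $\Pr[\min_u\tau_u\le t_0]=o(1)$, and hence $\E[T]\ge t_0(1-o(1))=\Omega(n^2\log n)$. The main obstacle is precisely this passage from a single-node bound to a bound on the minimum over all $n$ candidate centers: invoking Proposition~\ref{pro:meet-ever} directly is not enough because $u^\ast$ is not chosen in advance and could correlate with a small value of $\tau_{u^\ast}$, so the left tail of $\tau_{v_0}$ must be driven below $1/n$ to absorb the $n$ candidate centers. It is reassuring that the threshold $t_0=\Theta(n^2\log n)$ coincides with the classical isolated-vertex threshold for the random graph $\bar H_t$ of edges not yet picked, which suggests this bound is tight up to constant factors.
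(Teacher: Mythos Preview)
Your proof is correct and follows the same structural line as the paper's: the single-state impossibility by case analysis on the edge-update map, and the time lower bound via the observation that the eventual center must have interacted with every other node before stabilization.

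Where you differ is in the probabilistic step. The paper simply invokes Proposition~\ref{pro:meet-ever} (meet-everybody) for the eventual center $u^\ast$ and concludes $\Omega(n^2\log n)$. As you rightly observe, this is not quite rigorous as stated: $u^\ast$ is not fixed in advance but is determined by the execution, so $\E[\tau_{u^\ast}]$ could in principle be smaller than $\E[\tau_u]$ for a fixed $u$. You close this gap by bounding $\E[\min_u\tau_u]$ directly via a Chernoff-plus-coupon-collector tail estimate and a union bound over the $n$ candidate centers. Your sketch carries this out correctly: at $t_0=\tfrac14 m\log n$ each fixed node has met everybody with probability $o(1/n)$, so by the union bound no node has, and $\E[T]\ge t_0(1-o(1))=\Omega(n^2\log n)$ follows. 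The paper's version is shorter but leaves this correlation issue implicit; yours is the more careful of the two.
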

\begin{proof}
Clearly, with a single state we cannot make the necessary distinction of a center and a peripheral node. More formally, if there is a single state $q_0$ then $(q_0,q_0,0)$ must necessarily activate the edge (otherwise no edges will be ever activated) which implies that eventually all edges will become activated, i.e. instead of a star we will end up with a global clique. So every protocol that constructs a global star must have at least 2 states.

For the lower bound on the expected running time we argue as follows. Take any execution of a protocol that constructs a global star. Consider the node $u$ that will become the center in that execution. When the execution stabilizes, $u$ must be connected to every other node by an active edge. This implies that $u$ must have interacted to every other node. Clearly, the time it takes for the eventually unique center, $u$ in this case, to meet every other node is a lower bound on the total running time. This is a meet everybody that, as proved in Proposition \ref{pro:meet-ever}, takes $\Theta(n^2\log n)$ time.   
\qed
\end{proof}

\floatname{algorithm}{Protocol}
\renewcommand{\algorithmiccomment}[1]{// #1}
\begin{algorithm}[!h]
  \caption{\emph{Global-Star}}\label{prot:gstar}
  \begin{algorithmic}
    \medskip
    \State $Q=\{c,p\}$, $q_0=c$
    \State $\delta$: 
    \begin{align*}
    (c,c,0)&\ra (c,p,1)\\
    (p,p,1)&\ra (p,p,0)\\
    (c,p,0)&\ra (c,p,1)
    \phantom{\hspace{10cm}}
    \end{align*}
  \end{algorithmic}
\end{algorithm}

\begin{theorem}
Protocol \emph{Global-Star} constructs a spanning star. It uses 2 states and its expected running time under the uniform random scheduler is $O(n^2\log n)$, that is it is optimal both w.r.t. size and time.
\end{theorem}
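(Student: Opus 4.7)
The plan is to handle the three claims—correctness, state count, and the $O(n^2\log n)$ running time—separately, with optimality following from the preceding Star Lower Bound. The state count is immediate from $|Q|=2$.

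For correctness, I would first record two monotonicity invariants. The number of $c$-nodes is non-increasing, and strictly decreases (by one) each time rule 1 fires; no other rule modifies it. Hence by fairness there is a finite step $T_1$ after which a unique center $u^*$ remains, and from $T_1$ onward the partition of $V_I$ into center and peripherals is frozen. Next, after $T_1$: no rule can deactivate a $u^*$-$p$ edge (rule 2 requires both endpoints to be in state $p$), and no rule can activate a $p$-$p$ edge (rules 1 and 3 both require a $c$-endpoint). Conversely, rule 3 activates any inactive $u^*$-$p$ edge at the next interaction on it, and rule 2 deactivates any active $p$-$p$ edge at the next interaction on it. Thus the only configurations on which $\delta$ acts ineffectively are exactly those in which $u^*$ is joined to every other node by an active edge and all other edges are inactive, i.e.\ the spanning star. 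By fairness such a configuration is reached.

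For the running time upper bound, I would split the execution at $T_1$. Phase 1, the time until $T_1$, is exactly a one-to-one elimination on the $c$-population (only rule 1 affects the $c$-count), so by Proposition~\ref{pro:one-to-one} we have $\E[T_1]=\Theta(n^2)$. In Phase 2 (from $T_1$ onward), the above monotonicity invariants imply that each $u^*$-$p$ edge, once active, stays active, and each $p$-$p$ edge, once inactive, stays inactive. Therefore stabilization is achieved as soon as (i)~the center $u^*$ has interacted at least once with each of the $n-1$ peripherals after $T_1$, activating every remaining $u^*$-$p$ edge, and (ii)~each currently-active $p$-$p$ pair has interacted at least once after $T_1$, deactivating the offending edge. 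Task (i) is a meet-everybody process for $u^*$ and has expected duration $\Theta(n^2\log n)$ by Proposition~\ref{pro:meet-ever}; task (ii) is dominated by an edge cover restricted to the $\binom{n-1}{2}$ peripheral pairs, which by Proposition~\ref{pro:edge-cover} takes $O(n^2\log n)$ expected time. The two tasks run concurrently, so Phase 2 terminates in $O(n^2\log n)$ expected time, and the total is $\E[T_1]+O(n^2\log n)=O(n^2\log n)$.

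Optimality with respect to both states and time is inherited from the Star Lower Bound theorem. The main subtlety in the argument is the Phase~2 monotonicity claim: I need to be careful that target-correct edges stay correct once they are reached, so that the meet-everybody and edge-cover waiting times are genuine upper bounds and not reset by later adversarial interactions. Everything else is a routine combination of the basic probabilistic processes already analyzed in Section~\ref{sec:basic-processes}.
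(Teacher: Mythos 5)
Your proposal is correct and follows essentially the same route as the paper: a one-to-one elimination phase ($\Theta(n^2)$) to isolate a unique center, followed by the observation that correct edges are never undone, so stabilization is bounded by the time for all relevant pairs to interact. The paper bounds the second phase by a single edge cover over all pairs ($\Theta(n^2\log n)$), whereas you split it into a meet-everybody for the center plus an edge cover on the peripheral pairs; this is an equivalent decomposition yielding the same bound.
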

\begin{proof}
\emph{Correctness.} Each node may play one of the following two roles during an execution of the protocol: a \emph{center} (state $c$) or a \emph{peripheral} (state $p$). The unique output-stable configuration $C_f$ whose active network is a spanning star, has one center and $n-1$ peripheral nodes, and a $uv$ edge is active iff one of $u,v$ is the center. Initially all nodes are centers. When two centers interact one of them remains a center and the other becomes a peripheral. No other interactions eliminate a center, which implies that not all centers can be eliminated, and once a center becomes a peripheral it can never become a center again. Due to fairness, eventually all pairs of centers will interact and, as no new centers appear, eventually a single center will remain. Thus from some point on there is a single center and $n-1$ peripheral nodes. The idea from now on is that $c$-$p$ attract while $p$-$p$ repel. In particular, rule $(c,p,0)\ra (c,p,1)$ guarantees that any inactive edges joining the center to the peripherals will become activated and rule $(p,p,1)\ra (p,p,0)$ guarantees that any active edges joining two peripherals will become deactivated. At the same time active edges between the center and the peripherals remain active and inactive edges between two peripherals remain inactive. This clearly leads to the construction of a spanning star.

\emph{Running Time.} Forget for a while the edge updates and consider the rule $(c,c)\ra (c,p)$, which is the only effective interaction of the protocol w.r.t. the states of the nodes. We are interested in the time needed for a single $c$ to remain. This is clearly an original application of one-to-one elimination and as proved in Proposition \ref{pro:one-to-one} it takes $\Theta(n^2)$ time.

Notice now that once the states of the nodes have stabilized, the constructed network will for sure stabilize to a global star after all $p$-nodes have interacted with each other in order to deactivate any active edges between them and after the $c$ has interacted with all $p$s in order to activate any inactive edges, i.e. after all pairs of interactions have occurred. This is an edge cover that, as proved in Proposition \ref{pro:edge-cover}, takes $\Theta(n^2\log n)$ time. Thus the total expected running time is at most $\Theta(n^2)+\Theta(n^2\log n)=\Theta(n^2\log n)$.
\qed
\end{proof}

\noindent\textbf{Global Ring}

\begin{theorem} [Ring Lower Bound]
The expected time to convergence of any protocol that constructs a spanning ring is $\Omega(n^2)$.
\end{theorem}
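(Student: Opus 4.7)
The plan is to follow the same template as the Line Lower Bound: for an arbitrary protocol $\ca$ that constructs a spanning ring and any of its executions on $n$ nodes, I would let $t$ denote the step at which the last edge modification occurs, and argue that the configuration immediately before step $t$ must have one of only two possible shapes -- each exactly one edge flip away from a spanning ring -- and that in each case the uniform random scheduler must select one specific pair of nodes in order to complete the ring.

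First I would handle the activation case. Since every vertex of a spanning ring has degree $2$, removing the just-activated edge from the final ring yields a graph in which exactly two vertices have degree $1$ and all others have degree $2$; this forces the pre-$t$ graph to be a Hamiltonian path. The unique edge whose activation produces a spanning ring is then the one joining the two endpoints of that path, so the required interaction at step $t$ has probability $2/[n(n-1)]$ and its expected waiting time alone is $\Theta(n^2)$.

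Next I would handle the deactivation case. Adding the just-deactivated edge back to the final ring yields the pre-$t$ graph, which is therefore a spanning ring plus one chord $\{u,v\}$: precisely two vertices have degree $3$ and all others have degree $2$. The only deactivation that recovers a spanning ring from such a graph is the removal of that chord, so again the scheduler must pick the specific pair $\{u,v\}$, an event of probability $2/[n(n-1)]$ per step, giving expected waiting time $\Theta(n^2)$.

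Since the total running time is bounded below by the expected time spent waiting for the last edge modification alone, $\Omega(n^2)$ follows in either case. The only delicate point I anticipate is the structural classification -- verifying that the uniform $(2,2,\ldots,2)$ degree sequence of a ring allows only the two pre-$t$ shapes listed above -- but this is immediate by inspecting degree sequences. Notably, unlike the line lower bound, no recursion to an earlier edge modification is required here, because the ring's uniform degree sequence already uniquely determines its single-flip neighbours.
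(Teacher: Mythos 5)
Your proposal is correct and matches the paper's proof essentially verbatim: both arguments condition on whether the last edge modification is an activation (pre-$t$ graph is a spanning line, completed only by joining its endpoints) or a deactivation (pre-$t$ graph is a ring plus a chord, completed only by removing that chord), and both conclude that the required final interaction has probability $2/[n(n-1)]$, giving expected waiting time $\Omega(n^2)$. Your closing observation that, unlike the line lower bound, no recursion to an earlier modification step is needed is also accurate.
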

\begin{proof}
Take any protocol $\ca$ that constructs a spanning ring and any execution of $\ca$ on $n$ nodes. Consider the step $t$ at which $\ca$ performed the last modification of an edge. Observe that the construction after step $t$ must be a spanning ring. We distinguish two cases. 

(i) The last modification was an activation. It follows that the previous active network should be a spanning line $u_1,u_2,\ldots,u_n$. But the only activation that can convert this spanning line into a spanning ring is $u_1u_n$ which occurs with probability $2/[n(n-1)]$, i.e. in an expected number of $\Theta(n^2)$ steps.  

(ii) The last modification was a deactivation. It follows that the previous active network should be a spanning ring $u_1,u_2,\ldots,u_n,u_1$ with an additional active edge $u_iu_j$ for $1\leq i<j\leq n$ and $j\neq i+1$ (i.e. a chord). Clearly, the only interaction that can convert such an active network into a spanning ring is $u_iu_j$ which takes an expected number of $\Theta(n^2)$ steps to occur.
\qed
\end{proof}

\floatname{algorithm}{Protocol}
\renewcommand{\algorithmiccomment}[1]{// #1}
\begin{algorithm}[!h]
  \caption{\emph{Global-Ring}}\label{prot:gring}
  \begin{algorithmic}
    \medskip
    \State $Q=\{q_0,q_1,q_2,l,w,l^\prime,\l^\dprime,q_1^\prime,q_1^\dprime\}$
    \State $\delta$: 
    \begin{align*}
    (q_0,q_0,0)&\ra (q_1,l,1)\\
    (l,q_0,0)&\ra (q_2,l,1)\\
    (l,l,0)&\ra (q_2,w,1)\\
    (w,q_2,1)&\ra (q_2,w,1)\\
    (w,q_1,1)&\ra (q_2,l,1)\\
    (l,q_1,0)&\ra (l^\prime,q_1^\prime,1)\\
    (x^\prime,y,0)&\ra (x^\dprime,y,0)\text{, for } x\in\{l,q_1\} \text{ and } y\in\{l,w,q_1,q_0\}\\
    (x^\prime,y^\prime,0)&\ra (x^\dprime,y^\dprime,0)\text{, for } x\in\{l,q_1\} \text{ and } y\in\{l,q_1\}\\
    (l^\dprime,q_1^\prime,1)&\ra (l,q_1,0)\\
    (l^\prime,q_1^\dprime,1)&\ra (l,q_1,0)\\
    (l^\dprime,q_1^\dprime,1)&\ra (l,q_1,0)
    \phantom{\hspace{10cm}}
    \end{align*}
    \State \Comment {The first 5 rules are the same as in the \emph{Simple-Global-Line} protocol (Protocol \ref{prot:gline})} 
  \end{algorithmic}
\end{algorithm}

\begin{theorem}
Protocol \emph{Global-Ring} constructs a spanning ring.
\end{theorem}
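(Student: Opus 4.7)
The plan is to augment the correctness proof of Protocol~\ref{prot:gline} (\emph{Simple-Global-Line}), whose first five rules coincide with the first five rules of \emph{Global-Ring}. Those rules already build a spanning line under any fair scheduler, so rules 6--11 need only close such a line into a ring while never permanently damaging the line-building dynamics. Intuitively, rule 6 \emph{tentatively} closes a line by activating the edge between its two endpoints and marking them primed; rules 7--8 let a primed endpoint upgrade itself to double-primed upon discovering a node in one of the ``line-building'' states $q_0, q_1, l, w$, which witnesses that the enclosed structure is not in fact spanning; rules 9--11 then revert the closure whenever the primed pair meet over their active closing edge with at least one of them double-primed.

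The main work is to establish the following invariant together with a key lemma. I would prove that every reachable configuration has its active graph consisting of a collection of vertex-disjoint paths (each with a unique leader as in Simple-Global-Line) plus isolated $q_0$ nodes, except that each path may be temporarily augmented by one extra active edge whose two endpoints are primed (this covers both intra-line closures and cross-line closures between the $l$ of one path and the $q_1$ of another). A direct case analysis on the rules shows this family of shapes is closed under every transition. Next I would prove that any tentative closure whose underlying structure is not already a spanning ring is eventually reverted: for a non-spanning closure there must exist somewhere in the population either a node in a state from $\{q_0, q_1, l, w\}$ or a primed endpoint belonging to another tentative closure, and by fairness one of the two primed endpoints eventually interacts with such a witness over an inactive edge, so that rule~7 or rule~8 turns it into its double-primed variant; by fairness again the two primed endpoints then meet over their active closing edge, and one of rules 9--11 fires, restoring the $l, q_1$ states and deactivating the edge. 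Crucially, while in the primed phase the affected endpoints can neither absorb $q_0$'s (rule 2 requires $l$, not $l^\prime$) nor participate in a line-merger (rule 3 requires two $l$'s), so the revert returns the active graph exactly to its pre-closure state.

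Combining these, the standard line-building dynamics make progress whenever no tentative closure is blocking them, and every non-spanning closure is finite under fairness; so by the correctness argument for Simple-Global-Line, every execution reaches a configuration with a unique spanning line whose endpoints are $l, q_1$ and whose internal nodes are $q_2$. From such a configuration, rule~6 fires and produces a tentative closure in which every other node is in state $q_2$; since $q_2$ does not appear on the right-hand $y$-list of rules 7--8, no further transition applies (not even $(l^\prime, q_1^\prime, 1)$, which is intentionally undefined), so the resulting configuration is output-stable and its active graph is a spanning ring. The main obstacle is the concern that repeated premature closures could starve line-building progress, and this is precisely what the revert lemma resolves: each revert restores exactly the pre-closure structure, so all of the progress-making interactions of the Simple-Global-Line protocol remain enabled and, by fairness, eventually occur.
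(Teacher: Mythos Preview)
Your approach is essentially the paper's: show that any premature closure can be undone so that the Simple-Global-Line reachability argument applies, and then observe that the unique stable outcome of closing the spanning line is the spanning ring. Your write-up is in fact more explicit than the paper's, which simply asserts that ``if it has [blocked nodes] there is a sequence of interactions that unblocks them all'' and then invokes the line argument.

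There is, however, one claim in your sketch that is not literally correct and that you should weaken. You assert that ``the revert returns the active graph exactly to its pre-closure state.'' This fails when several closures coexist and involve length-$2$ lines. Concretely, suppose $A=\{l_A,q_1^A\}$ and $B=\{l_B,q_1^B\}$ are two length-$2$ lines and rule~6 fires cross-line on $(l_A,q_1^B)$ and then on $(l_B,q_1^A)$. All four nodes are now primed, and the original line edge $l_B\text{--}q_1^B$ is an \emph{active} edge between an $l'$ and a $q_1'$. If $l_B$ becomes $l''$ (via rule~8 with $l_A'$), rule~9 may fire on the \emph{line} edge $l_B\text{--}q_1^B$ rather than on the closure edge $l_B\text{--}q_1^A$, deactivating the wrong edge. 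After several such steps one still reaches a valid collection of disjoint length-$2$ lines, but with a \emph{different} pairing than before; the pre-closure configuration is not restored exactly, and in intermediate steps a node can even be left double-primed with no primed active neighbour until a fresh rule-6 firing supplies one. The paper sidesteps this by making the weaker (and correct) reachability claim: from any reachable configuration that is not the stable spanning ring, \emph{some} sequence of interactions removes all primed/double-primed markings and returns the system to a pure Simple-Global-Line configuration. Your invariant-plus-revert framework is fine, but the revert lemma should be stated in this existential form rather than as ``each closure is undone to its exact predecessor''; once stated that way your final paragraph goes through unchanged.
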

\begin{proof}
The protocol is essentially the same as the \emph{Simple-Global-Line} protocol (Protocol \ref{prot:gline}) but additionally we allow the endpoints of a line to become connected. This occurs whenever one endpoint is in state $l$ and the other is in state $q_1$ and the two endpoints interact. In this case, rule $(l,q_1,0)\ra (l^\prime,q_1^\prime,1)$ applies and the two endpoints become blocked. If any of the two endpoints detects the existence of another component, then, in the next interaction between them, the two endpoints backtrack, by which we mean that they deactivate the connection between them and both become unblocked again by returning to their original states. The existence of another component can be eventually detected due to the fact that every component is either an isolated node in state $q_0$ or has at least one leader. Now take an arbitrary reachable configuration $C$ with at least 2 components. We may w.l.o.g. assume that $C$ has no blocked nodes, as if it has there is a sequence of interactions that unblocks them all. Thus, as in the \emph{Simple-Global-Line} protocol we have a collection of lines and isolated nodes. This may very well lead to the formation of a spanning line with a single leader. It is now clear that at some point the leader will occupy one endpoint of the line, will interact with the other endpoint, the spanning line will close to form a spanning ring and the previous endpoints will become blocked. As there is a single component in the network, these two nodes will remain blocked forever and therefore the constructed active ring is stable. 
\qed
\end{proof}

\noindent\textbf{Global Ring: A Generic Approach}

\floatname{algorithm}{Protocol}
\renewcommand{\algorithmiccomment}[1]{// #1}
\begin{algorithm}[!h]
  \caption{\emph{2RC}}\label{prot:2rc}
  \begin{algorithmic}
    \medskip
    \State $Q=\{q_0,q_1,q_2,l_1,l_2,l_3\}$
    \State $\delta$: 
    \begin{align*}
    (q_0,q_0,0)&\ra (q_1,l_1,1)\\
    (q_1,q_0,0)&\ra (q_2,q_1,1)\\
    (q_1,q_1,0)&\ra (q_2,q_2,1)\\
    (l_1,l_1,0)&\ra (l_2,q_2,1)\\
    (l_1,q_i,0)&\ra (q_2,l_{i+1},1)\text{, for } i\in\{0,1\}\\
    &\hspace{-60pt}\text{// swapping: leaders keep moving inside components}\\
    (l_i,q_j,1)&\ra (q_i,l_j,1)\text{, for } i,j\in\{1,2\}\\
    &\hspace{-60pt}\text{// leader elimination: eventually a single leader will remain in every component}\\
    (l_i,l_j,1)&\ra (q_i,l_j,1)\text{, for } i,j\in\{1,2\}\\
    &\hspace{-60pt}\text{// opening cycles in the presence of other components}\\
    (l_2,q_0,0)&\ra (l_3,q_1,1)\\
    (l_2,l_1,0)&\ra (l_3,q_2,1)\\
    (l_2,l_2,0)&\ra (l_3,l_3,1)\\
    (l_3,q_1,1)&\ra (l_2,q_0,0)\\
    (l_3,q_2,1)&\ra (l_2,l_1,0)\\
    (l_3,l_1,1)&\ra (l_2,q_0,0)\\
    (l_3,l_2,1)&\ra (l_2,l_1,0)\\
    (l_3,l_3,1)&\ra (l_2,l_2,0)
    \phantom{\hspace{8cm}}
    \end{align*} 
  \end{algorithmic}
\end{algorithm}

\begin{theorem} \label{the:2rc}
Protocol \emph{2RC} constructs a connected spanning 2-regular network (i.e. a spanning ring).
\end{theorem}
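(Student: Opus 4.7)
The plan is to establish correctness by (i) exhibiting a canonical output-stable configuration whose active subgraph is a spanning ring and (ii) proving that, from every reachable configuration, a finite sequence of transitions reaches some such stable configuration; combined with fairness, this yields convergence to a spanning ring on every fair execution.

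First, I would identify the canonical stable configuration: every node is in state $q_2$, exactly $n$ edges are active, and they form a Hamiltonian cycle on $V_I$. A quick inspection of $\delta$ shows that neither $(q_2,q_2,0)$ nor $(q_2,q_2,1)$ triggers any rule, so such a configuration is indeed output-stable, and its output is a spanning 2-regular connected graph.

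Next, I would prove a structural invariant: in every reachable configuration $C$, the active subgraph is a disjoint union of simple paths, simple cycles, and isolated nodes; the isolated nodes are exactly the ones in $q_0$; and leaders (states $l_1, l_2, l_3$) obey a bounded-occupancy rule per component, with $l_3$ appearing only transiently across an ``open'' inter-component active edge. I would verify invariance rule by rule: the first five rules mirror Protocol \emph{Cycle-Cover} and build paths that close into cycles while installing $l_1/l_2$ leaders; the swapping rules merely reshuffle leader positions along existing active edges; the elimination rules kill redundant co-component leaders; and the opening/backtracking block reversibly attempts to join two components via a single inter-component active edge.

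I would then split the reachability argument into three phases. In Phase~1, using rules 1--5 exactly as in the analysis of Theorem~\ref{the:cycle-cover}, I drive any reachable configuration to a cycle cover with each cycle hosting at least one leader in $\{l_1,l_2\}$ and no $q_0$ remaining. In Phase~2, I invoke the swapping and leader-elimination rules to reduce each cycle to a single surviving leader (upgraded to $l_2$) with all other nodes in $q_2$. In Phase~3---the main step---I show that whenever at least two cycle-components coexist, there is a scheduler-reachable sequence that uses an opening rule $(l_2,\cdot,0)\to(l_3,\cdot,1)$ and then a further sequence of transitions to break one cycle at its leader, thread the resulting open path through the neighbouring component, and re-close into a single larger cycle; this strictly decreases the number of components. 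Iterating Phase~3 shrinks the cover to a single Hamiltonian cycle, after which a final application of Phase~2 leaves all nodes in $q_2$ and the execution output-stable.

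The hard part will be Phase~3, controlling the interplay between opening and backtracking. Because every rule producing $l_3$ has a reverse rule converting $(l_3,\cdot,1)\to(l_2,\cdot,0)$, the protocol does not commit to a merge in a single step; I must exhibit a concrete completion sequence, which a fair scheduler eventually realises, that slips past all possible backtracks and installs a genuine merge. The argument will proceed by case analysis on what the opening $l_2$ encounters in the neighbouring component (isolated $q_0$, an $l_1$-leader, or an $l_2$-leader), tracking how the two leaders coordinate via swapping and elimination to produce a single surviving $l_2$ in the merged cycle, and verifying that the structural invariant is restored at the end. Once Phase~3 is established, induction on the number of cycle-components completes the proof of correctness.
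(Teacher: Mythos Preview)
Your overall scheme (exhibit an output-stable ring configuration, then show reachability from every reachable configuration) is exactly the paper's, but there is a genuine error in step (i) that propagates. The configuration ``all nodes in $q_2$'' is \emph{not reachable}: inspect the rules and you will see that no transition ever eliminates the last leader. The initial $(q_0,q_0,0)\ra(q_1,l_1,1)$ creates a leader; swapping preserves the leader count; elimination $(l_i,l_j,1)\ra(q_i,l_j,1)$ leaves one leader; and the $l_3$ rules always return an $l_2$. Hence every reachable non-initial configuration has at least one node in $\{l_1,l_2,l_3\}$, and your target is vacuously output-stable but never reached. The paper's stable target is instead: \emph{one} node in state $l_2$ and all others in $q_2$, forming a Hamiltonian cycle. (Your own Phase~2 already says each cycle retains one $l_2$, so the contradiction is visible inside your argument.) You should also revisit your structural invariant: when an $l_3$ is present it has degree~3, so the active graph is not literally a union of paths, cycles, and isolated nodes at that instant.

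On Phase~3, your plan to merge two cycles directly into one larger cycle is workable but more laborious than necessary. The paper takes a cleaner route: observe that the opening block lets any $l_2$ in a cycle detect a foreign component, temporarily acquire degree~3 via $l_3$, and then \emph{drop an old cycle-edge} (e.g.\ $(l_3,q_2,1)\ra(l_2,l_1,0)$), turning the cycle into a line. Once everything is a collection of lines and isolated nodes, the ordinary degree-increasing rules (those with right-hand side in $\{q_1,q_2,l_1,l_2\}$ and an activation) suffice to merge them into a single spanning line, which then closes to a ring. This sidesteps the delicate two-sided $l_3$/backtracking case analysis you anticipate.
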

\begin{proof}
The set $\cs$ of output-stable configurations whose active network is a spanning ring consists of those configurations that have one node in state $l_2$ and all other nodes in state $q_2$. The index of a state indicates the number of active neighbors of a node. A first goal is for all nodes to have degree 2 which implies a cycle cover, i.e. a partitioning of the nodes into disjoint cycles. The protocol achieves this by allowing every node with degree smaller than 2 to increase its degree. The final goal is to end up with a unique spanning ring. To achieve this, the protocol allows nodes with degree 2 to drop an existing neighbor and pick a new one provided that there are at least 2 components in the network. Clearly, this implies that any closed cycle coexisting with other components, which are cycles, lines, or isolated nodes, may open to form a line. As any collection of lines and isolated nodes can always be merged to a global line and any global line can close to form a global ring, the theorem follows. 
\qed
\end{proof}

\noindent\textbf{Generalizing to $k$-Regular Connected}

\floatname{algorithm}{Protocol}
\renewcommand{\algorithmiccomment}[1]{// #1}
\begin{algorithm}[!h]
  \caption{\emph{$k$RC}}\label{prot:krc}
  \begin{algorithmic}
    \medskip
    \State $Q=\{q_0,q_1,\ldots,q_k,l_1,l_2,\ldots,l_{k+1}\}$, i.e. $|Q|=2(k+1)$
    \State $\delta$: 
    \begin{align*}
    (q_0,q_0,0)&\ra (q_1,l_1,1)\\
    (q_i,q_j,0)&\ra (q_{i+1},q_{j+1},1)\text{, for } 1\leq i<k \text{ and } j<k\\
    (l_i,l_j,0)&\ra (l_{i+1},q_{j+1},1)\text{, for } 1\leq i,j<k\\
    (l_i,q_j,0)&\ra (q_{i+1},l_{j+1},1)\text{, for } 1\leq i<k \text{ and } j<k\\
    &\hspace{-60pt}\text{// swapping: leaders keep moving inside components}\\
    (l_i,q_j,1)&\ra (q_i,l_j,1)\text{, for } 1\leq i,j\leq k\\
    &\hspace{-60pt}\text{// leader elimination: eventually a single leader will remain in every component}\\
    (l_i,l_j,1)&\ra (q_i,l_j,1)\text{, for } 1\leq i,j\leq k\\
    &\hspace{-60pt}\text{// opening $k$-regular components in the presence of other components}\\
    (l_k,q_0,0)&\ra (l_{k+1},q_1,1)\\
    (l_k,l_i,0)&\ra (l_{k+1},q_{i+1},1)\text{, for } 1\leq i<k\\
    (l_k,l_k,0)&\ra (l_{k+1},l_{k+1},1)\\
    (l_{k+1},q_1,1)&\ra (l_k,q_0,0)\\
    (l_{k+1},q_i,1)&\ra (l_k,l_{i-1},0)\text{, for } 2\leq i\leq k\\
    (l_{k+1},l_i,1)&\ra (l_k,l_{i-1},0)\text{, for } 1\leq i\leq k\\
    (l_{k+1},l_{k+1},1)&\ra (l_k,l_k,0)
    \phantom{\hspace{8cm}}
    \end{align*} 
  \end{algorithmic}
\end{algorithm}

Using almost the same ideas as in the proof of Theorem \ref{the:2rc}, one can prove the following. 

\begin{theorem}
For every fixed integer $k\geq 2$ and population of size $n\geq k+1$, Protocol \emph{$k$RC} (see Protocol \ref{prot:krc}) constructs a connected spanning network in which at least $n-k+1$ nodes have degree $k$ and each of the remaining $l\leq k-1$ nodes has degree at least $l-1$ and at most $k-1$.
\end{theorem}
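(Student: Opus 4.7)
The plan is to extend the three-move argument used for Protocol \emph{2RC} (Theorem \ref{the:2rc}) to arbitrary $k$, by maintaining the same state-as-degree invariant and by verifying that each of the three ``actions'' (leader elimination inside a component, intra-component swapping, opening/closing of $k$-regular components) still works when we replace the saturation threshold 2 by $k$.

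First I would establish, by induction on the length of any reachable execution, the invariant that every node whose state has index $i$ (i.e., in $q_i$ or $l_i$ with $1\leq i\leq k$) has exactly $i$ active incident edges, while a node in $l_{k+1}$ has exactly $k$ active incident edges one of which is marked for release. Inspecting every rule of Protocol \ref{prot:krc} verifies that both sides of the rule respect this invariant, so in particular no node ever exceeds degree $k$. I would then define the set $\cs$ of target output-stable configurations: a single connected component containing at most $k-1$ nodes of deficient degree, the remaining $\geq n-k+1$ nodes all in state $q_k$, and at most one leader in state $l_m$ for some $m\leq k$. A routine rule-by-rule check shows that once such a configuration is reached, no edge modification can fire.

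Second, I would prove that from any reachable configuration $C$ there is a finite interaction sequence leading to some $C_s\in\cs$. I would argue in three phases exactly as in the proof of Theorem \ref{the:2rc}. Phase~(a): the rule $(l_i,l_j,1)\to(q_i,l_j,1)$ eliminates surplus leaders inside every component so that each component is left with a unique leader. Phase~(b): within a component the leader freely moves via the swap rule $(l_i,q_j,1)\to(q_i,l_j,1)$ and raises degrees via the degree-raising rules $(q_i,q_j,0)\to(q_{i+1},q_{j+1},1)$, $(l_i,q_j,0)\to(q_{i+1},l_{j+1},1)$, $(l_i,l_j,0)\to(l_{i+1},q_{j+1},1)$; so long as some node of degree $<k$ is reachable by a leader, that leader can go to it and raise its degree. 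Phase~(c): if more than one component still exists, a leader that has reached saturation (state $l_k$) uses the opening rules $(l_k,\cdot,0)\to(l_{k+1},\cdot,1)$ to attach an edge across components, after which the symmetric inverses $(l_{k+1},\cdot,1)\to(l_k,\cdot,0)$ permit a backtrack if the partner happens to be inside the same component (which guarantees the absence of a deadlock by fairness). Each successful opening strictly decreases the number of components, so phase~(c) terminates with a single connected component.

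The main obstacle, and the only step whose $k$-regular generalization is genuinely new, is the structural bound $l\leq k-1$ together with the minimum-degree lower bound $l-1$ on the deficient nodes. My plan here is to argue at the level of stable configurations. In any $C_s$ reached as above, let $D\subseteq V_I$ be the set of nodes of degree $<k$. If two distinct nodes $u,v\in D$ were joined by an inactive edge, then since the component is unique and leaders can migrate freely while the state indices encode degrees, a finite sequence of swaps followed by the applicable degree-raising rule ($(q_i,q_j,0)\to(q_{i+1},q_{j+1},1)$ or its leader analogues with $i,j<k$) would still be enabled, contradicting output-stability. Hence $D$ induces a clique, which immediately gives the lower bound $\deg(v)\geq |D|-1 = l-1$ for every $v\in D$. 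For the upper bound $l\leq k-1$, I would observe that every $v\in D$ must have $\deg(v)\leq k-1$ by definition of $D$; combined with $\deg(v)\geq l-1$ this forces $l-1\leq k-1$, and the strict inequality $l\leq k-1$ follows because if $l=k$ the clique on $D$ would already give each $v\in D$ degree exactly $k-1$ with no further inactive intra-$D$ edge to raise, and the same swap-then-raise argument applied to any $v\in D$ together with a $q_j$-node of degree $j<k$ outside $D$ (which must exist by a parity check on $\sum\deg = 2|E|$ when $l=k$) would again be enabled, contradicting stability. This parity/clique combination is the delicate step and is where I would spend the bulk of the proof.
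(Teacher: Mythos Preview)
Your overall plan matches the paper's: it explicitly says the theorem follows ``using almost the same ideas as in the proof of Theorem~\ref{the:2rc}'' and gives no further detail, so your three-phase extension of the 2RC argument is exactly what is intended, and in fact you are supplying considerably more than the paper does.

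There is, however, a genuine gap in the final step. Your parity argument for the strict bound $l\leq k-1$ does not work. If $l=k$ and every node of $D$ has degree exactly $k-1$ while every node outside $D$ has degree $k$, then $\sum_v \deg(v)=k(k-1)+(n-k)k=k(n-1)$, which is even whenever $k$ is even (and also when $k$ and $n$ are both odd); so parity yields no contradiction in general. Worse, you look for ``a $q_j$-node of degree $j<k$ outside $D$'', but $D$ is by definition the set of \emph{all} deficient nodes, so no such node can exist. The correct and much simpler argument uses the connectedness you have already secured in phase~(c): if $|D|=k$, then the clique on $D$ already supplies each $v\in D$ with $k-1$ neighbours, and since $\deg(v)\leq k-1$ these are \emph{all} of $v$'s neighbours; hence $D$ has no active edge to $V_I\setminus D$, so $D$ is a separate component. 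As $n\geq k+1$, this contradicts the single-component property, and therefore $l\leq k-1$.

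One smaller point: your claim that ``no edge modification can fire'' in $\cs$ is not a routine rule check under your stated definition of $\cs$. If two deficient $q$-nodes $q_i,q_j$ with $1\leq i,j<k$ are joined by an inactive edge, rule $(q_i,q_j,0)\to(q_{i+1},q_{j+1},1)$ is enabled. You should add to the definition of $\cs$ that the deficient set $D$ induces a clique (which you later prove must hold anyway); with that addition the output-stability check really is routine, and the reachability argument in phase~(b) then needs the extra line ``apply degree-raising rules until none are enabled, at which point $D$ is a clique''.
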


It is interesting to point out that the number of states can be substantially reduced in some cases by relying on the computability of the target-degree $k$. For an example, we show that we can make a node $u$ obtain $2^d$ neighbors by using only $2(d+2)$ states, for all fixed integers $d$. Node $u$ is initially in state $q_0$ and all other nodes are in state $a_0$. The protocol is $(q_0,a_0,0)\ra (q_0^\prime,a_1,1)$, $(q_0^\prime,a_0,0)\ra (q,a_1,1)$, $(q,a_i,1)\ra (q_{i+1},a_{i+1},1)$, $(q_j,a_0,0)\ra (q,a_j,1)$ for all $1\leq i\leq d-1$, $2\leq j\leq d$. Note that $u$ initially collects 2 neighbors (by activating edges) which go to state $a_1$. Then for every $a_1$ neighbor that it encounters it makes it an $a_2$ and collects another neighbor which goes to state $a_2$. Eventually both $a_1$ neighbors will become $a_2$ and there will be another $2$ neighbors in state $a_2$, so in total 4 $a_2$ neighbors. This process is repeated $d$ times (the 4 $a_2$s will become 8 $a_3s$, and so on), each time doubling the number of neighbors, thus eventually $u$ will have obtained $2^d$ neighbors. The protocol uses only $2(d+1)$ states for the indices of the $q_i$s and the $a_i$s and another 2 states, namely $q$ and $q_0^\prime$. Clearly, it follows that the target-degree of the nodes is not a lower bound on the size of the protocol.\\

\noindent\textbf{Many Small Components}\\

We show here how to partition the population into small active cliques. This construction is of special value as such a partitioning may serve as a means of maintaining non-interfering clusters. In particular, given such a partitioning, we can easily have a node $u$ perform effective interactions only with nodes belonging to the same component as $u$. This can be easily determined by the state of the connection between the interacting nodes. 

\floatname{algorithm}{Protocol}
\renewcommand{\algorithmiccomment}[1]{// #1}
\begin{algorithm}[!h]
  \caption{\emph{$c$-Cliques}}\label{prot:cliques}
  \begin{algorithmic}
    \medskip
    \State $Q=\{l_0,l_1,\ldots,l_{c-2},f_1,\ldots,f_{c-2},f,\bar{l}_0,\ldots,\bar{l}_{c-2},l,1,2,\ldots,c-1,l_1^\prime,\ldots,l_{c-1}^\prime,r\}$
    \State $\delta$: 
    \begin{align*}
    (l_0,l_0,0)&\ra (l_1,f,1)\\
    (l_i,l_0,0)&\ra (l_{i+1},f,1), \text{ if } i<c-2\\
	       &\ra (\bar{l}_1,1,1), \text{ if } i=c-2\\
    (l_i,l_j,0)&\ra (l_{i+1},f_j,1), \text{ if } j\leq i<c-2\\
	       &\ra (\bar{l}_0,f_j,1), \text{ if } i=c-2\\
    (f_i,f,1)&\ra (f_{i-1},l_0,0), \text{ if } i>1\\
	     &\ra (f,l_0,0), \text{ if } i=1\\
    (\bar{l}_i,f,1)&\ra (\bar{l}_{i+1},1,1), \text{ if } i<c-2\\
	           &\ra (l,1,1), \text{ if } i=c-2\\
    (i,j,0)&\ra (i+1,j+1,1), \text{ if } i<c-1 \text{ and } j<c-1\\
    (l,i,1)&\ra (r,l_i^\prime,1)\\
    (l_i^\prime,l_j^\prime,1)&\ra (l_{i-1}^\prime,l_{j-1}^\prime,0)\\
    (l_i^\prime,r,1)&\ra (i,l,1)
    \phantom{\hspace{10cm}}
    \end{align*}
  \end{algorithmic}
\end{algorithm}

\begin{theorem}
For every fixed positive integer $c$, Protocol \emph{$c$-Cliques} (see Protocol \ref{prot:cliques}) constructs $\lfloor n/c\rfloor$ cliques of order $c$ each.
\end{theorem}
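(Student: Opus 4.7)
Let $q = \lfloor n/c \rfloor$ and $r = n - qc$. The plan follows the correctness-plus-fairness template used for the other constructors in this paper. I would first identify the set $\cs$ of output-stable configurations, namely those with $q$ disjoint $c$-cliques (each with one node in state $l$, $c-1$ peripherals in state $c-1$, and all $\binom{c}{2}$ internal edges active) together with $r$ isolated nodes in state $l_0$ and every remaining edge inactive. A direct inspection of $\delta$ confirms that no rule is effective in such a configuration. Correctness then reduces to showing that, from every reachable configuration, there is a finite schedule reaching some $C_s \in \cs$; fairness does the rest.

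The reachability argument rests on a structural invariant that I would verify by induction on execution length: every active component of a reachable configuration contains at most one ``leader'' (a node in $\{l_i\} \cup \{\bar{l}_i\} \cup \{l, r\} \cup \{l_i^\prime\}$), and the index on such a leader records progress, namely the number of followers absorbed for $l_i$, the number of peripherals already produced for $\bar l_i$, or the current active degree for $l_i^\prime$. The non-leader nodes of a component are in follower states ($f, f_i$) or numbered states ($1, \ldots, c-1$). With this invariant in hand, I would split the schedule into three consecutive stages.

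The \emph{growth} stage absorbs isolated $l_0$ nodes via $(l_i, l_0, 0) \ra (l_{i+1}, f, 1)$ and merges partial components via $(l_i, l_j, 0) \ra (l_{i+1}, f_j, 1)$, with $(f_i, f, 1) \ra (f_{i-1}, l_0, 0)$ returning excess followers to the isolated pool, so every component monotonically approaches size $c$. The \emph{closure} stage fires once a component first reaches size $c$: its leader runs through the $\bar l$ states, converting all followers to peripherals in state $1$ and finally becoming $l$, and then $(i, j, 0) \ra (i+1, j+1, 1)$ activates every missing pairwise edge inside the clique, eventually bringing each peripheral to state $c-1$. The \emph{cleanup} stage removes any spurious inter-clique edges produced by that same rule: the leader temporarily delegates to a peripheral via $(l, i, 1) \ra (r, l_i^\prime, 1)$, two primed peripherals from different cliques meet across the offending edge and cancel it via $(l_i^\prime, l_j^\prime, 1) \ra (l_{i-1}^\prime, l_{j-1}^\prime, 0)$, and each then returns to restore its leader via $(l_i^\prime, r, 1) \ra (i, l, 1)$.

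The main obstacle will be precisely the cleanup stage, since $(i, j, 0) \ra (i+1, j+1, 1)$ cannot distinguish intra- from inter-clique pairs and can therefore keep recreating spurious edges, while primed excursions must not cannibalize genuine internal edges of their own clique or livelock between cliques. To address this I would construct a lexicographic potential $\Phi(C)$ whose coordinates count, in order, the incomplete cliques, the spurious inter-clique active edges, and the primed nodes, and argue that from any $C \notin \cs$ a short sequence of interactions strictly decreases $\Phi$ (by completing a clique, canceling a spurious edge, or returning a primed peripheral to its leader). Well-foundedness of $\Phi$ combined with fairness will then yield convergence to $\cs$.
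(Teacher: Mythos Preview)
Your overall decomposition into growth, closure, and cleanup matches the paper, and you correctly single out cannibalization of intra-clique edges as the crux. But the invariant you propose to carry the argument is not the right one, and it fails precisely at the point where you need it.

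You assert that every \emph{active component} contains at most one leader. This holds through growth and closure, but the rule $(i,j,0)\ra(i+1,j+1,1)$ can fire between peripherals of two \emph{different} clique-groups; that activates a spurious edge, merges the two groups into a single active component, and that component now carries two leader tokens. (Even within one group your phrasing is off: after $(l,i,1)\ra(r,l_i',1)$ the group has both an $r$ and an $l_i'$, which you list as separate leader states.) Since the spurious inter-group edges are exactly what cleanup must remove, ``one leader per active component'' cannot be the invariant that rules out cannibalization.

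The invariant the paper actually uses is more local: each \emph{star-group} (the leader together with the $c-1$ nodes it acquired during growth, irrespective of any other active edges) carries exactly one mobile leader token, and the swap rules $(l,i,1)\ra(r,l_i',1)$ and $(l_i',r,1)\ra(i,l,1)$ only move that token along an internal leader--peripheral edge of its own group. Consequently any active edge whose two endpoints are both primed must join two distinct groups, so $(l_i',l_j',1)\ra(l_{i-1}',l_{j-1}',0)$ deactivates only spurious edges and never internal ones. With this in hand the paper's argument is short: wrong edges are eventually hit while both endpoints are primed (fairness), correct edges are never hit that way, and peripherals below degree $c-1$ keep trying to connect; hence the configuration stabilizes to $q$ cliques. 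Your lexicographic potential is then unnecessary, and in any case it would still require the corrected invariant to ensure the second coordinate is well-defined (i.e.\ that cleanup never decrements the count of \emph{correct} edges).

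A small side point: your description of $\cs$ places the $r=n-qc$ leftover nodes as isolated $l_0$'s, but for $r\ge 2$ the rule $(l_0,l_0,0)\ra(l_1,f,1)$ is effective on them; the leftovers instead stabilize as a partial star that cannot complete.
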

\begin{proof}
The protocol first constructs $\lfloor n/c\rfloor$ components of order $c$, each having a unique leader (state `$l$') directly connected to $c-1$ followers (state $f$). Each follower then tries to become connected to the other $c-1$ followers of the component. To do this, it becomes connected one after the other to $c-1$ other followers. As it cannot distinguish the followers of its component from the followers of other components several of these connections may be wrong. It suffices to prove that the protocol recognizes wrong connections and deactivates them. Then, as followers always try to make their degree $c-1$ when it is still less than $c-1$ and as wrong connections between different components are always corrected, it follows (by fairness) that eventually each component will become a clique (having only correct connections). At that time, no new connections may be created and no existing connection can be deactivated (as they are all correct), and the correctness of the protocol follows. To recognize erroneous connections, the leader of a component constantly visits the followers of its component and checks any active connections that it may encounter during its stay (the duration of its stay is nondeterministic as it depends on the chosen interactions). If it ever encounters another leader over an active connection, then this is clearly a connection between different components and the leaders deactivate that connection and decrease the counters of the corresponding followers. Clearly, by fairness, every wrong connection will eventually be selected for interaction while having a leader in each of its endpoints. Finally, note that correct connections (between nodes of the same component) are never deactivated as at any time at most one of their endpoints may be occupied by a leader.
\qed
\end{proof}

\noindent\textbf{Replication}\\

We now study the related problem of replicating a given input graph $G_1=(V_1,E_1)$. Let $V_2=V_I\bs V_1$ be the set of the remaining nodes. The protocols use the nodes of $V_2$ to construct a replica $G_2$ of $G_1$, thus it must hold that $|V_2|\geq |V_1|$. In most cases, it suffices that $V_2=V_1$. Throughout the section we assume that nodes in $V_1$ are in different initial states than nodes in $V_2$. We usually use $q_0$ and $r_0$ as the initial states of nodes in $V_1$ and $V_2$, respectively. Additionally, $E_1$ is defined by the active edges between nodes in $V_1$. We assume that $G_1$ is connected.

We present a very simple protocol (Protocol \ref{prot:lreplic}) that exploits the election of a unique leader to copy $G_1$ on $|V_2|=|V_1|$ free nodes.

\floatname{algorithm}{Protocol}
\renewcommand{\algorithmiccomment}[1]{// #1}
\begin{algorithm}[!h]
  \caption{\emph{Leader-Replication}}\label{prot:lreplic}
  \begin{algorithmic}
    \medskip
    \State $Q=\{q_0,r_0,l,l_a,l_d,f,f_a,f_d,r,r_a,r_d,r^\prime\}$
    \State $\delta$: 
    \begin{align*}
    (q_0,r_0,0) &\ra (l,r,1)\\
    (l,l,x) &\ra (l,f,x)\\
    (l,f,0) &\stackrel{1/2}\ra(l_d,f_d,0)\\
	    &\stackrel{1/2}\ra(f,l,0)\\
    (l,f,1) &\stackrel{1/2}\ra(l_a,f_a,0)\\
	    &\stackrel{1/2}\ra(f,l,1)\\
    (x_i,r,1) &\ra (x_i,r_i,1), \text{ for } x\in\{l,f\} \text{ and } i\in\{a,d\}\\
    (r_a,r_a,0) &\ra (r^\prime,r^\prime,1)\\
    (r_d,r_d,1) &\ra (r^\prime,r^\prime,0)\\
    (r^\prime,x_i,1) &\ra (r,x,1)\\
    (l_i,l_j,x) &\ra (l_i,f_j,x), \text{ for } i,j\in\{a,d\}
    \phantom{\hspace{8cm}}
    \end{align*}
  \end{algorithmic}
\end{algorithm}

\begin{theorem}
Protocol \emph{Leader-Replication} constructs a copy of any connected input graph $G_1=(V_1,E_1)$ with no waste. It uses 12 states and its expected running time under the uniform random scheduler is $\Theta(n^4\log n)$.
\end{theorem}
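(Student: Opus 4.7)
The plan is to separate correctness from the running-time analysis, and within each to identify four overlapping phases of the protocol: (i) a \emph{pairing} phase in which the rule $(q_0,r_0,0)\to(l,r,1)$ matches every $V_1$ node with a unique $V_2$ partner via an active pairing edge; (ii) a \emph{leader-elimination} phase in which $(l,l,x)\to(l,f,x)$ is a one-to-one elimination restricted to $V_1$, eventually leaving a single leader; (iii) an \emph{edge-copying} phase in which the unique leader wanders through $V_1$, probabilistically marks the $V_1$-edge it sits on ($a$ on active edges, $d$ on inactive ones), the mark is propagated to the two corresponding replicas via $(x_i,r,1)\to(x_i,r_i,1)$, and the replicas update their $V_2$-edge via $(r_a,r_a,0)\to(r',r',1)$ or $(r_d,r_d,1)\to(r',r',0)$; and (iv) a \emph{reset} phase, via $(r',x_i,1)\to(r,x,1)$, returning the two $V_1$ nodes to $\{l,f\}$ so the leader may continue.

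For correctness, I would identify the set $\cs$ of output-stable configurations as those with a unique $l$ in $V_1$, every $V_1$ node paired with a distinct $V_2$ node by an active pairing edge, and the active $V_2$-internal edges forming a copy of $G_1$. I would then exhibit, from any reachable configuration $C$, a finite sequence of interactions reaching some $C_s\in\cs$. The key invariants to maintain are that the $V_1$--$V_2$ pairing is preserved as a matching (no rule gives a $V_1$ node a second $V_2$ partner), that at most one $V_1$-edge is ``in flight'' at any time (after marking, the leader is locked in state $l_i$ and unlocks only via the reset rule), and that the pair of replicas in states $r_a$ or $r_d$ always corresponds exactly to the edge currently being processed. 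Combined with fairness and the fact that the leader eventually probes every incident $V_1$-edge, this shows that every edge of $E_1$ is eventually copied to $V_2$ and that no spurious $V_2$-edge is activated.

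For the running time I would analyse each phase separately. Pairing completes in $O(n^2)$ expected steps, since when $k$ unpaired pairs remain the success probability of a $(q_0,r_0,0)$ interaction is $\Theta(k^2/n^2)$. Leader elimination in $V_1$ takes $\Theta(n^2)$ by Proposition \ref{pro:one-to-one}. The edge-copying phase dominates: a single marking cycle (leader meets a follower, marks, two replica-propagation interactions, one replica-update interaction, two reset interactions) has $O(1)$ effective transitions, but each requires a \emph{specific} pair of nodes (involving the unique leader, or the two specific replicas) and therefore incurs an expected $\Theta(n^2)$ wait. Over the $\Theta(|E_1|)=O(n^2)$ active edges that must be copied, this yields $O(n^4)$ replica work. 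The extra $\log n$ factor arises from the meet-everybody character of the leader's exploration: as in Proposition \ref{pro:meet-ever}, a full sweep in which the leader interacts at least once with each of the $\Theta(n)$ followers costs $\Theta(n^2\log n)$, and at least one such sweep is unavoidable. Combining gives $O(n^4\log n)$. For the matching lower bound I would exhibit an execution on an input $G_1$ with $\Theta(n^2)$ edges that forces each of the $\Omega(n^2)$ markings to wait $\Omega(n^2)$ expected steps for the leader to be selected, and additionally an $\Omega(\log n)$ coupon-collector factor from the meet-everybody sub-process the leader must perform in order to encounter every other node.

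The hard part will be the tight analysis of the edge-copying phase: I must argue that the leader's \emph{effective random walk} on $V_1$---slowed by the $1/2$ split between swapping and marking, and by the leader being locked during each replica cycle---visits the right nodes with the right frequency, and that sequential markings really do not interfere. In particular I will need to argue carefully that inactive $V_1$-edges probabilistically marked as $d$ do not derail the reset chain, so that the protocol never deadlocks and every active $V_1$-edge is processed in finite expected time. Isolating the log factor from a crude $\Theta(n^4)$ bound will require a meet-everybody-style argument in the spirit of Proposition \ref{pro:meet-ever}, matched at the lower bound by a family of inputs on which this coupon-collector cost is genuinely necessary.
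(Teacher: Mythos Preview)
Your phase decomposition and the correctness sketch match the paper's approach closely, and the $\Theta(n^2)$ bounds for the pairing and leader-elimination phases are fine. The gap is in how you obtain the $\log n$ factor for the replication phase.

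You write that $\Theta(|E_1|)=O(n^2)$ edges times $O(n^2)$ per copy gives $O(n^4)$ ``replica work'', and then try to add a $\log n$ factor via a meet-everybody argument on the $\Theta(n)$ followers. This does not work: a meet-everybody sweep costs $\Theta(n^2\log n)$ \emph{in total}, not as a multiplier, so your accounting yields $O(n^4)+O(n^2\log n)=O(n^4)$, contradicting the claimed $\Theta(n^4\log n)$. The underlying error is the implicit assumption that the leader processes each $V_1$-edge exactly once. It does not: the leader has no memory of which edges it has already marked, so it marks edges at random, and stabilisation requires that \emph{every} edge (among $\Theta(n^2)$ of them) be marked at least once. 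That is a coupon-collector (edge-cover) argument on $\Theta(n^2)$ coupons, not a meet-everybody argument on $\Theta(n)$ nodes.

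The paper's argument runs as follows. Condition on the leader being the one to interact, and treat each copy cycle as a unit. In two consecutive leader interactions the probability that a fixed edge $e$ gets marked is $p_e\simeq (1/n)(1/2)\cdot(1/n)(1/2)=\Theta(1/n^2)$ (swap onto one endpoint, then mark the other). Collecting all $\Theta(n^2)$ edges therefore takes $\Theta(n^2\log n)$ such leader-interaction pairs. Now remove the conditioning: each leader interaction is delayed by a factor $\Theta(n)$, and roughly half of them trigger a copy cycle of cost $\Theta(n^2)$. Charging each of the $\Theta(n^2\log n)$ idealised steps by $\Theta(n^2)$ yields $\Theta(n^4\log n)$, which is the dominating term. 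The $\log n$ thus arises from edge-level coupon collection, not from node-level meet-everybody; once you make that switch, both the upper and the lower bound follow from the same calculation, and you will not need a separate family of hard inputs.

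One further caution on correctness: your invariant ``at most one $V_1$-edge is in flight at any time'' holds only after a unique leader has emerged. Before that, several leaders may have marked edges concurrently, and the elimination rule $(l_i,l_j,x)\to(l_i,f_j,x)$ can leave stray $f_a,f_d,r_a,r_d$ states. You (and, to be fair, the paper) should argue that once the leader is unique, its eventual visit to every edge overwrites any such residue.
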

\begin{proof}
Initially, all nodes of $V_1$ are in $q_0$ and all nodes of $V_2$ are in $r_0$. The protocol first creates a matching between $V_1$ and $V_2$. Then it starts pairwise eliminations between leaders, that is when two leaders (nodes in state `$l$') interact one of them survives (i.e. remains $l$) and the other becomes a follower (state `$f$'). Eventually the protocol ends up with a unique leader and $|V_1|-1$ followers. Moreover, when a leader and a follower meet they swap their states with probability $1/2$. With the remaining $1/2$ probability they become either $l_a,f_a$ or $l_d,f_d$ depending on whether the edge joining them was active or inactive, respectively. In both cases they mark their matched nodes from $V_2$ to either activate or deactivate the edge between them in $V_2$ accordingly. Once there is a unique leader, the leader moves nondeterministically over the nodes of $V_1$ and again nondeterministically applies this copying process on the edges of $E_1$. Thus it will eventually apply this copying process to all edges of $E_1$ and as there are no conflicts with other activations/deactivations (as no other leaders exist) $G_2$ eventually becomes equal to $G_1$. Finally, note that the active edges of the matching between $V_1$ and $V_2$ are never deactivated but this is not a problem provided that $Q_{out}=\{r,r_a,r_d\}$ as each such edge $uv$ has $u$ in a state from $Q\bs Q_{out}$ and $v$ in a state from $Q_{out}$, that is it is not considered as part of the output.

Now for the running time we consider three phases: the \emph{matching formation}, the \emph{leader election}, and the \emph{unique-leader replication}. The matching formation phase begins from step 1 and ends when the last $q_0$ becomes $l$, i.e. when all nodes in $V_1$ have been matched to the nodes of $V_2$. It is not hard to see that the probability of the $i$th edge of the matching to be established (given $(i-1)$ established matches) is $p_i=[2(n/2-i)^2]/[n(n-1)]$ and the corresponding expectation is $\E[X_i]=1/p_i=\Theta(n^2/(n-i)^2)$. Then similarly to the coupon collector's application in the running time of Protocol \emph{Fast-Global-Line} in Theorem \ref{the:gline3} we have that the expected running time of this phase is $\E[X]=\Theta(n^2)$. An almost identical analysis yields that the expected running time of the leader election phase is also $\Theta(n^2)$. Thus it remains to estimate the time it takes for the unique leader to copy every edge of $E_1$. Given that the leader has marked the endpoints of a particular edge of $E_1$ then copying and restoring the state of the leader takes on average $\Theta(n^2)$ time (as a constant number of particular interactions must occur and each one occurs with probability $1/n^2$). Now we consider the time for copying as constant and try to estimate the time it takes for the leader to ``collect'' (i.e. visit and mark) all edges of $E_1$. Assume also that the leader is selected in every step to interact with one of its neighbors (the truth is that it is selected every $\Theta(n)$ steps on average). If $p_e$ is the probability that a specific edge $e$ is selected after two subsequent interactions then $p_e\simeq(1/n)(1/2)(1/n)(1/2)=\Theta(1/n^2)$, where $(1/n)(1/2)$ is the probability that the leader interacts with and decides to move on one endpoint of $e$ and $(1/n)(1/2)$ the probability that it then interacts with and decides to mark the other endpoint of $e$. Let $Y_i$ be the r.v. of the number of steps between the $(i-1)$th and $i$th edge collected and $p_i$ be the probability of a success in two consecutive steps of the $i$th epoch. Clearly, $p_i\simeq(n^2-i)/n^2$, $\E[Y_i]=1/p_i=n^2/(n^2-i)$, and $\E[Y]=\E[\sum_{i=0}^{n^2-1} Y_i]=n^2\sum_{i=0}^{n^2-1}1/(n^2-i)=n^2\sum_{i=1}^{n^2}1/i=\Theta(n^2\log n)$. Thus, provided that the leader always interacts and that every copying that it performs takes constant time, the expected time until the unique-leader replication phase ends is $\Theta(n^2\log n)$. Now, notice that on average it takes $\Theta(n)$ steps for the leader to interact and that in half of its interactions the leader performs a copying that takes $\Theta(n^2)$ steps to complete. That is, each of the above $\Theta(n^2\log n)$ steps is charged on average by $n$ and half of them are charged by $n^2$, i.e. half of the steps are charged by $\Theta(n)$ and the other half are charged by $n^2+n=\Theta(n^2)$. We conclude that the expected running time of the unique-leader replication phase is $\Theta(n^3\log n)+\Theta(n^4\log n)=\Theta(n^4\log n)$. This is clearly the dominating factor of the total running time of the protocol.
\qed
\end{proof}

Table \ref{tab:ulb} summarizes all upper and lower bounds that we established in Sections \ref{sec:global-line} and \ref{sec:basic-con}.

\begin{table}[h]
\normalsize
\setlength{\tabcolsep}{15pt}
\begin{center}
\begin{tabular}{  l  c  c c  }
  \hline
  Protocol & \# states & Expected Time & Lower Bound \\ \hline
  \emph{Simple-Global-Line} & 5 & $\Omega(n^4)$ and $O(n^5)$ & $\Omega(n^2)$ \\
  \emph{Intermediate-Global-Line} & 8 & $\Omega(n^3)$ and $O(n^4)$ & $\Omega(n^2)$ \\ 
  \emph{Fast-Global-Line} & 9 & $O(n^3)$ & $\Omega(n^2)$ \\
  \emph{Cycle-Cover} & 3 & $\Theta(n^2)$ (optimal) & $\Omega(n^2)$ \\
  \emph{Global-Star} & 2 (optimal) & $\Theta(n^2\log n)$ (optimal) & $\Omega(n^2\log n)$ \\ 
  \emph{Global-Ring} & 9 & & $\Omega(n^2)$ \\ 
  \emph{2RC} & 6 & & $\Omega(n\log n)$ \\ 
  \emph{$k$RC} & $2(k+1)$ &  & $\Omega(n\log n)$ \\ 
  \emph{$c$-Cliques} & $5c-3$ &  & $\Omega(n\log n)$ \\
  \emph{Leader-Replication} & 12 & $\Theta(n^4\log n)$ & \\ \hline
\end{tabular}
\end{center}
\caption{All upper and lower bounds established in Sections \ref{sec:global-line} and \ref{sec:basic-con}. \emph{Leader-Replication} is a randomized protocol thus it concerns class $\rem{PREL}$, while all other protocols do not rely on randomization thus they concern $\rem{REL}$.}
\label{tab:ulb}
\end{table}

\section{Generic Constructors}
\label{sec:gencon}

In this section, we ask whether there is a generic constructor capable of constructing a large class of networks. We answer this in the affirmative by presenting (i) constructors that simulate a Turing Machine (TM) and (ii) a constructor that simulates a distributed system with names and logarithmic local memories. Denote by $l$ the binary length of the input of a TM and by $n$ the size of a population. Note that Theorems \ref{the:gencon-half} to \ref{the:no-waste} construct a random graph in the useful space and that graph constitutes the input to a TM. Thus, if the useful space consists of $h$ nodes, the input to the TM has size $l=\Theta(h^2)$ and, as all of our constructors have $h=\Theta(n)$, in what follows it holds that $l=\Theta(n^2)$. Moreover, the TM can use space at most $O((n-h)^2)=O(n^2)$, where $n-h$ is the waste.

We now briefly describe the main idea behind all of our generic constructors that simulate a TM (see also Figure \ref{fig:main-loop}). Assume that we are given a decidable graph-language $L$ and we are asked to provide a NET that constructs $L$. The NET that we give works as follows:

\begin{enumerate}
 \item It constructs on $k$ of the nodes a network $G_1$ capable of simulating a TM and of constructing a random network on the remaining $n-k$ nodes. Let $V_1\subseteq V$ be the set of the $k$ nodes and $V_2=V\bs V_2$ the set of the remaining $n-k$ nodes. $G_1$ is usually a sufficiently long line or a bounded degree network as these networks can be operated as TMs. A line also serves as a measure of order as we can match a line of length $k$ with $k$ other nodes and by exploiting the ordering of the line we may achieve an ordering of the other nodes. 
 \item The NET exploits $G_1$ to construct a random network on $V_2$. The idea is to exploit the structure of $G_1$ so that it can perform a random coin tossing on each edge between nodes of $V_2$ exactly once. In this manner, it constructs a random network $G_2$ from $G_{n-k,1/2}$ on the nodes of $V_2$. It is worth noting that all networks of $G_{n-k,1/2}$ have an equal probability to occur and this results in an equiprobable constructor.
 \item The NET simulates on $G_1$ the TM that decides $L$ with $G_2$ as its input. The only constraint is that the space used by the TM should be at most the space that the constructor can allocate in $G_1$. If the TM rejects, then the protocol goes back to 2, that is it draws another random network and starts a new simulation. Otherwise, its output stabilizes to $G_2$.
\end{enumerate}

\begin{figure}[!hbtp]
\centering{
\includegraphics[width=0.53\textwidth]{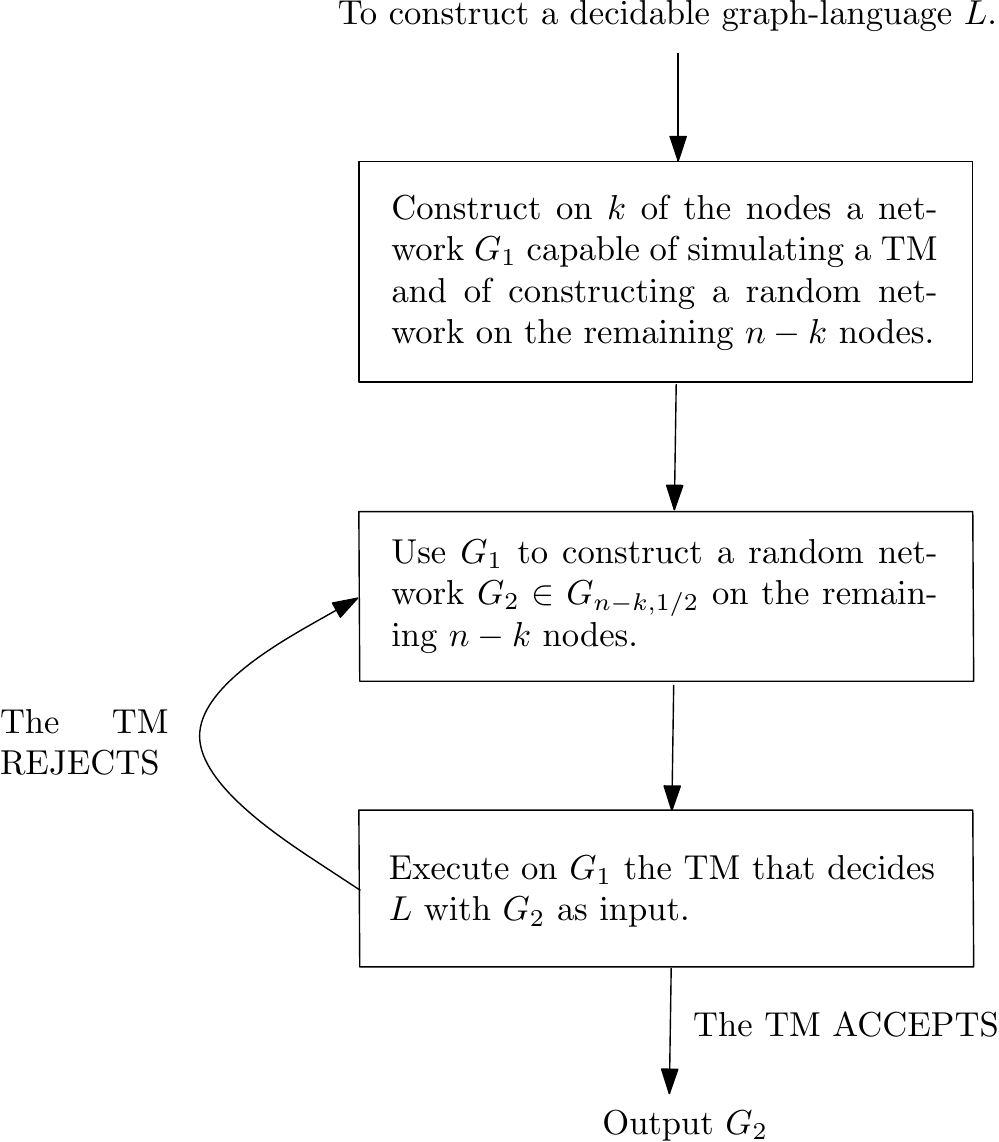}
}
\caption{The main mechanism used by all generic constructors in this section. The loop repeats until the TM accepts for the first time. When this occurs, the random graph $G_2$ constructed belongs to $L$ and thus the protocol may output $G_2$. Note that this is not a terminating step. The protocol just does not repeat the loop and thus its output forever remains stable to $G_2$.} \label{fig:main-loop}
\end{figure} 

\subsection{Linear Waste} 

\begin{theorem} [Linear Waste-Half] \label{the:gencon-half}
$\rem{DGS}(O(\sqrt{l}))\subseteq\rem{PREL}(\lfloor n/2\rfloor)$. In words, for every graph language $L$ that is decidable by a $O(\sqrt{l})$-space TM, there is a protocol that constructs $L$ equiprobably with useful space $\lfloor n/2\rfloor$.
\end{theorem}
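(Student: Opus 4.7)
The plan is to implement the three-step generic scheme described at the start of Section \ref{sec:gencon}, allocating the $\lceil n/2\rceil$-node ``waste'' side $V_1$ to build a TM and the $\lfloor n/2\rfloor$-node ``useful'' side $V_2$ to host the candidate random graph. First I would form a perfect matching between $V_1$ and $V_2$ using two distinct initial states $q_0$ (on $V_1$) and $r_0$ (on $V_2$): a rule of the form $(q_0,r_0,0)\to(w_0,u_0,1)$ activates a single edge between each as-yet-unmatched pair, so that eventually each $V_1$-node is adjacent to exactly one $V_2$-node (with at most one leftover $V_1$-node if $n$ is odd). This matching is preserved throughout and serves as an addressing mechanism: the TM issues every command on a logical node $v\in V_2$ through its matched waste node. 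Once the matching exists, waste and useful nodes are locally distinguishable from their states alone, so the next sub-protocol can be restricted to act only on $V_1$: run \emph{Fast-Global-Line} of Section \ref{subsec:third-line} to make $V_1$ into a spanning line of length $\Theta(n)$ with a unique leader at an endpoint. Since the target TM uses only $O(\sqrt{l})=O(n)$ cells, a constant number of tape symbols per line node fits in the state, and a head-flag component simulates the TM in the standard way by walking the leader along the line.

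Next I would use this TM-capable line to build a random graph $G_2\in G_{|V_2|,1/2}$ on the useful side, exploiting the coin-toss transitions allowed in $\rem{PREL}$. The TM iterates a pair of pointers $(i,j)$, $1\le i<j\le |V_2|$, stored on the tape; for each pair it instructs the two waste nodes matched to the $i$th and $j$th useful nodes to coordinate a single probabilistic transition on the edge $v_iv_j$ of $V_2$ that activates it with probability $1/2$ and leaves it inactive with the remaining $1/2$, then marks that edge as ``decided'' in a binary flag (stored in the endpoint states) so that the TM never revisits it. By fairness all $\binom{|V_2|}{2}$ coins are eventually tossed, and since the tosses are independent and unbiased every $G\in G_{|V_2|,1/2}$ arises with probability $2^{-\binom{|V_2|}{2}}$, yielding equiprobability.

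Finally, I would simulate on the line the $O(\sqrt{l})$-space TM that decides $L$ with the just-constructed $G_2$ as input; reading an adjacency entry is done by moving two head-pointers to the appropriate waste nodes and inspecting the matched $V_2$-edge. On reject, the protocol enters a \emph{reinitialization} sub-phase that deactivates all $V_2$-edges, clears all decided-flags, and rewinds the tape (while preserving the matching and the line structure), after which the coin-tossing and decision phases are repeated. On accept, a phase counter is advanced to a frozen value on which no transition is effective on $V_2$-edges, so the output graph on $V_2$ remains $G_2\in L$ forever. Setting $Q_{out}$ to the $V_2$-node states reachable only after such an accept ensures that the protocol's output on $V_2$ stabilizes exactly when the first accepting iteration completes, and by equiprobability plus fairness every $G\in L$ of order $\lfloor n/2\rfloor$ has positive probability of being produced in each iteration.

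The main obstacle is the \emph{composition and clean reinitialization} across iterations: the matching between $V_1$ and $V_2$, the line structure on $V_1$, the TM program and tape contents must all be preserved when the $V_2$-edges and decided-flags are wiped after a reject, while each of the three phases (matching, TM-tape setup, random-draw-and-decide) must be allowed to complete without fairness restarting another phase in the middle. This is handled by partitioning each node's state into independent slots (matching, line-role, tape symbol, head flag, phase counter) and gating every transition on the phase counter so that the phases are mutually exclusive; the counter itself is a finite variable carried by the leader and propagated along the line via the simulated TM, which keeps the total state space finite and therefore yields a bona fide NET, establishing the claimed inclusion $\rem{DGS}(O(\sqrt{l}))\subseteq\rem{PREL}(\lfloor n/2\rfloor)$.
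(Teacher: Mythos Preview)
Your high-level architecture matches the paper's, but there are two genuine gaps that break the argument as written.

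\textbf{Homogeneous initial state.} You begin by assuming ``two distinct initial states $q_0$ (on $V_1$) and $r_0$ (on $V_2$)'' and use the rule $(q_0,r_0,0)\to(w_0,u_0,1)$ to build the matching. This violates the NET model: by definition all $n$ processes start in the \emph{same} state $q_0$, so no partition $V_1,V_2$ is given to you. The paper handles this with a symmetry-breaking rule $(q_0,q_0,0)\to(q_u,q_d,1)$: every interaction between two fresh nodes simultaneously creates one $U$-node, one $D$-node, and the matching edge between them. Your matching mechanism must be rewritten in this homogeneous style.

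\textbf{Reinitialization is triggered by line growth, not (only) by TM reject.} Your composition strategy is to gate all transitions on a phase counter carried by the leader, so that ``the phases are mutually exclusive'' and matching, line-building, and random-draw-and-decide run one after the other. This cannot work in the model: neither the completion of the matching nor the completion of the spanning line on $V_1$ is detectable. Fast-Global-Line (or Simple-Global-Line) never announces termination; at any moment there may be several disjoint lines on $U$-nodes, each with its own leader, each already running its own TM simulation on its local $D$-nodes. The paper's fix is orthogonal to yours: it lets the simulation start optimistically on every partial line, and \emph{every time a line expands or two lines merge}, the protocol wipes the $D$-edges of the affected component and reinitializes the TM from scratch. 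Correctness then follows because the last such expansion produces the unique spanning line, after which the final simulation runs undisturbed on the true sets $U$ and $D$. Your reject-triggered reinitialization is still needed for the accept/reject loop, but it does not address this prior, more delicate composition issue; the phase-counter idea should be replaced by (or supplemented with) the expansion-triggered restart.
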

\begin{proof}
We give a high-level description of the protocol, call it $\ca$. Let us begin by briefly presenting the main idea. Given a population of size $n$, $\ca$ partitions the population (apart from one node when $n$ is odd) into two equal sets $U$ and $D$ such that all nodes in $U$ are in state $q_u$, all nodes in $D$ are in state $q_d$ and each $u\in U$ is matched via an active edge to a $v\in D$, i.e. there is an active perfect matching between $U$ and $D$ (see Figure \ref{fig:gencon-overview}). By using the \emph{Simple-Global-Line} protocol (see Protocol \ref{prot:gline} in Section \ref{subsec:simple-global-line}) on the nodes of set $U$, $\ca$ constructs a spanning line in $U$ which has the endpoints in state $q_1$, the internal nodes in state $q_2$, and has additionally a unique leader on some node. We should mention that, though we use protocol Simple-Global-Line here as our reference, any protocol that constructs a spanning line would work. Given such a construction, $\ca$ organizes the line into a TM. The goal is for the TM to compute a graph from $L$ and construct it on the nodes of set $D$. To achieve this, the TM implements a binary counter ($\log n$ bits long) in its memory and uses it in order to uniquely identify the nodes of set $D$ according to their distance from one endpoint, say e.g. the left one. Whenever it wants to modify the state of edge $(i,j)$ of the network to be constructed, it marks by a special activating or deactivating state the $D$-nodes at distances $i$ and $j$ from the left endpoint, respectively. Then an interaction between two such marked $D$-nodes activates or deactivates, respectively, the edge between them. To compute a graph from $L$ equiprobably, the TM performs the following random experiment. It activates or deactivates each edge of $D$ equiprobably (i.e. each edge becomes active/inactive with probability $1/2$) and independently of the other edges. In this manner, it constructs a random graph $G$ in $D$ and all possible graphs have the same probability to occur. Then it simulates on input $G$ the TM that decides $L$ in $\sqrt{l}$ space to determine whether $G\in L$. Notice that the $n/2$ space of the simulator is sufficient to decide on an input graph encoded by an adjacency matrix of $(n/2)^2$ binary cells (which are the edges of $U$). If the TM rejects, then $G\notin L$ and the protocol repeats the random experiment to produce a new random graph $G^\prime$ and starts another simulation on input $G^\prime$ this time. When the TM accepts for the first time, the constructed random network belongs to $L$ and the protocol releases the constructed network by deactivating one after the other the active $(q_u,q_d)$ edges and at the same time updates the state of each $D$-node to a special $q_{out}$ state. Finally, we should point out that, whenever the global line protocol makes progress, all edges in $D$ are deactivated and the TM-configuration is \emph{reinitialized} to ensure that, when the final progress is made (resulting in the final line spanning $U$) the TM will be executed from the beginning on a correct configuration (free of residues from previous partial simulations).

\begin{figure}[!hbtp]
\centering{
\includegraphics[width=0.75\textwidth]{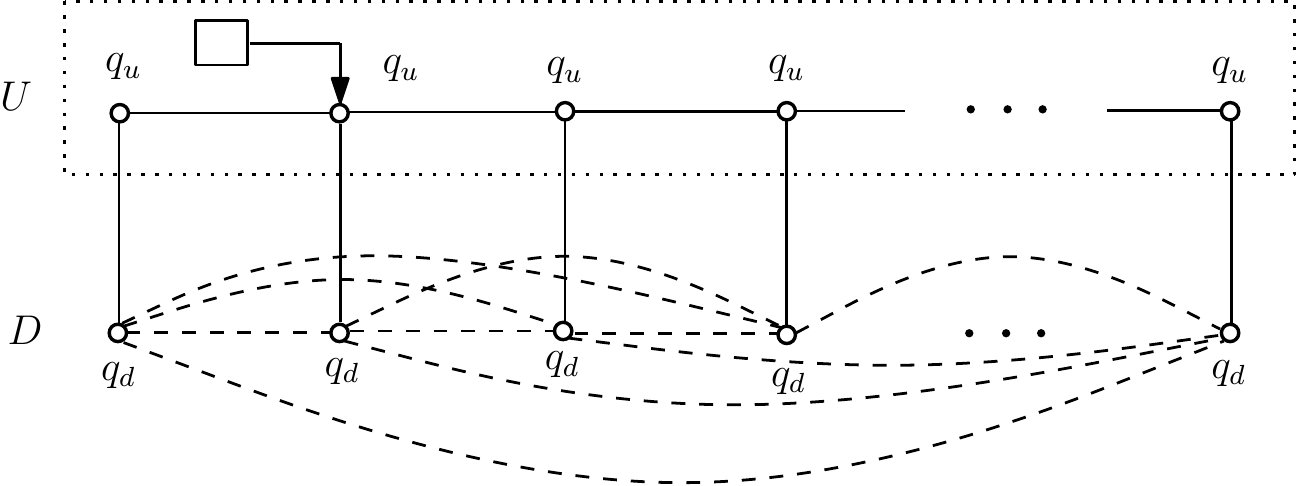}
}
\caption{The population partitioned into sets $U$ and $D$. The vertical active edges (solid) match the nodes of the two sets. The horizontal active edges between nodes in $U$ form a spanning line that is used to simulate a TM. The TM will construct the desired network on the nodes of set $D$ by activating the appropriate edges between them (dashed edges that are initially inactive).} \label{fig:gencon-overview}
\end{figure}  

We now proceed with a more detailed presentation of the various subroutines of the protocol.

\noindent\emph{Simulating the direction of the TM's head.} We begin by assuming that the spanning line has been constructed somehow (we defer for the end of the proof the actual mechanism of this construction), as in Figure \ref{fig:gencon-overview}, and that each node has three components $(c_1,c_2,c_3)$ in its state. $c_1$ is used to store the head of the TM, i.e. the actual state of the control of the TM; assume that initially the head lies on an arbitrary node, e.g. on the second one from the left as in Figure \ref{fig:gencon-overview}. $c_2$ is used to store the symbol written on each cell of the TM. $c_3$ is $l$, $r$, $t$ for ``left'', ``right'', and ``temporal'' respectively, or $\sqcup$ (for ``empty'') and we assume that initially the left endpoint is $l$, the right endpoint is $r$, and all internal nodes are $\sqcup$. As initially the head cannot have any sense of direction, it moves towards an arbitrary neighbor, say w.l.o.g. the right one, and leaves a $t$ on its previous position. The $t$ mark gives to the head a sense of direction on the line. The head can make progress towards one endpoint by just moving only towards the unmarked neighbor (avoiding the one marked by $t$). Once the head reaches the right endpoint for the first time, it starts moving towards the left endpoint by leaving $r$ marks on the way. Once it reaches the left endpoint it is ready to begin working as a TM. Now every time it wants to move to the right it moves onto the neighbor that is marked by $r$ while leaving an $l$ mark on its previous position. Similarly, to move to the left, it moves onto the $l$ neighbor and leaves an $r$ mark on its previous position. In this way, no matter what the position of the head will be, there will be always $l$ marks to its left and $r$ marks to its right, as in Figure \ref{fig:gencon-direction}, and the head can exploit them to move correctly. Additionally, we ensure that the endpoints are in special states, e.g. $l_e$ and $r_e$, to ensure that the head recognizes them in order to start moving to the opposite direction.  

\begin{figure}[!hbtp]
\centering{
\includegraphics[width=0.75\textwidth]{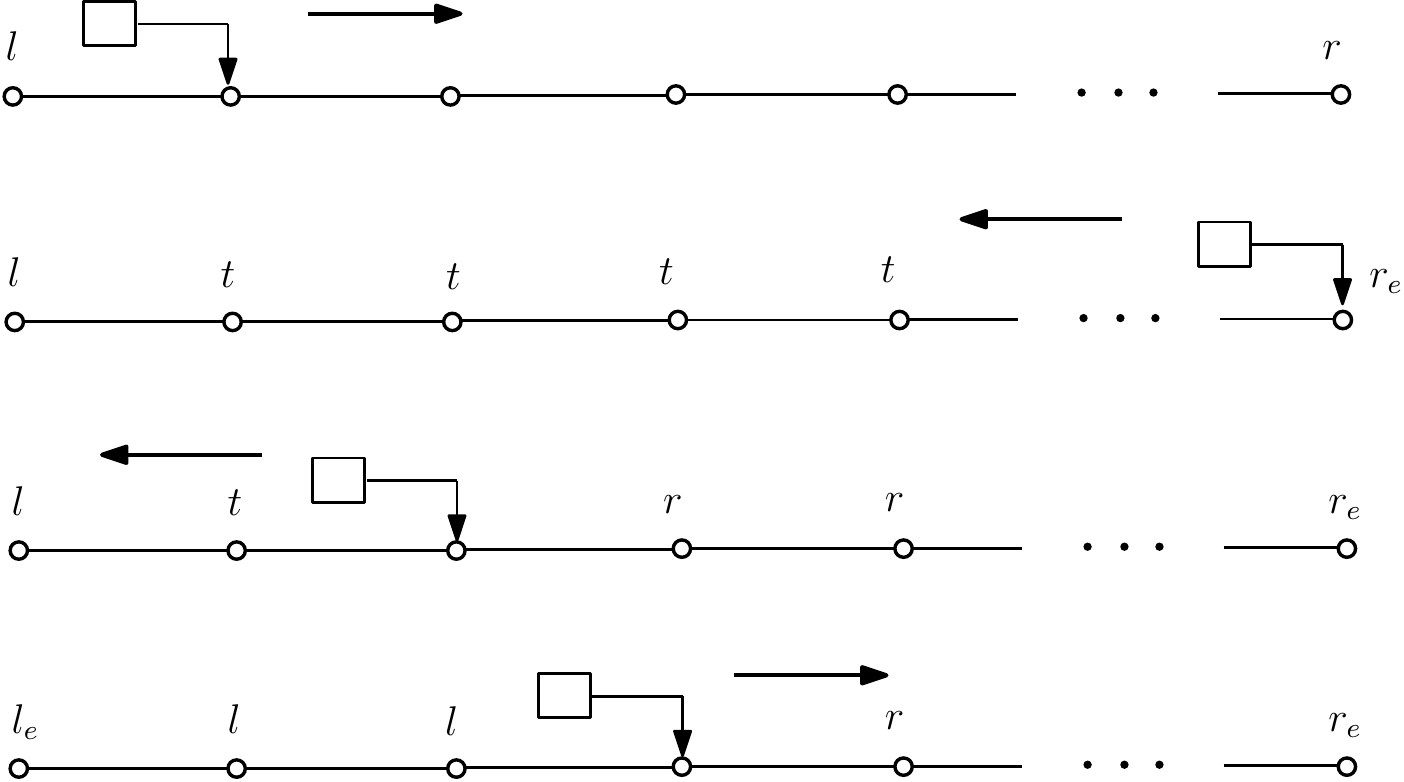}
}
\caption{The main idea of using $l$ and $r$ marks to simulate the movement of the head of a TM. The first three snapshots present the phase of the initialization of the marks where a temporal $t$ mark is used to move for the first time towards an endpoint. In the fourth snapshot, after the head has visited both endpoints, the $t$ marks have been removed and all nodes to the left of the head are marked $l$ while all nodes to the right are marked $r$. Additionally, the endpoints have special marks to ensure that the head recognizes them.} \label{fig:gencon-direction}
\end{figure}  
 
\noindent\emph{Reading and Writing on the edges of set $D$.} We now present the mechanism via which the TM reads or writes the state of an edge joining two $D$-nodes. The TM uniquely identifies a $D$-node by its distance from the left endpoint. To do this, it implements a binary counter on $\log n$ cells of its memory. Whenever it wants to read (write, resp.) the state of the edge joining the $D$-nodes $i$ and $j$, it sets the counter to $i$, places a special mark on the left endpoint, and repeatedly moves the mark one position to the right while decrementing the counter by one. When the counter becomes 0, it knows that the mark is over the $i$-th $U$-node. Now by exploiting the corresponding active vertical edge it may assign a special mark to the $i$-th $D$-node (Figure \ref{fig:gencon-rw-edges} provides an illustration). By setting the counter to $j$ and repeating the same process, another special mark may be assigned to the $j$-th $D$-node. Now the TM waits for an interaction to occur between the marked $D$-nodes $i$ and $j$. During that interaction edge $(i,j)$ is read (written, resp.) by the corresponding endpoints. Then, in case of a read (and similarly for a write), the TM reads the value of the edge that the endpoints detected, and in both cases unmarks both endpoints resetting them to their original states.

\begin{figure}[!hbtp]
\centering{
\includegraphics[width=0.75\textwidth]{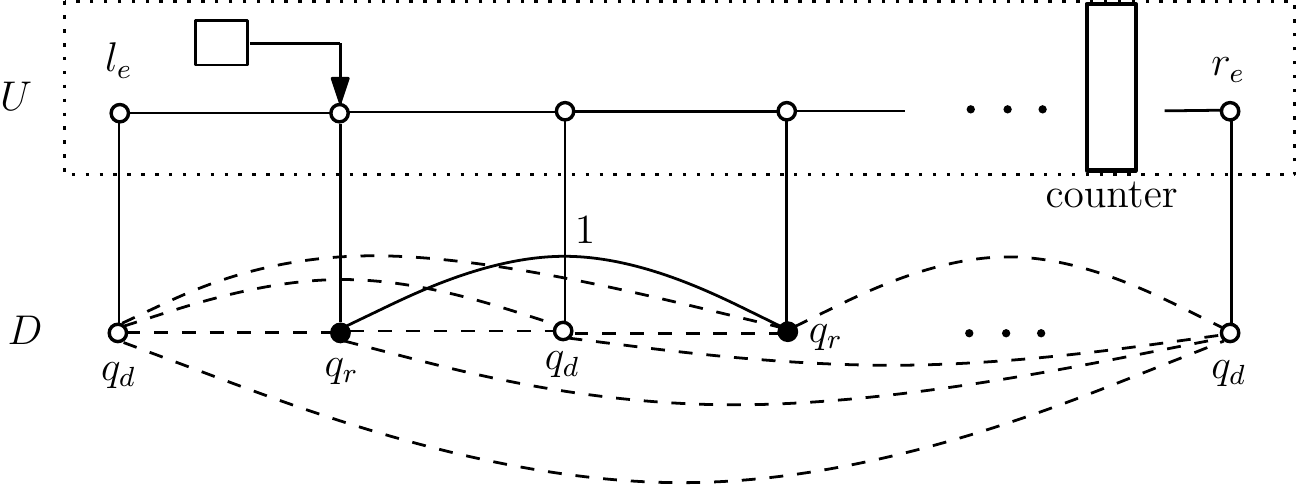}
}
\caption{By exploiting the implemented binary counter, the TM has managed to mark the desired nodes from set $D$, in this case the 2nd and the 4th ones counting from left, which are now in a special ``reading'' state $q_r$. An interaction between them will read the state of the edge joining them, which here happens to be an active one. Then the TM will read that value from one of these two nodes, in this case from the 2nd one. A write is implemented similarly.} \label{fig:gencon-rw-edges}
\end{figure}

\noindent\emph{Creating the input of the TM.} We now describe how the network construction works. As already stated, to simplify the description and in order to present an equiprobable constructor we have allowed nodes to toss a fair coin during their interaction. In particular, we allow transitions that with probability $1/2$ give one outcome and with probability $1/2$ another. Now before executing the simulation, the simulating protocol does the following. It visits one after the other the edges of set $D$ and on each one of them performs the following random experiment: with probability $1/2$ it activates the edge and with probability $1/2$ it deactivates it. The result of this random process is an equiprobable construction of a random graph. In particular, all possible graphs have the same probability to occur. Note that the protocol can detect when all random experiments have been performed because it can detect the endpoints of the spanning line. For example, to visit all edges one after the other we may: (i) place two marks on the left endpoint, let $1\leq i<j\leq n$ be their positions on the line, (ii) for all $1\leq i\leq n-1$, perform random experiments on all $i < j\leq n$ by starting the second mark from position $i+1$ and moving it each time one position to the right, (iii) the process stops when $i$ becomes $n$, i.e. when the first mark occupies the right endpoint (which can be detected). Thus we can safely compose the process that draws the random graph to the process that simulates the TM. Once the random graph has been drawn, the protocol starts the simulation of the TM. Notice that the input to the TM is the random graph that has been drawn on the edges of $D$ which provide an encoding equivalent to an adjacency matrix. There are $l=(n/2)^2$ edges and the simulator has available space $n/2=\sqrt{l}$, which is sufficient for the simulation of a $\sqrt{l}$-space TM. We now distinguish two cases, one for each possible outcome of the simulation.
\begin{enumerate}
 \item The TM rejects: In this case, the constructed random graph does not belong to $L$. The protocol repeats the random experiment, i.e. draws another random graph, and starts over the simulation on the new input.
 \item The TM accepts: The constructed graph belongs to $L$ and the protocol enters the Releasing phase (see below).
\end{enumerate}

\noindent\emph{Releasing.} When the TM accepts for the first time, the simulating protocol updates the head to a special finalizing state $f$. Now the head moves to the left endpoint and starts releasing one after the other the nodes of set $D$ by deactivating the vertical edges and updating the states of the released $D$-nodes to $q_{out}$. Now the network constructed over the nodes of set $D$ is free to move in the ``solution''.

It remains to resolve the following issue. In the beginning, we made the assumptions that the population has been partitioned into sets $U$ and $D$ and that a spanning line in $U$ has been constructed somehow. Though it is clear that the rule $(q_0,q_0,0)\rightarrow (q_u,q_d,1)$ can achieve the partitioning and that the \emph{Simple-Global-Line} protocol can construct a spanning line in $U$, it is not yet clear whether these processes can be safely composed to the simulating process. To get a feeling of the subtlety, consider the following situation. It may happen that a small subset $S$ of the nodes has been partitioned into sets $U^\prime$ and $D^\prime$ and that $U^\prime$ has been organized into a line spanning its nodes. If the nodes in $S$ do not communicate for a while to the rest of the network, then it is possible that a graph is constructed in $D^\prime$, which on one hand belongs to $L$ but on the other hand its order is much smaller than the desired $n/2$. To resolve this we introduce a reinitialization phase.\\

\vspace{-3pt}
\noindent\emph{Reinitialization.} A reinitialization phase is executed whenever a line on $U$-nodes expands (either by attracting free nodes or by merging with another line). At that point, the protocol ``makes the assumption'' that no further expansions will occur, restores the components of the simulation to their original values, ensures that each node in the updated set $U$ has a $D$-neighbor (as it is possible that some of them have released their neighbors), and initiates the drawing of a new random graph on the new set $D$. Though the assumption of the protocol may be wrong as long as further expansions of the line may occur, at some point the last expansion will occur and the assumption of the protocol will be correct. From that point on, the simulation will be reinitialized and executed for the last time on the correct sets $U$ and $D$. A final point that we should make clear is the following. During reinitialization we have two options: (i) block the line from further expansions until all components have been restored correctly and then unblock it again or (ii) leave it unblocked from the beginning. In the latter case, if another expansion occurs before completion of the previous reinitialization then another reinitialization will be triggered. However, if the two reinitialization processes ever meet then we can always kill one of them and restart a new single reinitialization process. Both options are correct and equivalent for our purposes. 
\qed
\end{proof}

We now show an interesting trade-off between the space of the simulated TM and the order of the constructed network. In particular, we prove that if the constructed network is required to occupy $1/3$ instead of half of the nodes, then the available space of the TM-constructor dramatically increases to $O(n^2)$ from $O(n)$.

\begin{theorem} [Linear Waste-Two Thirds]
$\rem{DGS}(l+O(\sqrt{l}))\subseteq\rem{PREL}(\lfloor n/3\rfloor)$. In words, for every graph language $L$ that is decidable by a $(l+O(\sqrt{l}))$-space TM, there is a protocol that constructs $L$ equiprobably with useful space $\lfloor n/3\rfloor$.
\end{theorem}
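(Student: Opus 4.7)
The plan is to extend the construction of Theorem~\ref{the:gencon-half} by introducing a third group of nodes that supplies the additional $\Theta(n^2)$ bits of work tape needed to support a $(l+O(\sqrt{l}))$-space TM instead of only an $O(\sqrt{l})$-space one. I would partition the population (up to at most two leftover nodes) into three equal sets $U$, $M$, and $D$ of size $\lfloor n/3\rfloor$ each. Using two variants of the rule $(q_0,q_0,0)\ra (q_u,q_d,1)$, the protocol grows an active perfect matching between $U$ and $D$ and a second active perfect matching between $U$ and $M$. On the nodes of $U$ it then runs (a minor adaptation of) Protocol~\emph{Simple-Global-Line} to build a spanning line of the $U$-nodes. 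Because every $U$-node carries one matched $M$-partner and one matched $D$-partner, the ordering on $U$ induces corresponding orderings on $M$ and on $D$.

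The TM is simulated on $U$ exactly as in Theorem~\ref{the:gencon-half}: the head walks along the $U$-line and the $O(1)$-state local memory of its $n/3$ line-nodes supplies $O(n)=O(\sqrt{l})$ bits of work tape. The input is, as before, the random graph $G_2$ drawn on the edges inside $D$, read by the head through the $U$-$D$ matching using the binary-counter trick. The crucial new ingredient is that we use every one of the $|M|\cdot|D|=(n/3)^2=l$ edges of the bipartite graph between $M$ and $D$ as an independent work-tape cell. To read or write the cell stored on the edge $M_iD_j$, the head runs the binary counter twice: first to walk to $U_i$ and put it into a ``mark-$M$'' state that propagates a marking signal along the $U$-$M$ matching to $M_i$, then to walk to $U_j$ and put it into a ``mark-$D$'' state that propagates a marking signal along the $U$-$D$ matching to $D_j$. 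By fairness the marked $M_i$ and $D_j$ eventually interact, and that interaction reads or rewrites the $M_iD_j$ edge; the result is propagated back through one of the matchings, all marks are cleared, and the head resumes. Together with the $O(n)$ bits available on the $U$-line, these cells supply exactly $l+O(\sqrt{l})$ bits of work tape, which is the space required to simulate any TM from $\rem{DGS}(l+O(\sqrt{l}))$.

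The main control loop is now identical in spirit to that of Theorem~\ref{the:gencon-half}: draw an equiprobable $G_2$ on the $D$-$D$ edges via fair coin tosses orchestrated from the $U$-line, simulate the TM deciding $L$ on $G_2$ using the augmented work tape, and on rejection reinitialize the $D$-$D$ edges, the $M$-$D$ edges, and the local $U$-memory to their initial values and redraw. On the first acceptance the head enters a release phase that deactivates both matchings and all $M$-$D$ edges and relabels every $D$-node to $q_{out}$, freezing the output to a graph of $L$ on exactly $\lfloor n/3\rfloor$ nodes. Each possible structural change in the background --- a new matching edge being added, the $U$-line merging with another component, or an $M$-node being adopted by a partial tripartition --- triggers a complete reinitialization of the simulation, exactly as in Theorem~\ref{the:gencon-half}.

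The main obstacle will be the composition issue already identified for the half-waste case, now compounded by the presence of a second matching and a second layer of edges used as work tape: during partial executions a small fragment of the population may locally self-organize into a tripartition, build a short $U$-line on it, simulate the TM on a tiny random subgraph that happens to lie in $L$, and begin releasing it before the rest of the population has been incorporated. The reinitialization mechanism addresses this by using each structural expansion as a trigger to wipe all simulation-related state (the $D$-$D$ edges, the $M$-$D$ edges, and the $U$-node memories) and restart; since the interaction graph is complete and the scheduler is fair, the sequence of structural expansions must be finite, after which the simulation runs exactly once on the fully formed tripartition and stabilizes to a uniformly random member of $L$ on $\lfloor n/3\rfloor$ nodes, yielding $\rem{DGS}(l+O(\sqrt{l}))\subseteq\rem{PREL}(\lfloor n/3\rfloor)$.
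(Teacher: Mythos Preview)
Your proposal is correct and follows essentially the same architecture as the paper: partition the population into three equal sets $U$, $D$, $M$, organize $U$ into a spanning line that drives the simulation, draw the random graph on the $D$--$D$ edges, and harvest $\Theta(n^2)=\Theta(l)$ extra work-tape cells from the edges incident to $M$, with the same reinitialization mechanism on each structural expansion. The only (harmless) difference is the choice of which edges supply the work tape: the paper uses the internal $M$--$M$ edges, whereas you use the bipartite $M$--$D$ edges; both provide $\Theta(l)$ cells and are addressed by the same binary-counter-and-mark mechanism, so the argument goes through unchanged.
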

\begin{proof}
The idea is to partition the population into three equal sets $U$, $D$, and $M$ instead of the two sets of Theorem \ref{the:gencon-half}. The purpose of sets $U$ and $D$ is more or less as in Theorem \ref{the:gencon-half}. The purpose of the additional set $M$ is to constitute a $\Theta(n^2)$ memory for the TM to be simulated. The goal is to exploit the $(n/3)(n/3-1)/2$ edges of set $M$ as the binary cells of the simulated TM (see Figure \ref{fig:gencon-one-third}). The set $U$ now, instead of executing the simulation on its own nodes, uses for that purpose the edges of set $M$. Reading and writing on the edges of set $M$ is performed in precisely the same way as reading/writing the edges of set $D$ (described in Theorem \ref{the:gencon-half}).

\begin{figure}[!hbtp]
\centering{
\includegraphics[width=0.75\textwidth]{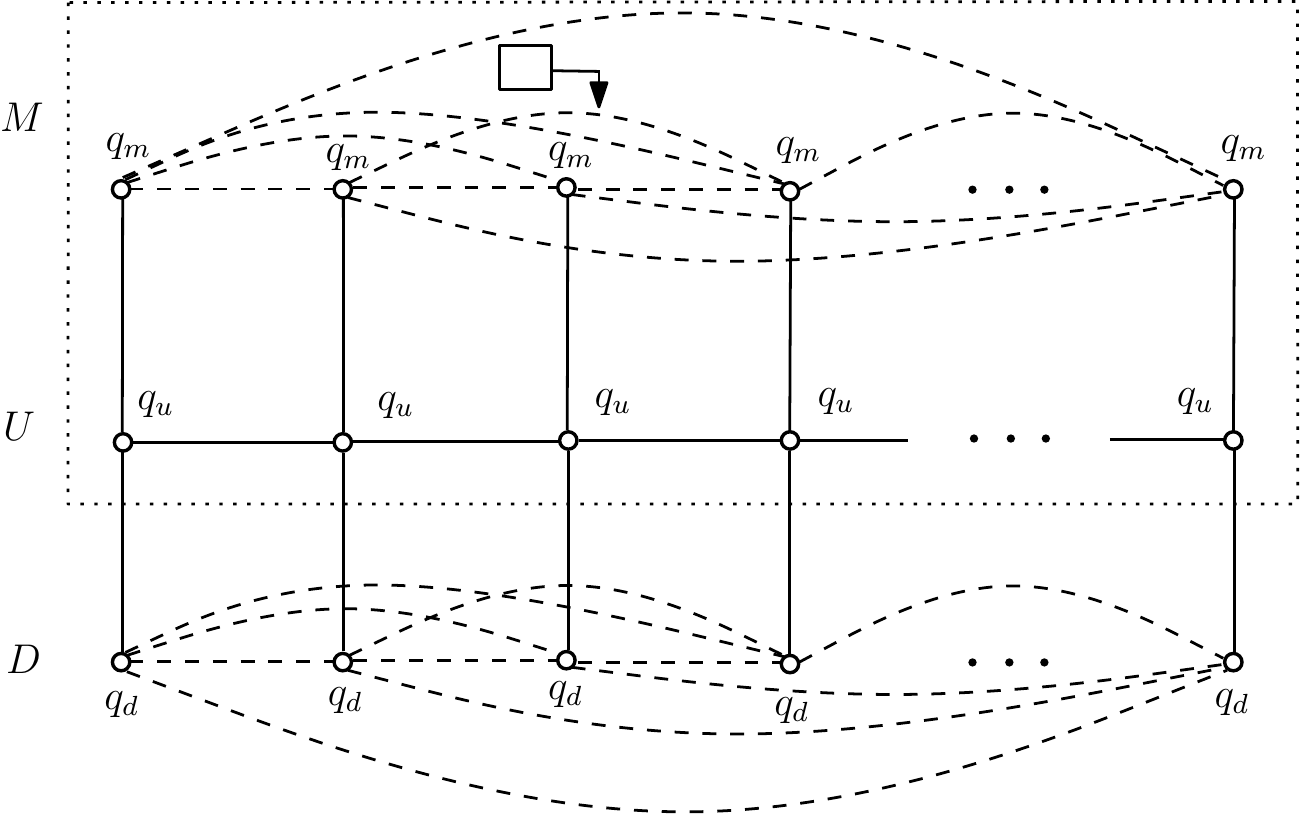}
}
\caption{A partitioning into three equal sets $U$, $D$, and $M$. The line of set $U$ plays the role of an ordering that will be exploited both by the random graph drawing process and by the TM-simulation. The line of set $U$ instead of using its $\Theta(n)$ memory as the memory of the TM it now uses the $\Theta(n^2)$ memory of set $M$ for this purpose. Set $D$ is again the useful space on which the output-network will be constructed. Sets $U$ and $M$ constitute the waste.} \label{fig:gencon-one-third}
\end{figure}

As everything works in precisely the same way as in Theorem \ref{the:gencon-half}, we only present the subroutine that constructs the $(U,D,M)$ partitioning.

\noindent\emph{Constructing the $(U,D,M)$ partitioning.} The rules that guarantee the desired partitioning into the three sets are:
\begin{align*}
(q_0,q_0,0)&\ra (q_u^\prime,q_d,1)\\
(q_u^\prime,q_0,0)&\ra (q_u,q_m,1)\\
(q_u^\prime,q_u^\prime,0)&\ra (q_u,q_m^\prime,1)\\
(q_m^\prime,q_d,1)&\ra (q_m,q_0,0)
\end{align*}
The idea is to consider a $U$-node as unsatisfied as long as it has not managed to obtain a $q_m$ neighbor. The unsatisfied state of a $U$-node is $q_u^\prime$. If a $q_u^\prime$ meets a $q_0$ then it makes that $q_0$ its $q_m$ neighbor and becomes satisfied. Note that it is possible that at some point the population may only consist of $q_u^\prime$ nodes matched to $D$-nodes which is not a desired outcome. For this reason, we have allowed $q_u^\prime$ nodes to be capable of making other $q_u^\prime$ nodes their $q_m$ neighbors. That is, when two $q_u^\prime$ nodes interact, one of them becomes satisfied, the other becomes $q_m^\prime$, and the edge joining them becomes active. A $q_m^\prime$ just waits to meet its active connection to a $D$-node, deactivates it, isolates the $D$-node by making it $q_0$ again, and becomes $q_m$. For an illustration, see Figure \ref{fig:gencon-three-sets}. Then, for the construction of the line spanning $U$, we only allow satisfied $U$-nodes to participate to the construction. As a satisfied $U$-node never becomes unsatisfied again, this choice is safe.

\begin{figure}[!hbtp]
\centering{
\includegraphics[width=0.85\textwidth]{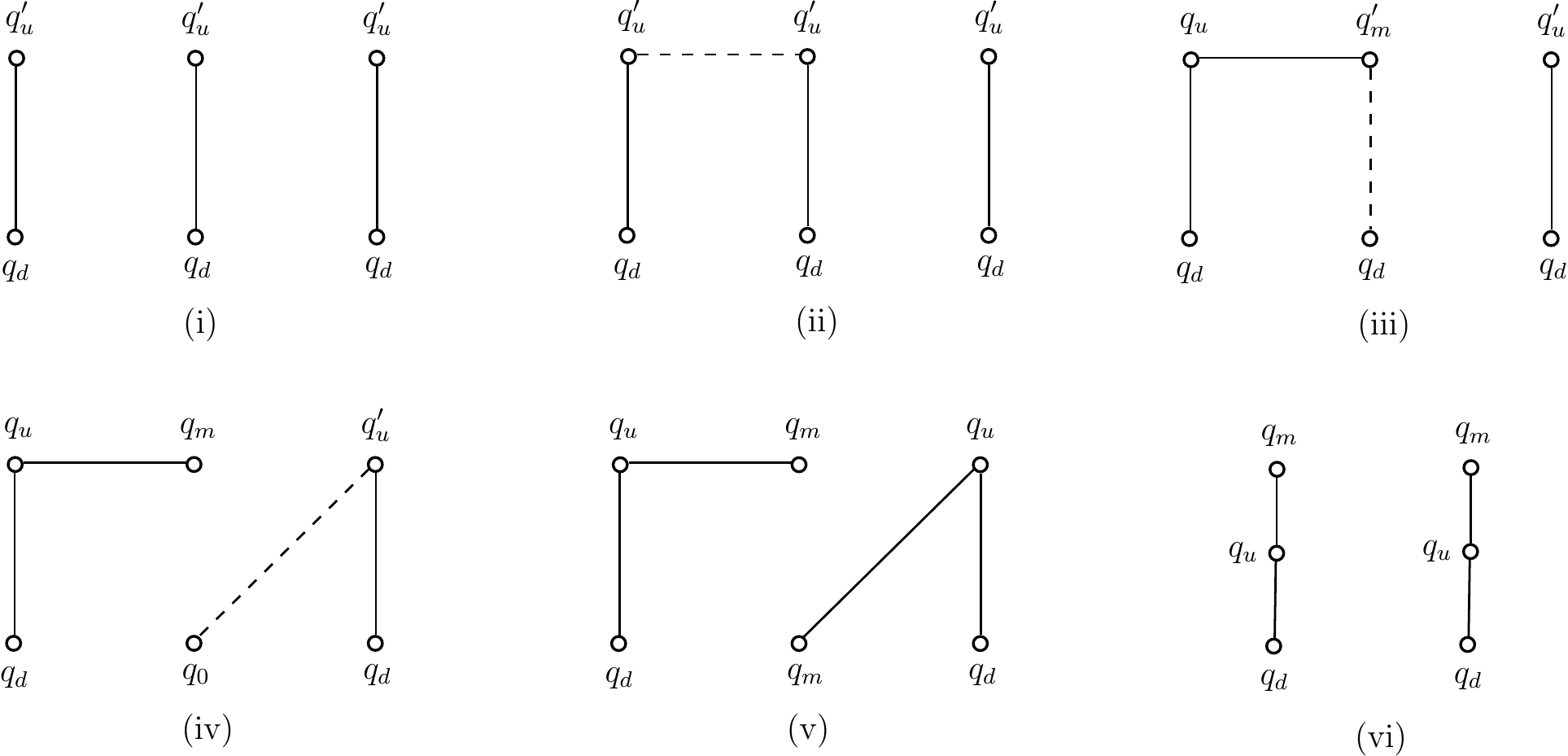}
}
\caption{An example construction of a $(U,D,M)$ partitioning.} \label{fig:gencon-three-sets}
\end{figure}
\qed
\end{proof}

\subsection{Logarithmic Waste}

We now relax our requirement for simulation space in order to reduce the waste (which, in both of the previous two theorems, was of the order of $n$).

\begin{theorem} [Logarithmic Waste] \label{the:logar}
$\rem{DGS}(O(\log l))\subseteq\rem{PREL}(n-\log n)$. In words, for every graph language $L$ that is decidable in logarithmic space, there is a protocol that constructs $L$ equiprobably with useful space $n-\log n$.
\end{theorem}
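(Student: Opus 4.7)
The plan is to adapt the generic constructor framework of Theorems \ref{the:gencon-half} and its Two-Thirds variant, but now using only $\log n$ waste nodes organized into a short line. Since the input adjacency matrix has size $l = \Theta((n-\log n)^2) = \Theta(n^2)$, a log-space TM uses only $O(\log l) = O(\log n)$ bits, and a line of $\log n$ waste nodes provides just enough cells to hold its working tape, head positions, and address counters.

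The first structural ingredient is setting up the partition. The protocol would grow a waste line while simultaneously using that very line as a binary counter; growth halts once the counter's capacity matches (via a coupon-collector-style scan over the remaining useful nodes) the size of the residual useful space, giving length $\Theta(\log n)$. As in Theorem \ref{the:gencon-half}, any premature halting of the line-growth subroutine is absorbed by a reinitialization primitive that resets the simulated TM state and the random edges on the useful space whenever the waste structure changes. This part is routine once the earlier reinitialization machinery is in place.

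The genuinely new difficulty is addressing individual useful nodes with only a $\log n$-bit pointer: the $n - \log n$ useful nodes have only $O(1)$ local state each, so they cannot store their own names, yet the TM must perform consistent random-access reads of a specific entry $A[i][j]$ of the adjacency matrix across many queries. I would resolve this by first constructing a spanning line on the useful nodes themselves (using, e.g., \emph{Fast-Global-Line}); its edges serve as persistent ``structural'' edges distinguishable from the random ``output'' edges. The TM then addresses the $i$-th useful node by dispatching a pointer signal that walks along the structural line while the waste decrements the stored value of $i$, so reading $A[i][j]$ reduces to marking two specific line-positions, waiting for their interaction, and recording the state of the non-structural edge between them. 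Random graph generation proceeds by coin-flipping at first encounter on each non-structural pair, yielding an equiprobable draw from $G_{n - \log n, 1/2}$ on the useful space.

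When the simulated TM accepts the current random graph, a release phase deactivates every structural edge (both the useful-space spanning line and the matching edges between waste and useful space) and relabels the useful nodes to $Q_{out}$, so that the surviving active subgraph on the output nodes is exactly the accepted random graph; on rejection, the non-structural edges are reset and a new random graph is drawn, repeating indefinitely, exactly as in the main loop of Figure \ref{fig:main-loop}. The main obstacle, beyond the reinitialization bookkeeping inherited from the previous theorems, is precisely this addressing step: routing identity information through persistent structural edges rather than through useful-node states is what allows the whole simulation to fit into a $\log n$ waste budget, and verifying that the structural line can coexist with, then be cleanly stripped from, the random output graph is the delicate part of the construction.
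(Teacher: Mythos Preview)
Your overall plan is in the right spirit but diverges from the paper in one essential place, and that divergence hides a real gap.

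\textbf{How the paper does it.} The paper does \emph{not} try to grow a short waste line directly. It first runs a spanning-line protocol on \emph{all} $n$ nodes, then walks that global line once while incrementing a binary counter written on the line itself; when the walk ends, the rightmost $\lceil\log n\rceil$ cells hold the value $n$. Those $\lceil\log n\rceil$ cells are kept as the memory line, and the remaining $n-\log n$ nodes are reset to isolated ``free'' nodes with no structural edges among them. Random-graph drawing and all subsequent input reads are then done by a marking-and-counting scheme driven entirely from the $\log n$ memory: the leader picks a free node, has it toss coins on its edges to the other free nodes while marking them, and uses the stored value $n-\log n$ to detect completion; it then unmarks, moves to the next free node, and decrements a second counter. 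Reinitialization is handled by observing that any premature ``spanning'' line also produces a short memory line with a leader, and two such memory lines eventually meet and merge.

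\textbf{The gap in your setup.} Your first structural step (``grow a waste line \ldots\ growth halts once the counter's capacity matches, via a coupon-collector-style scan, the size of the residual useful space'') is circular. To decide that a $k$-cell counter suffices you must count the non-waste nodes, but counting them means marking each one and recognizing when \emph{all} have been marked; that recognition itself needs a count of at least $\log n$ bits, which is precisely what you are trying to bootstrap. A coupon-collector heuristic gives no stopping rule here because the process has no way to certify that every node has been seen. The paper sidesteps this entirely by building the global line \emph{first} and reading $\log n$ off its length; you should do the same (and then, if you like, keep the remaining $n-\log n$ segment as your useful-space line rather than isolating it).

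\textbf{On your addressing idea.} Keeping a persistent line on the useful nodes, rather than isolating them as the paper does, is a legitimate and arguably cleaner alternative for implementing consistent random access to $A[i][j]$. But you need to say how the walking signal distinguishes a structural line edge from an active random-graph edge once the latter have been drawn: edges carry only one bit, and a ``move right'' step could otherwise follow a random-graph shortcut. This is fixable (for instance, have the signal maintain an explicit mark on its immediate left/right line-neighbours and move only to the marked node; also store the output bit for line-adjacent pairs in node state rather than on the edge), but as written this is a second place where the argument is incomplete.
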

\begin{proof}
We give the main idea. The protocol first constructs a spanning line. Let us for now assume that the spanning line has been somehow constructed by the protocol. Then the protocol exploits the line to count the number of nodes in the network. We may assume that counting is performed in the rightmost cells of the line. The head visits one after the other the nodes from left to right and for each next move it increments the binary counter by one. When the head reaches the right endpoint, counting stops and the binary counter will have occupied approximately $\log n$ nodes (in fact, the rightmost $\log n$ nodes). Now the protocol releases the counter without altering its line structure and additionally makes all remaining $n-\log n$ nodes isolated by resetting their states and deactivating the edges between them. From now on, we may assume w.l.o.g. that there is a line of $\log n$ nodes with a unique leader and with a distributed variable containing a very good estimate of the number of isolated nodes (for this, we just compute in the logarithmic memory $n-\log n$, where $n$ was already stored in binary and $\log n$ is the number of cells of the memory; another way to achieve this is to stop counting when the head - moving from left to right - reaches the first, i.e. leftmost, cell occupied by the counter). All nodes of the memory are in a special $m$ state while all remaining nodes are in some other state, e.g. $f$, so the two sets are distinguishable. Next the leader starts a random experiment in order to construct a random graph on the free nodes as follows. It picks the first free node that it sees, call it $u_1$, activates the edge between them and informs it to start tossing coins on each one of the edges joining it to other free nodes. Whenever $u_1$ tosses a coin on a new edge, it marks the corresponding node to avoid it in the future and informs the leader to decrement its ($n-\log n$)-counter by 1. When the counter becomes 0, $u_1$ has tossed coins on all its edges, by a similar counting process it removes all marks from the other free nodes, and remains marked so that the leader avoids picking it again in the future. Then the leader moves to some other free node $u_2$, repeating more or less the same process. At the same time the leader decrements another ($n-\log n$)-counter by one to know when all free $u_i$s have been picked. In this manner, a random graph is drawn equiprobably on the set of free nodes. Next, the leader simulates a logarithmic TM in its memory trying to decide whether the random graph belongs or not to a given language $L$. If yes, then we are done. If not, then the TM just repeats the random experiment and restarts the simulation.\\

\vspace{-3pt}
\noindent\emph{Reinitialization.} Clearly, the protocol cannot know when the line that it was initially trying to construct has become spanning. Due to this, after every expansion of the line it assumes that the line has become spanning and starts counting. It is clear that every counting process leads to the formation of a small line with a leader (of length logarithmic in the length of the original line) and several free nodes. The small line and its leader are kept forever by the simulation process. This implies that if there are more than one such lines, they will eventually interact and detect that their original line was not spanning. At that point, the interacting lines may merge to form a new line. It is clear that the only stable case is the one in which the original line was spanning and this will eventually occur. 
\qed
\end{proof}

\subsection{No Waste}

Going one step further we prove that a large class of graph-families can be constructed with no waste.

\begin{theorem} [No Waste] \label{the:no-waste}
Let $L$ be a graph language such that:  (i) there exists a natural number $d$ s.t. for all $G\in L$ there is a subgraph $G^\prime$ of $G$ of logarithmic order s.t. either $G^\prime$ or its complement is connected and has degree upper bounded by $d$ and (ii) $L$ is decidable in logarithmic space. Then $L\in\rem{PREL}(n)$, i.e. there is a protocol that constructs $L$ equiprobably with useful space $n$.
\end{theorem}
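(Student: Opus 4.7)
The plan is to extend the Logarithmic Waste construction of Theorem \ref{the:logar} so that the $\log n$ nodes temporarily hosting the TM ultimately become part of the output graph. Condition (i) is precisely what makes this possible: every $G\in L$ contains a subgraph $G'$ of logarithmic order such that $G'$ or its complement is connected and of maximum degree at most $d$, and such a $G'$ can serve as an ``in-graph'' TM scaffolding built out of $G$'s own vertices. So the overall strategy is to run the log-space decider on an initial \emph{bootstrap} TM, and, once the decider accepts the random graph that has been drawn, \emph{migrate} the simulation onto some $G'$ inside that graph before dissolving the bootstrap.

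First, I would reuse essentially the Logarithmic-Waste protocol to build a spanning line and allocate its $\log n$ leftmost (say) nodes as a bootstrap TM $T_0$ with a line backbone. Next, $T_0$ draws a random graph $G\in G_{n,1/2}$ on all $n$ nodes by coin tossing, with one key modification: for the $O(\log n)$ edges that are incident to two $T_0$-nodes, the coin outcomes are stored inside $T_0$'s logarithmic memory rather than written to the physical edge states, so that $T_0$'s line backbone remains intact during the subsequent simulation. Then $T_0$ simulates the $O(\log l)$-space decider for $L$ on input $G$, consulting the stored bits whenever the decider queries a $T_0$-internal edge. If the decider rejects, $T_0$ erases the drawn edges, flips a fresh set of coins and repeats; this outer loop is exactly the loop depicted in Figure \ref{fig:main-loop}.

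When the decider eventually accepts, $G\in L$, and so by condition (i) there is a log-order $G'\subseteq G$ such that either $G'$ or $\bar{G'}$ is connected and has degree at most $d$. $T_0$ then searches $G$ for such a $G'$: since $G'$ has only $O(\log n)$ vertices and bounded degree (or co-degree), enumerating candidate vertex-sets and verifying the structural property fits comfortably in logarithmic space. Once $G'$ is located, $T_0$ instantiates a new TM $T_1$ on the vertices of $G'$, using $G'$'s edges (respectively its non-edges, in the complement case) as the structural backbone of $T_1$. Control is then transferred from $T_0$ to $T_1$, which materializes the held-back coin outcomes onto the edges among the former $T_0$-nodes, deactivates any residual bootstrap-line edges that do not already belong to $G$, and freezes. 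At that point the active subgraph equals exactly $G$ and no rule is applicable, so the output stabilizes as required.

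The main obstacle is the migration step from $T_0$ to $T_1$: transferring head position, tape contents, and control state across disjoint substructures without corruption demands careful synchronization, and it must also compose safely with the reinitialization subroutine inherited from Theorem \ref{the:gencon-half}, which resets the whole simulation every time the spanning line expands so that the last (and decisive) simulation only begins once the spanning line is truly spanning. Uniformity of the output distribution over $L$ then follows from the uniformity of the coin-tossing phase combined with the symmetry of the acceptance test, exactly as in Theorems \ref{the:gencon-half} and \ref{the:logar}.
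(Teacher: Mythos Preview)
Your route is genuinely different from the paper's, and the difference matters. The paper never builds a bootstrap TM and then migrates; instead it \emph{commits upfront} to the set $S$ of $\log n$ nodes that will host the TM. While the long line on $T=V\setminus S$ still exists, it draws (using its ample $\Theta(n)$ memory) a random bounded-degree-or-bounded-codegree graph directly on $S$, organizes that very graph into the TM $M$, and only then dissolves the $T$-line. $M$ subsequently draws the random graph on $E_I\setminus E[S]$ and runs the decider; on rejection it rebuilds a line in $T$ and everything (including the random structure on $S$) is redrawn. Condition (i) guarantees that every $G\in L$ can arise this way, so no search for $G'$ is ever performed.

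Your proposal, as written, has a real gap at exactly the search step. You claim that ``enumerating candidate vertex-sets and verifying the structural property fits comfortably in logarithmic space,'' but a candidate vertex-set has $\Theta(\log n)$ elements, each requiring $\Theta(\log n)$ bits to name, so even writing one candidate down costs $\Theta(\log^2 n)$ bits. After you have drawn $G$, the only uncommitted writable memory left to $T_0$ is the $O(\log n)$ bits in its node states (every edge state is already part of $G$); the non-$T_0$ nodes cannot be systematically addressed because you dissolved the long line. So the enumeration does not fit, and no workaround is indicated. A smaller but related slip: there are $\Theta(\log^2 n)$, not $O(\log n)$, edges with both endpoints in $T_0$; presumably you meant to virtualize only the $\log n-1$ backbone edges, which is fine, but that still leaves the search-space problem untouched. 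The paper's ``draw $G'$ first, then use it as the TM'' idea is precisely what sidesteps this.
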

\begin{proof}
We give again the main idea. As in Theorem \ref{the:logar}, the protocol first constructs a spanning line used to separate a subpopulation $S$ of $V_I$ of size approximately $\log n$. Before deactivating the line of $T=V_I\bs S$ of length $n-\log n$ the protocol first exploits it to construct a random graph in $S$ of active or inactive degree (choosing randomly between these) upper bounded by $d$ (note that $d$ is finite and thus it is known in advance by the protocol). Then the line of $T$ organizes the bounded-degree graph of $S$ into a TM $M$ (which is feasible due to the fact that the degree is bounded; see Theorem 7 of \cite{AACFJP05}) of logarithmic space with a unique leader on some node. Next $M$ draws (more or less as in Theorem \ref{the:logar}) a random graph on the edges of $E_I\bs E[S]$, i.e. on all edges apart from those between the nodes of $S$ (to prevent destroying the structure of the TM). Note that, in order for the TM to be able to distinguish the nodes of $S$, the protocol has all these nodes in a special state that is not present in $T$. Observe now that, in this manner, the protocol has constructed on $V_I$ a random graph from those having a connected subgraph of logarithmic order and degree upper bounded by $d$. It remains to verify whether the one constructed indeed belongs to $L$. To do this, $M$ simulates the TM $N$ that decides $L$ in logarithmic space. If $N$ rejects then $M$ builds another line in $T$ that repeats the whole process, i.e. draws a new random graph in $S$ and so on. When $N$ first accepts, the protocol sleeps (in the sense that it does not terminate but does not alter edges anymore either).  
\qed
\end{proof}

\begin{remark}
If the graph-property $L$ (in any of the above results) happens to occur with probability at least $1/f(n)$, where $f(n)$ is polynomial on $n$, in the $G_{n,1/2}$ random graph model, then its corresponding generic constructor runs in polynomial expected time. Connectivity is such an example as every $G\in G_{n,\Theta(\log n/n)}$ is almost surely connected and the same holds for every $G\in G_{n,1/2}$ (hamiltonicity is another example).
\end{remark}

\begin{remark}
All the above generic results have been proved for $\rem{PREL}$. Note that in $\rem{REL}$ we can again construct a sufficiently long line (as our protocols for global line are in $\rem{REL}$) and exploit it as a space-bounded TM of the following sort: on input $g(n)$ (i.e. the useful space) the TM outputs a graph of order $g(n)$. By exploiting such graph-constructing TMs we can again construct a possibly large class of networks without giving to our protocols access to randomization.
\end{remark}

\subsection{Constructing and Simulating Supernodes with Logarithmic Memories}

We now show that a population consisting of $n$ nodes can be partitioned into $k$ \emph{supernodes} each consisting of $\log k$ nodes, for the largest such $k$. The internal structure of each supernode is a line, thus it can be operated as a TM of memory logarithmic in the total number of supernodes. This amount of storage is sufficient for the supernodes to obtain unique names and exploit their names and their internal storage to realize nontrivial constructions. We are interested in the networks that can be constructed at the supernode abstraction layer. The following theorem establishes that such a construction is feasible and presents a network constructor that achieves it.  

\begin{theorem} [Partitioning into Supernodes] \label{the:supernodes}
For every network $G$ that can be constructed by $k$ nodes having local memories $\lceil \log k\rceil$ and unique names there is a NET that constructs $G$ on $n=k\lceil\log k\rceil$ nodes.  
\end{theorem}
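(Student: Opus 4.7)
The plan is to organize the $n$ nodes into $k$ blocks of exactly $m := \lceil\log k\rceil$ nodes each, label each block with a distinct binary name in $\{0,\ldots,k-1\}$ stored inside the block, and then simulate the given $k$-node logarithmic-memory named protocol with each block acting as a supernode. The construction proceeds in three phases and throughout uses the reinitialization technique of Section~\ref{sec:gencon} to guarantee safe composition under an adversarial fair scheduler.

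\emph{Building the supernodes.} First I would run the \emph{Fast-Global-Line} protocol of Section~\ref{subsec:third-line} to obtain a spanning line on all $n$ nodes. A unique leader then walks the line and implements the counting-and-partitioning subroutine used in Theorems~\ref{the:gencon-half} and~\ref{the:logar}: using a $\lceil\log n\rceil$-bit counter stored in a prefix of the line it determines $n$, computes $k$ and $m$ (the largest $k$ with $km\le n$), and carves the line into $k$ consecutive blocks of length $m$. The $k-1$ edges crossing block boundaries are then deactivated so that each block is left as an isolated line of $m$ nodes, which we henceforth regard as a supernode whose internal line structure will be operated as a space-$m$ TM. Any further expansion of the underlying line triggers a full reinitialization of this subroutine, so the partitioning is only finalized once the underlying line is genuinely spanning.

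\emph{Naming and simulation.} For naming, temporarily keep (or reactivate) one bridge edge between each pair of consecutive blocks; the leftmost supernode writes $0$ into its internal memory, each named supernode passes its name across the bridge, and the receiving supernode increments it by one using its internal TM before disconnecting the bridge. After at most $k$ passes every supernode holds a distinct name in $\{0,\ldots,k-1\}$ and all inter-block edges are inactive. For the simulation I designate in each supernode a \emph{port} node through which all inter-supernode communication occurs; an abstract edge between $S_i$ and $S_j$ is represented by the physical edge between their two ports, while each supernode stores its current abstract state in its $\Theta(\log k)$-bit memory. When the scheduler fires a physical interaction between the ports of two supernodes and neither is busy, both lock, execute the abstract transition $\delta'(s_i,s_j,c)$ by a bounded sequence of internal TM steps and port-to-port handshakes, update their memories and the shared port edge accordingly, and unlock. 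Physical interactions that are not between two free ports are treated as no-ops. Any collision with a third supernode (detected by inspecting the state of the other port and by handshake counters stored in memory) triggers a rollback to the pre-interaction states, exactly as in the reinitialization pattern of Section~\ref{sec:gencon}.

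\emph{Main obstacle.} The delicate part is Phase~3: arranging locking, rollback, and no-op conventions so that (i) concurrent physical interactions cannot corrupt a supernode's memory or cause two inconsistent simulations of the same abstract step, (ii) fairness at the physical level translates to fairness at the supernode level, in the sense that every pair of supernodes eventually completes an uninterrupted interaction and hence every abstract reachable configuration is eventually reached, and (iii) the single port-to-port representation of abstract edges stays consistent with abstract edge updates across rollbacks and reinitializations. These requirements are not conceptually harder than what the generic constructors of Section~\ref{sec:gencon} already resolve, but they require explicit handshake counters and a careful composition with the Phase~1 reinitialization; the $\Theta(\log k)$ bits available inside each supernode are more than sufficient to store all bookkeeping variables needed.
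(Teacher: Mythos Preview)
Your route differs substantially from the paper's. The paper never builds a global spanning line and then chops it; instead a (provisional) leader grows the supernodes \emph{incrementally in phases}: at the start of phase $j$ the leader lengthens its own line by one, then lengthens every existing line by one, then creates $2^{j-1}$ brand-new lines of the current length, so that at the end of phase $j$ there are exactly $2^{j}$ lines of length $j$. Throughout, the leader's left endpoint remains permanently connected by an active edge to the left endpoint of every other line. Leader election is by pairwise elimination, and a deposed leader walks this star of connections to revert its entire component to isolated sleeping nodes. Your ``build one spanning line, count, carve'' approach is conceptually cleaner and reuses Theorems~\ref{the:gline3} and~\ref{the:logar} more directly; the paper's phased approach pays the price of retaining the star of leader-to-line edges forever (it hides them by declaring only right-endpoint edges to be output), but in exchange it keeps reversion trivial.

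That exchange is exactly where your argument has a gap. You explicitly deactivate the $k-1$ boundary edges after carving, and then again after naming you have each supernode ``disconnect the bridge'' to its predecessor. Once that happens, the $k$ supernodes of one provisional construction are mutually unreachable through active edges, and yet you still need a reinitialization mechanism: if the line you carved was not actually spanning, a second population elsewhere is simultaneously carving \emph{its} own supernodes, and supernodes from the two populations will start running the abstract protocol against one another with inconsistent names and memories. Your Phase~3 collision detection (same-name clashes, handshake counters) cannot reliably distinguish ``two supernodes from the same carving'' from ``two supernodes from different carvings'', and even when a conflict is detected, a leader that has already severed all bridges has no way to locate and dismantle the supernodes it created. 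The constructors of Section~\ref{sec:gencon} that you invoke never face this: they keep the $U$-line intact (or, in Theorem~\ref{the:logar}, keep exactly one logarithmic line per provisional construction), so a surviving leader can always walk its own structure to reset it. The fix is straightforward---keep the inter-supernode bridges (or an equivalent star from the leader's block) alive forever and exclude them from the output exactly as the paper does---but as written your reinitialization story does not go through.
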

\begin{proof}
We present a NET $\ca$ that when executed on $n$ nodes it is guaranteed to organize the nodes into $k$ lines of length $\lceil\log k\rceil$ each for the maximum $k$ for which $k\lceil\log k\rceil \leq n$. We assume a unique pre-elected leader in the initial configuration of the system and we will soon show how to drop this requirement. Assume also for simplicity that $n\geq 8$ (this is again not necessary). The protocol operates in phases. Variable $j$ denotes the current phase number, $r$ denotes the number of new lines that should be constructed in the current phase, and $a$ is a line counter. We assume that the leader has somehow already created 4 lines of length 2 each (note that here we count the length of a line in terms of its nodes). One of them is the leader's line. Also the left endpoint of the leader's line is directly connected  to the left endpoints of the other 3 lines. In fact, all these assumptions are trivial to achieve. Initially $j\leftarrow 2$. All variables are stored by the leader in the distributed memory of its line.
\begin{itemize}
 \item A new phase starts when the leader manages to increase by one the length of its line by attaching an isolated node its right endpoint. When this occurs, the leader sets $j\leftarrow j+1$, $r\leftarrow 2^{j-1}$, and $a\leftarrow 2$. A phase is divided into two subphases: the \emph{Increment existing lines} subphase and the \emph{Create new lines} subphase.
  \begin{itemize}
  \item \emph{Increment existing lines}: Initially, all existing lines, excluding the leader's line, are marked as \emph{unvisited}. While $a\leq r$ the leader visits an unvisited line and tries to increment its length by one by attaching an isolated node to its right endpoint. When it succeeds, it marks the line as \emph{visited}, sets $a\leftarrow a+1$ and returns to its own line. When this subphase ends all existing lines have length $j$. Then the leader sets $a\leftarrow 1$ and the \emph{Create new lines} subphase begins.
  \item \emph{Create new lines}: While $a\leq r$ the leader becomes connected to an isolated node, it marks that node as the left endpoint of the new line and then starts creating the new line node-by-node, by attaching isolated nodes to its right. It stops increasing the length of the new line when it becomes equal to the length of its own line. This can be easily implemented by a mark on the leader's line that moves one step to the right every time the length of the new line increases by one. The new line has the right length when the mark reaches the right endpoint of the leader's line. When this subphase ends there is a total of $2r=2^j$ lines of length $j$ each and the leader is directly connected to the left endpoint of each one of them. Then the leader waits again to increase its own length by one and when this occurs a new phase begins. 
  \end{itemize}
\end{itemize}

\emph{Naming.} We now show that it is not hard to keep the constructed lines named (in fact there are various strategies for achieving this). Initially, the leader has 4 lines of length 2 each and we may assume that these are uniquely named $0,1,2,3$ in binary, that is every line has its name stored in its own memory. During a phase, the leader keeps a variable \emph{cname} storing the current name to be assigned, initially 0. Whenever the leader increases the length of an existing line (during the \emph{increment} subphase) or creates a new line (during the \emph{create} subphase) it assigns to it \emph{cname} in binary and sets $cname\leftarrow cname+1$. Clearly, at the end of phase $j$ the lines are uniquely named $0,1,\ldots,2^{j}-1$.\\

\vspace{-3pt}
\emph{Electing the Leader.} We now show how to circumvent the problem of not having initially a unique pre-lected leader. In fact, as we soon discuss, the solution we develop may serve as a generic technique for simulating protocols that assume a pre-elected leader. Initially all nodes are leaders in state $l_0$. Rule $(l_0,l_0,0)\ra (l_0,q_0,0)$ eliminates one leader when two of them interact. These leaders start executing the above protocol by attaching $q_0$ nodes to their construction. By the time a $l_0$ leader attaches the first isolated node to its construction it remains leader but changes its state to $l$. Each leader executes the protocol on its own constructed component until it meets another leader. One this occurs, one of the two $l$s becomes $w$. The goal of a $w$ leader is to revert its whole component to a set of isolated nodes in a special sleeping state $s$ (itself inclusive). A sleeping node can like a $q_0$ node be attached to lines by leaders but unlike a $q_0$ node it cannot participate in the creation of another leader. First of all note that it can easily revert a single line by beginning from the right endpoint and releasing one after the other the nodes until it reaches the left endpoint. In fact, the generic idea (that works for other constructions as well) is that in order to release a node it suffices to know its degree. Then the only possible difficulty in our case is the fact that the left endpoint of the leader's line may be connected to a non-constant number of other endpoints. To resolve this, the leader exploits the fact that it can count in its line's memory the number of lines. When the reversion process begins, the leader knows the number of lines, that is it knows also the degree of the left endpoint of its line. Whenever it reverts another line it decreases the counter by one. So, when the counter becomes equal to 1, it knows that the only remaining line is its own line, thus it knows that when it comes to release the last two nodes of its own line (i.e. during the interaction between the left endpoint and the other remaining node of the line) it should make both sleeping as there is no other reversion to be performed. This is quite important as it guarantees that reverting does not introduce waste. Note that if the reversion process could not determine its completion then every such reversion would result in a node remaining forever in state $w$. Such zombie $w$s cannot be exploited by other leaders in their constructions as allowing a leader to attach a $w$ would introduce conflicts between constructing and reverting processes.\\

\vspace{-3pt}
\emph{Reinitialization.} Note that the simulated protocol that constructs $G$ assuming memories and names must be executed from the beginning because protocol $\ca$ that gives the organization into lines is not terminating thus the two protocols must be composed in parallel. It suffices to have every line remember the number of active edges that it has to other lines. Then, whenever a new phase begins (implying that what has been constructed so far by the simulated protocol is not valid), each line deactivates one after the other all those edges and starts over the simulation.\\

\vspace{-3pt}
The only drawback is that the above protocol retains forever the connections between the left endpoint of the leader's line and the left endpoints of the other lines. However, if we agree that the output-network of the protocol is the one induced by the active edges joining the right endpoints of lines then this is not an issue. Additionally, it should not be that hard to circumvent this subtlety by having the leader periodically release the constructed lines and reattracting them only in case it manages to increase the length of one of them.
\qed
\end{proof}

Many network construction problems are substantially simplified given the supernodes with names and memories. For a simple example, consider the problem of partitioning the nodes into triangles. This construction is quite hard to achieve in the original setting without a leader, however, given the supernodes it becomes trivial. Each supernode with id $i$ checks whether its id is a multiple of 3 and, if it is, it connects to id $(i + 2)$, otherwise it connects to id $(i - 1)$. This is a totally parallel and thus a very efficient solution.

Finally, the above approach introduces the idea of constructing disjoint stable structures and then looking at those structures from a higher level and considering them as units (supernodes). It is then challenging, interesting, and valuable to understand how these units behave, what is the dependence of their behavior to their internal structure and configuration, what is the outcome of an interaction between two such units, and what are their constructive capabilities. In fact, one can imagine a whole such hierarchy of layers were nodes self-assemble into supernodes, supernodes self-assemble into supersupernodes, and so on. Formalizing this hierarchy is a very promising and totaly open research direction.

\section{Conclusions and Further Research}
\label{sec:conclusions}

There are many open problems related to the findings of the present work. Though our universal constructors show that a large class of networks is in principle constructible, they do not imply neither the simplest nor the most efficient protocol for each single network in the class. To this end, we have provided direct constructors for some of the most basic networks, but there are still many other constructions to be investigated like grids or planar graphs. Moreover, a look at Table \ref{tab:ulb} makes it evident that there is even more work to be done towards the probabilistic analysis of protocols and in particular towards the establishment of tight bounds. Of special interest is the spanning line problem as it is a key component of universal construction. All of our attempts to give a protocol asymptotically faster than $O(n^3)$ have failed. Observe that with a preelected leader in state $l$ and all edges initially inactive, the straightforward protocol $(l,q_0,0)\ra (q_1,l,1)$ produces a stable spanning line in an expected number of $\Theta(n^2\log n)$ steps (follows from the meet everybody fundamental process). Moreover, by a one-to-one elimination we can elect a unique leader in an expected number of $\Theta(n^2)$ steps. If we could safely compose these two protocols, then we would obtain a $\Theta(n^2\log n)$ constructor which is almost optimal as our present best lower bound for the spanning line is $\Omega(n^2)$. The problem is that the protocol cannot detect when the leader-election phase has completed, thus it has to activate edges while still having more than one leaders but this gives an overhead for either merging the constructed disjoint lines or deactivating some wrong connections. A possible solution could be to consider Monte Carlo protocols that may err with some small probability, e.g. a protocol that would try somehow to estimate when w.h.p. the leader-election phase completes and only then start the line construction phase. 

One of the problems that we considered in this work, was the problem of constructing any $k$-regular network. Note that this is a quite different problem than the problem of constructing a specific $k$-regular network. For example, given a population of 10 processes is there a protocol that stabilizes to the Petersen graph? In general, it is worth considering non-uniform protocols that when executed on the correct number of nodes are required to construct a unique network like e.g. the cubical graph or the Wagner graph on 8 processes. 

Another very intriguing issue has to do with the size of network constructors. In particular, we would like to know whether there is some generic lower bound on the size of all constructors, to give problem-specific lower bounds, and to formalize the apparent relationship between the size and the running time of a protocol. Is there some sort of hierarchy showing that with more states we can produce faster protocols (until optimality is obtained)? To this end observe that neither the maximum degree nor the number of different degrees of the target-network are lower bounds on the number of states required to construct the network. For the former, it is not hard to show that $\Theta(x)$ states suffice to make a node obtain $2^x$ neighbors (stably). The idea is to have a node initially obtain 2 neighbors and then repeatedly double their number. For the latter, one can show that $\Theta(x)$ states suffice to have $2^x$ nodes with different degrees (stably) and in particular for all $i\in \{1,..., 2^x\}$ we obtain a node with degree $i$. The idea is to mark a set of $2^x$ nodes as before and construct a line spanning these nodes. Then the protocol assigns to the $i$th node of the line, counting e.g. from the left endpoint, $i$ neighbors. This can be done by using only a constant number of states. The head begins from the left endpoint and moves step-by-step on the line towards $u$. For every step it takes it assigns to $u$ a new neighbor and stops when it reaches $u$. In this manner, it assigns to $u$ a number of neighbors equal to its distance from the endpoint without having to explicitly count the distance. Is there some other property of the target-network that determines the number of states that have to be used?

All of our generic constructors, produce a random graph in the useful space and then simulate a space bounded TM in the waste to determine whether the random graph belongs to the language and therefore whether it should be accepted. An interesting open problem is to characterize the class $\rem{REL}$ in which protocols do not have access to (internal) randomness. In this case, it seems useless to simulate deciding TMs and the focus should be on \emph{graph-constructing TMs} which are much less understood. Such a characterization would also help us appreciate whether randomization increases the constructive power of the model. It is worth noting that our results on universal construction indicate that the constructive power increases as a function of the available waste. A complete characterization of this dependence would be of special value. 

There is also a practically unlimited set of variations of the proposed model that is worth considering. We mention a few of them. As already discussed, in this work we have considered a model of network construction with as minimal assumptions as possible to serve as a simple and clear starting point for more applied models to be defined. We now introduce such a model which seems to be of particular interest. Assume that every node is equipped with a predefined number of ports at specific positions of its ``body''. For example, in the 2-dimensional case these could be ``North'', ``South'', ``East'', ``West'' having the obvious angles between them. Nodes interact via their ports and they can detect which of their ports are used in an interaction. Moreover, when a connection is activated, it is always activated at a predetermined distance (i.e. all connections have the same length $d$) and it is always a straight line respecting the angles between itself and the (potentially active) lines of the other ports of the same node. Such a model (and possible variations of it, depending on the assumed hardware) seems particularly suitable for studying/designing very simple and local distributed protocols that are capable of constructing stable geometric objects (even in three dimensions), like squares, cubes, or more complex polyhedra, without any mobility-control mechanism. Moreover, an immediate extension of our model is to allow the connections to have more than just the two states that we considered in this work. Recall that, whenever we had to analyze the running time of a protocol, we did it under the uniform random scheduler, mainly because we wanted to keep this first model of network construction as simple as possible and because of its correspondence to a well-mixed solution. However, there are many other natural probabilistic scheduling models to be considered which would probably require different algorithmic developments and techniques to achieve efficiency. It is also natural to consider a variant in which connected nodes communicate much faster (even in synchronous rounds) than disconnected nodes. Moreover, it would be interesting to consider a model of network construction in which the behavior of a node depends on some input from the environment (this would allow the consideration of codes that exhibit different behaviors in different environments). Interesting and natural seems also the model in which a connected component has access to a \emph{self-bit} indicating whether a given interaction involves two nodes of the same component or not. It is not yet clear whether this extra assumption increases the constructive power of the model but it is clear that it substantially simplifies the description of several protocols. Also, of its own value would be to depart from cooperative models and consider an antagonistic scenario in which different sets of nodes try to construct different networks (by deterministic codes and not game-theoretic assumptions involving incentives). It would be interesting to discover cases in which the antagonism leads to unexpected stable formations. 

Finally, a very valuable and challenging interdisciplinary goal is to further investigate and formalize the apparent applicability of the model proposed here (and potential variations of it) in physical and chemical (possibly biological) processes. As already stated, we envision that a potential usefulness of such models is to unveil the algorithmic properties underlying the structure/network formation capabilities of natural processes.\\

\noindent \textbf{Acknowledgements.} We would like to thank Leslie Ann Goldberg for bringing to our attention the importance of constructing regular networks and also the reviewers of this work and some previous versions of it whose comments have helped us to improve our work substantially.

\bibliographystyle{alpha-abr}
\bibliography{podc14-full}

\newcommand{\etalchar}[1]{$^{#1}$}
\begin{thebibliography}{VMC{\etalchar{+}}12}

\bibitem[AAC{\etalchar{+}}05]{AACFJP05}
D.~Angluin, J.~Aspnes, M.~Chan, M.~J. Fischer, H.~Jiang, and R.~Peralta.
\newblock Stably computable properties of network graphs.
\newblock In {\em 1st IEEE International Conference on Distributed Computing in
  Sensor Systems (DCOSS)}, volume 3560 of {\em LNCS}, pages 63--74.
  Springer-Verlag, June 2005.

\bibitem[AAD{\etalchar{+}}06]{AADFP06}
D.~Angluin, J.~Aspnes, Z.~Diamadi, M.~J. Fischer, and R.~Peralta.
\newblock Computation in networks of passively mobile finite-state sensors.
\newblock {\em Distributed Computing}, pages 235--253, March 2006.

\bibitem[AAER07]{AAER07}
D.~Angluin, J.~Aspnes, D.~Eisenstat, and E.~Ruppert.
\newblock The computational power of population protocols.
\newblock {\em Distributed Computing}, 20[4]:279--304, November 2007.

\bibitem[Adl94]{Ad94}
L.~M. Adleman.
\newblock Molecular computation of solutions to combinatorial problems.
\newblock {\em Science}, 266[11]:1021--1024, November 1994.

\bibitem[Ang80]{An80}
D.~Angluin.
\newblock Local and global properties in networks of processors.
\newblock In {\em Proceedings of the 12th annual ACM symposium on Theory of
  computing (STOC)}, pages 82--93. ACM, 1980.

\bibitem[BA99]{BA99}
A.-L. Barab{\'a}si and R.~Albert.
\newblock Emergence of scaling in random networks.
\newblock {\em Science}, 286[5439]:509--512, 1999.

\bibitem[BBCK10]{BBCK10}
J.~Beauquier, J.~Burman, J.~Clement, and S.~Kutten.
\newblock On utilizing speed in networks of mobile agents.
\newblock In {\em Proceedings of the 29th ACM SIGACT-SIGOPS symposium on
  Principles of distributed computing (PODC)}, pages 305--314. ACM, 2010.

\bibitem[BEK{\etalchar{+}}13]{BEK13}
L.~Blume, D.~Easley, J.~Kleinberg, R.~Kleinberg, and {\'E}.~Tardos.
\newblock Network formation in the presence of contagious risk.
\newblock {\em ACM Transactions on Economics and Computation}, 1[2]:6, 2013.

\bibitem[Bol01]{Bo01}
B.~Bollob{\'a}s.
\newblock {\em Random graphs}, volume~73.
\newblock Cambridge university press, 2001.

\bibitem[BPS{\etalchar{+}}10]{BPSPF10}
A.~Bandyopadhyay, R.~Pati, S.~Sahu, F.~Peper, and D.~Fujita.
\newblock Massively parallel computing on an organic molecular layer.
\newblock {\em Nature Physics}, 6[5]:369--375, 2010.

\bibitem[CMN{\etalchar{+}}11]{MNPS11}
I.~Chatzigiannakis, O.~Michail, S.~Nikolaou, A.~Pavlogiannis, and P.~G.
  Spirakis.
\newblock Passively mobile communicating machines that use restricted space.
\newblock {\em Theoretical Computer Science}, 412[46]:6469--6483, October 2011.

\bibitem[DBC12]{DBC12}
S.~M. Douglas, I.~Bachelet, and G.~M. Church.
\newblock A logic-gated nanorobot for targeted transport of molecular payloads.
\newblock {\em Science}, 335[6070]:831--834, 2012.

\bibitem[DFSY10]{DFSY10}
S.~Das, P.~Flocchini, N.~Santoro, and M.~Yamashita.
\newblock On the computational power of oblivious robots: forming a series of
  geometric patterns.
\newblock In {\em Proceedings of the 29th ACM SIGACT-SIGOPS symposium on
  Principles of distributed computing (PODC)}, pages 267--276, 2010.

\bibitem[DGRS13]{DGRS13}
S.~Dolev, R.~Gmyr, A.~W. Richa, and C.~Scheideler.
\newblock Ameba-inspired self-organizing particle systems.
\newblock {\em arXiv preprint arXiv:1307.4259}, 2013.

\bibitem[Dot12]{Do12}
D.~Doty.
\newblock Theory of algorithmic self-assembly.
\newblock {\em Communications of the ACM}, 55:78--88, 2012.

\bibitem[Dot14]{Do14}
D.~Doty.
\newblock Timing in chemical reaction networks.
\newblock In {\em Proc. of the 25th Annual ACM-SIAM Symp. on Discrete
  Algorithms (SODA)}, pages 772--784, 2014.

\bibitem[ER59]{ER59}
P.~Erd{\H{o}}s and A.~R{\'e}nyi.
\newblock On random graphs.
\newblock {\em Publicationes Mathematicae Debrecen}, 6:290--297, 1959.

\bibitem[Fel68]{Fe68}
W.~Feller.
\newblock {\em An Introduction to Probability Theory and Its Applications, Vol.
  1, 3rd Edition, Revised Printing}.
\newblock Wiley, 1968.

\bibitem[GR09]{GR09}
R.~Guerraoui and E.~Ruppert.
\newblock Names trump malice: Tiny mobile agents can tolerate byzantine
  failures.
\newblock In {\em 36th International Colloquium on Automata, Languages and
  Programming (ICALP)}, volume 5556 of {\em LNCS}, pages 484--495.
  Springer-Verlag, 2009.

\bibitem[Jac05]{Ja05}
M.~O. Jackson.
\newblock A survey of network formation models: Stability and efficiency.
\newblock {\em Group Formation in Economics: Networks, Clubs and Coalitions,
  ed. G. Demange and M. Wooders}, pages 11--57, 2005.

\bibitem[Lyn96]{Ly96}
N.~A. Lynch.
\newblock {\em Distributed Algorithms}.
\newblock Morgan Kaufmann; 1st edition, 1996.

\bibitem[MCS11a]{MCS11-2}
O.~Michail, I.~Chatzigiannakis, and P.~G. Spirakis.
\newblock Mediated population protocols.
\newblock {\em Theoretical Computer Science}, 412[22]:2434--2450, May 2011.

\bibitem[MCS11b]{MCS11}
O.~Michail, I.~Chatzigiannakis, and P.~G. Spirakis.
\newblock {\em New Models for Population Protocols}.
\newblock N. A. Lynch (Ed), Synthesis Lectures on Distributed Computing Theory.
  Morgan \& Claypool, 2011.

\bibitem[MR95]{MR95}
R.~Motwani and P.~Raghavan.
\newblock {\em Randomized algorithms}.
\newblock Cambridge university press, 1995.

\bibitem[RW00]{RW00}
P.~W.~K. Rothemund and E.~Winfree.
\newblock The program-size complexity of self-assembled squares.
\newblock In {\em Proceedings of the 32nd annual ACM symposium on Theory of
  computing (STOC)}, pages 459--468, 2000.

\bibitem[Sch11]{Sc11}
J.~L. Schiff.
\newblock {\em Cellular automata: a discrete view of the world}, volume~45.
\newblock Wiley-Interscience, 2011.

\bibitem[SY99]{SY99}
I.~Suzuki and M.~Yamashita.
\newblock Distributed anonymous mobile robots: Formation of geometric patterns.
\newblock {\em SIAM J. Comput.}, 28[4]:1347--1363, March 1999.

\bibitem[VMC{\etalchar{+}}12]{VMC12}
N.~Vaidya, M.~L. Manapat, I.~A. Chen, R.~Xulvi-Brunet, E.~J. Hayden, and
  N.~Lehman.
\newblock Spontaneous network formation among cooperative {RNA} replicators.
\newblock {\em Nature}, 491[7422]:72--77, 2012.

\bibitem[Win98]{Wi98}
E.~Winfree.
\newblock {\em Algorithmic Self-Assembly of DNA}.
\newblock PhD thesis, California Institute of Technology, June 1998.

\bibitem[Zha03]{Zh03}
S.~Zhang.
\newblock Fabrication of novel biomaterials through molecular self-assembly.
\newblock {\em Nature biotechnology}, 21[10]:1171--1178, 2003.

\end{thebibliography}

\end{document}